\documentclass[a4paper,10pt]{article}
\usepackage{authblk}

\usepackage[ngerman,english]{babel}
\usepackage[latin1]{inputenc}

\usepackage{csquotes}
\usepackage{amsfonts,amsmath,amsthm}
\usepackage{graphicx}
\usepackage{booktabs}
\usepackage{empheq}
\usepackage[titletoc,title]{appendix}
\usepackage[backend=bibtex8,doi=false,eprint=false,firstinits=true,isbn=false,style=numeric-comp,maxnames=99]{biblatex}
\makeatletter
\def\blx@maxline{77}
\makeatother

\bibliography{mpm_bib.bib}
\AtEveryBibitem{\clearlist{language}}

\usepackage{cases}
\usepackage{mathabx}
\usepackage{bbm}
\usepackage{xfrac}

\usepackage{fancyhdr}
\usepackage{color}
\usepackage[colorinlistoftodos,textsize=small,backgroundcolor=white,bordercolor=magenta,linecolor=magenta]{todonotes}
\usepackage[colorlinks]{hyperref}
\definecolor{blue75}{rgb}{0,0,.75}
\definecolor{green75}{rgb}{0,.75,0}
\hypersetup{colorlinks=true, urlcolor=blue75,linkcolor=blue75,citecolor=green75,pdfstartview=FitB,bookmarksopen=true,bookmarksopenlevel=1}
\usepackage[a4paper, left=2.5cm, right=2.5cm, top=2.5cm,bottom=2cm]{geometry}
\usepackage{constants}
\newcommand{\parenthezises}[1]{\arabic{#1}}
\newconstantfamily{C}{
symbol=C,
format=\parenthezises,
}
\newconstantfamily{M}{
symbol=M,
format=\parenthezises,
}
\newconstantfamily{B}{
symbol=B,
format=\parenthezises,
}
\newconstantfamily{b}{
symbol=b,
format=\parenthezises,
}
\newconstantfamily{S}{
symbol=S,
format=\parenthezises,
}
\usepackage{enumerate}
\usepackage{graphicx}
\graphicspath{{images/} }
\usepackage{wrapfig}
\usepackage{figbib}
\allowdisplaybreaks
\usepackage[capitalise]{cleveref}

\crefdefaultlabelformat{{\it #2#1#3}}

\crefname{equation}{}{}

\crefname{enumi}{}{}
\creflabelformat{enumi}{{(#2#1#3)}}

\crefname{section}{{\it Section}}{{\it Sections}}
\crefname{subsection}{{\it Subsection}}{{\it Subsections}}
\crefname{subsubsection}{{\it Paragraph}}{{\it Paragraphs}}

\newtheorem{Theorem}{Theorem}[section]
\crefname{Theorem}{{\it Theorem}}{{\it Theorems}}

\crefname{Definition}{{\it Definition}}{{\it Definitions}}
\newtheorem{Lemma}[Theorem]{Lemma}
\crefname{Lemma}{{\it Lemma}}{{\it Lemmas}}

\crefname{Proposition}{{\it Proposition}}{{\it Propositions}}

\crefname{Assumption}{{\it Assumption}}{{\it Assumptions}}

\crefname{Assumptions}{{\it Assumptions}}{{\it Assumptions}}

\theoremstyle{definition}
\newtheorem{Remark}[Theorem]{Remark}
\crefname{Remark}{{\it Remark}}{{\it Remarks}}
\newtheorem{Notation}[Theorem]{Notation}
\crefname{Notation}{{\it Notation}}{{\it Notations}}
\newtheorem{Example}[Theorem]{Example}
\crefname{Example}{{\it Example}}{{\it Examples}}

\renewbibmacro{doi+eprint+url}{%
    \printfield{doi}%
    \newunit\newblock%
    \iftoggle{bbx:eprint}{%
        \usebibmacro{eprint}%
    }{}%
    \newunit\newblock%
    \iffieldundef{doi}{%
        \usebibmacro{eprint}}%
        {}%
    }
\usepackage{subcaption}
\begin{document}

\newcommand{\cb}{\color{blue}}
 \newcommand{\red}[1]{\textcolor{red}{#1}}
\newcommand{\cmg}[1]{\textcolor{magenta}{#1}}

\newcommand{\D}{\mathbb{D}}
\newcommand{\E}{\mathbb{E}}
\newcommand{\T}{\mathbb{T}}
\newcommand{\PP}{\mathbb{P}}
\newcommand{\R}{\mathbb{R}}
\newcommand{\N}{\mathbb{N}}
\newcommand{\F}{\mathbb{F}}
\newcommand{\V}{\mathbb{V}}
\newcommand{\ve}{\varepsilon}
\newcommand{\I}{\{c,m,n,w\}}
\newcommand{\numb}{4}
\def\diag{\operatorname{diag}}
\def\diam{\operatorname{diam}}
\def\dist{\operatorname{dist}}
\def\diver{\operatorname{div}}
\def\ess{\operatorname{ess}}
\def\inner{\operatorname{int}}
\def\osc{\operatorname{osc}}
\def\sign{\operatorname{sign}}
\def\supp{\operatorname{supp}}
\def\tr{\operatorname{trace}}
\newcommand{\BMO}{BMO(\Omega)}
\newcommand{\LOne}{L^{1}(\Omega)}
\newcommand{\LOnen}{(L^{1}(\Omega))^d}
\newcommand{\LTwo}{L^{2}(\Omega)}
\newcommand{\Lq}{L^{q}(\Omega)}
\newcommand{\Lp}{L^{2}(\Omega)}
\newcommand{\Lpn}{(L^{2}(\Omega))^d}
\newcommand{\LInf}{L^{\infty}(\Omega)}
\newcommand{\HOneO}{H^{1,0}(\Omega)}
\newcommand{\HTwoO}{H^{2,0}(\Omega)}
\newcommand{\HOne}{H^{1}(\Omega)}
\newcommand{\HTwo}{H^{2}(\Omega)}
\newcommand{\HmOne}{H^{-1}(\Omega)}
\newcommand{\HmTwo}{H^{-2}(\Omega)}

\newcommand{\LlogL}{L\log L(\Omega)}

\def\avint{\mathop{\,\rlap{-}\!\!\int}\nolimits} 

\newcommand{\om}{\omega }
\newcommand{\Om}{\Omega }

\newtheorem{proofpart}{Step}
\makeatletter
\@addtoreset{proofpart}{Theorem}
\makeatother
\numberwithin{equation}{section}
\title{Multiphase modelling of glioma pseudopalisading under acidosis}
	
\author{Pawan Kumar\thanks{Felix-Klein-Zentrum für Mathematik, Technische Universität Kaiserslautern, Paul-Ehrlich-Str. 31, 67663 Kaiserslautern, Germany, \href{mailto:surulescu@mathematik.uni-kl.de}{kumar@mathematik.uni-kl.de}},\ 
    Christina Surulescu\thanks{Felix-Klein-Zentrum für Mathematik, Technische Universität Kaiserslautern, Paul-Ehrlich-Str. 31, 67663 Kaiserslautern, Germany, \href{mailto:surulescu@mathematik.uni-kl.de}{surulescu@mathematik.uni-kl.de}}, \ and Anna Zhigun\thanks{School of Mathematics and Physics, Queen's University Belfast, University Road, Belfast BT7 1NN, Northern Ireland, UK, \href{mailto:A.Zhigun@qub.ac.uk}{A.Zhigun@qub.ac.uk}}
   }

\date{}
\maketitle
\begin{abstract}
We propose a multiphase modeling approach to describe glioma pseudopalisade patterning under the influence of acidosis. The phases considered at the model onset are glioma, normal tissue, necrotic matter, and interstitial fluid in a void-free volume with acidity represented by proton concentration. We start from mass and momentum balance to characterize the respective volume fractions and deduce reaction-cross diffusion equations for the space-time evolution of glioma, normal tissue, and necrosis. These are supplemented with a reaction-diffusion equation for the acidity dynamics and lead to formation of patterns which are typical for high grade gliomas. Unlike previous works, our deduction also works in higher dimensions and involves less restrictions. We also investigate the existence of weak solutions to the obtained system of equations and perform numerical simulations to illustrate the solution behavior and the pattern occurrence. 
\\\\
{\bf Keywords}: glioma pseudopalisade patterns, multiphase model, reaction-cross diffusion equations
\\
MSC 2020: 92C15, 92C50, 35Q92, 35K55, 35K20. 

\end{abstract}
\section{Introduction}

Glioblastoma is the most common type of primary brain tumors in adults, with a dismal prognosis. The histological features include characteristic patterns called pseudopalisades, which exhibit garland-like structures made of aggregates of glioma cells stacked in rows at the periphery of regions with low pH and high necrosis surrounding the occlusion site(s) of one or several capillaries \cite{Wippold2037}. Such patterns are used to grade the tumor and are essential for diagnosis \cite{Brat2003,Kleihues1995}. 

The few continuous mathematical models proposed for the description of pseudopalisade patterns  \cite{Alfonso,kumar2020flux,kumar2020multiscale,martinez2012hypoxic}, involve systems of ODEs and PDEs which were set up in a heuristic manner or obtained from lower scale dynamics. They account for various aspects like phenotypic switch between proliferating and migrating glioma as a consequence of vasoocclusion and nutrient suppression associated therewith \cite{Alfonso}, interplay between normoxic/hypoxic glioma, necrotic matter, and oxygen supply \cite{martinez2012hypoxic}, or tissue anisotropy and repellent pH-taxis, with \cite{kumar2020flux} or without \cite{kumar2020multiscale} vascularisation. The latter two works deduced effective equations for glioma dynamics from models set on the lower, microscopic and mesoscopic scales in the kinetic theory of active particles (KTAP) framework explained e.g., in \cite{bellomo2008}. Those deductions are in line with previous works concerning glioma invasion in anisotropic tissue \cite{CONTE2021,CEKNSSW,Dietrich2020,engwer2015glioma,engwer2015effective,EKS,Hunt2016,PH13}, which use parabolic scaling to obtain the equations for glioma evolution on the macroscale from descriptions of subcellular and mesoscopic dynamics. Here we propose yet another approach, relying on the interpretation of the relevant components (glioma, normal tissue, necrotic matter, and acidity) as phases in a mixture - with the exception of acidity, which is characterised by proton concentration in the volume occupied by the phases. 

Multiphase models in the framework of mixture theory \cite{Atkin1976, Drew1998} have been considered in connection with cancer growth and invasion, (arguably) starting with \cite{Breward02,BYRNE2003} and followed by many others, involving different mathematical and biological aspects; see e.g.  \cite{Garcke2018,Jackson2002307,Preziosi2008, Preziosi2011,Scium2013,TOSIN2010969,WALDELAND2018268,Wise2008} and references therein. Particularly \cite{Scium2013} provides a comprehensive discussion of the classes of multiphase models, along with their strengths and drawbacks.  Such models employ a population level description with conservation laws analogous to balance equations for single cells, with supplementary terms accounting for interphase effects. Thereby, the living cells and tissues are most often seen as viscous fluids, while the interstitial fluid is inviscid. Few of the existing models (e.g., \cite{Garcke2018,Scium2013,Wise2008}) are explicitly handling necrosis, which is, however, an important component of advanced tumors. Likewise, there are relatively few models accounting for chemoattracting or chemorepellent agents (e.g., \cite{Garcke2018,WALDELAND2018268}), which are known to bias in a decisive way the expansion of the neoplasm and therewith associated dynamics of tumor cells and surrounding tissue. To our knowledge there are no multiphase models for glioma pseudopalisade development. In this context we propose such a model comprising necrotic matter as one of the phases in the mixture (although there are a few  issues related to this approach, as mentioned, e.g. in \cite{Scium2013}). As our main aim is to describe glioma behavior in an acidic environment fastly leading to extensive necrosis, we explicitly include these influences in our model. From the mass and momentum balance equations written for the phases composing the neoplasm and supplemented with appropriate constitutive relations, we then deduce, under some simplifying assumptions, a system of reaction-cross diffusion equations with repellent pH-taxis for the glioma cells and normal and necrotic tissues, also including a solenoidality constraint on the total flux. Our deduction has some similarities with that in \cite{Jackson2002307}, however is done in N dimensions instead of 1D, it accounts explicitly for the evolution of necrotic matter and the effects of acidity, and does not require the drag coefficients between the phases to be equal. The obtained equations are able to reproduce qualitatively the typical pseudopalisade patterns with all their aspects related to the glioma aggregates, acidity, necrotic inner region, normal tissue dynamics.

The rest of this paper is organized as follows: Section \ref{Prelim} contains some notations and conventions. Section \ref{Setup} contains the model setup with the considered phases (glioma cells, normal tissue, necrotic matter, and interstitial fluid) and the corresponding mass and momentum balance equations, along with their associated constitutive relations. That description is used to deduce the announced reaction-cross diffusion equations. Thereby, we investigate a model with an immovable component and observe that enforcing one of the phases to be fixed prevents the simultaneous validity of all basic conservation laws needed in the setting. Then we neglect the interstitial fluid, thus reducing the number of phases, and obtain the PDEs with reaction, nonlinear diffusion and taxis terms, while still ensuring all balance equations. The total flux needs to be given, satisfying a solenoidality constraint. Section \ref{Existence} is dedicated to the existence of weak solutions to the obtained cross diffusion system coupled with the reaction-diffusion equation for acidity dynamics in the volume of interest. Finally, in Section \ref{SecNum} we provide numerical simulations for that system, studying several parameter scenarios in order to put in evidence the effect of acidity and of different drag coefficients on the obtained patterns exhibiting pseudopalisade formation.

\section{Preliminaries}\label{Prelim}
We will use the following notation throughout this paper:

\begin{Notation}\label{NotVec} \begin{enumerate}
\item By vector we always mean a column vector.
\item We denote by $e$ the vector of length $\numb$ and all components equal to one. As usual, $I_l$ stands for the identity matrix of size $l$.
\item For two vectors $w^{(1)}$ and $w^{(2)}$ of the same length we denote by $w^{(1)}.w^{(2)}$ the vector with elements $w^{(1)}_iw^{(2)}_i$. Similarly, for a vector $w$ with nonzero elements we write $w.^{-1}$ meaning the vector with coordinates $w_i^{-1}$.
\item For a vector $w$ we denote by $\diag(w)$ the diagonal matrix with $(\diag(w))_{ii}=w_i$.
\item As usual, $\cdot$ denotes the scalar product.
\item As usual, $\nabla$ refers to the gradient with respect to the spatial variable $x$.
\end{enumerate}

\end{Notation}
\begin{Notation}
 Throughout the paper we often skip the arguments of coefficient functions.
\end{Notation}
\section{Modelling}

\subsection{Model setup}\label{Setup}
\paragraph{Model variables}
Motivated by the multiphase approach developed in \cite{Jackson2002307} we view the tumour and its environment as a saturated mixture of several components. We assume these components to be:  tumour cells, normal tissue (mainly the extracellular matrix, but also normal cells), necrotic tissue (was not included in \cite{Jackson2002307}), and interstitial fluid. The latter, in turn, has several constituents, among which are protons. Depending on their concentration, the environment can be more or less acidic. pH levels drop due to enhanced glycolytic activity of neoplastic cells.    
We assume that neither cells nor living tissue are produced due to an 
 already very  acidic, and hence very  unfavourable,  environment. Thus we assume that the total volume of the mixture is preserved, the phases only transferring  from
one  into another.
The main variables in our  models, all depending on time $t\geq0$ and position in space $x\in\Omega\subset \R^N$, $\Omega$ a domain, are:
\begin{itemize}
\item vector $u:[0,\infty)\times\Omega\rightarrow[0,1]^{\numb}$ of volume fractions of \begin{itemize}
       \item  tumour cells, $u_c$, 
\item   normal tissue, $u_m$,
\item   necrotic tissue, $u_n$,
\item   interstitial fluid, $u_w$.
      \end{itemize}
\item  acidity (concentration of protons), $h:[0,\infty)\times\Omega\rightarrow[0,\infty)$.
\end{itemize}
Unlike \cite{Jackson2002307} we do not require the space dimension, $N\in\N$, to be one. 
 The main goal of this Section is to derive  equations for the  volume fractions based primarily on the conservation laws for mass and momentum as well as  additional assumptions on some of the phases. 
To write down the  physical laws, we introduce additional variables that are subsequently eliminated. These are:
\begin{itemize}
 \item  fluxes of the components, $J_i:[0,\infty)\times\Omega\rightarrow\R^{N}$, $i\in \I$,
 \item common pressure, $p:[0,\infty)\times\Omega\rightarrow\R$.
\end{itemize}
\paragraph{Model parameters} 
The equations we develop below involve the following set  of parameters: 
\begin{itemize}
\item matrix of drag coefficients associated with each pair of  components, $K\in     \R^{\numb\times\numb}$;
 \item additional, isotropic pressures by the components, $\tau_i:[0,1]^{\numb}\times\R\rightarrow\R$, $i\in \I$;
 \item reaction terms, $f_i: [0,1]^{\numb}\times\R\rightarrow\R$, $i\in \{c,m,n,w,h\}$;
 \item diffusion coefficient of the protons, $D_h>0$.
\end{itemize}
\paragraph{Assumptions on $K$.} We assume throughout that 
\begin{align*}
 &K_{ij}>0\quad\text{and}\quad K_{ij}=K_{ji}\quad\text{for }\quad i,j\in\I,\ i\neq j.
\end{align*}
\paragraph{Assumptions on $f_i$'s.}
In order to ensure that the sum of all volume fractions  in the mixture is always one, we require
 \begin{align}
 \underset{i\in\I}{\sum}f_{ i}\equiv0.\label{sumf}
\end{align}
Since  $u_i$'s and $u_h$ should be nonnegative, we require further that 
\begin{align*}
 &f_i\geq 0\qquad \text{for }u_i=0,\ i\in\I,\\
 &f_h\geq0\qquad\text{for }h=0.
\end{align*}

\noindent
Possible choices for the reaction terms are:
\begin{Example} $f_c(u_c,u_n,h)=-c_1u_cu_n(h-h_{max})$, $f_m(u_m,h)=-\frac{c_2u_m(h-h_{max})_+}{1+u_m+h}-c_3u_m$, $f_n=-f_c-f_m$, $f_w=0$. 
		This accounts for the fact that the amount of both tissue and viable cancer cells (the latter being in interaction with the necrotic matter embedded in the acidic environment) decreases when the proton concentration $h$ exceeds a certain maximum threshold, leading to acidosis and hypoxia. The tissue infers degradation due to causes other than direct influence of acidity or tumor cells, and, on the other hand, it experiences a certain amount of self-regeneration, which is limited by the already available tissue and low pH. Thereby, $c_1,c_2,c_3>0$ are constant rates; they could, however, also include further dependencies. For the reaction term in the acidity equation we choose e.g., $f_h=-a_1u_wh/(1 + u_wh/h_{max}) + a_2u_c - a_3h$, with $a_i\ge 0$ ($i=1,2,3$) constants. This choice ensures proton uptake by the interstitial fluid (with saturation), production by glioma cells, and decay with rate $a_3$.
 \end{Example}
\paragraph{Assumptions on $\tau_i$'s.} 
We assume that the non-living components (necrotic matter and extracellular fluid) acquire a tendency to reach equilibrium, so that only living matter exerts additional pressure, thus
\begin{align}
 \tau_n\equiv\tau_w\equiv0.\label{Atau}
\end{align}
The mechanical properties of living matter (tumour cells and normal tissue) are different; among others, they are able to generate both intra- and interphase forces in response to changes in the local environment. The influences include variations in the volume fraction of the cell phase and the presence of chemical cues, of which we account for the local acidity (via proton concentration). The corresponding forces manifest themselves as an additional intraphase pressure. It is assumed that the pressure in the tumour and normal tissue phases increases with their respective densities; moreover, glioma cells exert supplementary (isotropic) pressure on the normal tissue. Taking these effects into account, we choose for the additional pressure terms
\begin{align}
 &\tau_c(u_c,h):=\alpha_cu_c+g(h),\qquad g(h):=\frac{\chi { h}}{1+\frac{h}{h_{max}}},\label{tau3}\\
 &\tau_m(u_c,u_m):=\alpha_m(1+\theta u_c)u_m,\label{tau4}
\end{align}
where  $\alpha_c,\alpha_m,\theta,h_{max},\chi>0$ are constants. In \cref{tau3} we combine   the influence of local cell mass and acidity, both adding to cancer cell pressure which, in turn, enhances glioma motility. 
Cell stress increases with growing cell mass, pushing the cells away from overcrowded regions. Stress due to acidity 
saturates for large acidity levels. Indeed, in highly acidic regions cell ion channels and pH-sensing receptors are mostly occupied, thus making cells insensitive to the presence of protons in their environment.   
As in \cite{Dietrich2020,kolbe_etal21,kumar2020multiscale,kumar2020flux}, we call this a repellent pH-tactic behaviour.
Similarly, the choice in \eqref{tau4} means that there is some intraspecific tissue stress (compression) further accentuated by the interaction between glioma cells and their fibrous environment. The forms proposed in \eqref{tau3}, \eqref{tau4} are reminiscent of those chosen in \cite{Jackson2002307}.  

\paragraph{Main equations} To derive our models we will rely on a set of equations describing  physical laws. They are as follows.
\begin{itemize}
 \item Since the the components of $u$ are volume fractions, we have 
\begin{align}
\underset{i\in\I}{\sum}u_i=1.\label{novoid} 
\end{align}
This is the so-called  'no void' condition. 
   \item Mass conservation 
for $i$th component is given by 
  \begin{align}
    \partial_t u_i=-\nabla\cdot J_i+f_i.\label{CL}                                                                                                                          \end{align}
    It would be reasonable to assume that this equation should hold for all phases. However, this may come into conflict with additional assumptions, see subsequent \cref{SecHapto}.
  \item Conservation of momentum  for $i$th component  is given by 
  \begin{align}
   -\nabla\cdot(u_i\sigma_i)=F_i,\label{Stl1}
  \end{align}
     where 
     \begin{align}
&\sigma_i=-(p+\tau_i)I_N\label{sigm}
\end{align} is the stress tensor involving the common pressure $p$ and the additional pressure $\tau_i$. It is assumed that the response of the tumour to stress is elastic and isotropic, which implies that the deformation induced by the applied force on each of the considered cellular and tissue components is limited. An unlimited deformation would correspond to a viscoelastic material \cite{Jones2000}, which is not considered here. The forms of the stress tensors depend on the material properties of each phase and their response to mechanical and chemical cues in the environment. As in \cite{Jackson2002307,lemon2007multiphase} the interstitial fluid is considered to be inert and isotropic and viscous effects within each phase are neglected.

The right hand side in \eqref{Stl1} represents the force acting on the $i$th phase and, neglecting any inertial effects and exterior body forces, it takes the following form: 
\begin{align}
 F_i=p\nabla u_i+\sum_{j\in\I\backslash\{i\}}K_{ij}(u_iJ_j-u_jJ_i).\label{Force}
\end{align}
The first term on the right hand side in \eqref{Force} accounts as usual (see, e.g.  \cite{Breward02,Jackson2002307,lemon2007multiphase}) for the pressure distribution at the interface between phases. The remaining term represents viscous drag between the phases, with drag coefficients $K_{ij}$.

\noindent
Plugging \cref{sigm} and \cref{Force} into \cref{Stl1} yields an equation for the variables we introduced above:
 \begin{align}
   \nabla\cdot(u_i(p+\tau_i)I_N)=p\nabla u_i+\sum_{j\in\I\backslash\{i\}}K_{ij}(u_iJ_j-u_jJ_i).\label{Stl}
  \end{align} 
  Equation \cref{Stl} is assumed to hold for all four phases. 
  \item The proton concentration satisfies the reaction-diffusion equation
  \begin{align}
   \partial_th=D_h\Delta h+f_h.\label{EqProt}
  \end{align}

\end{itemize}
In the remainder of this section we derive two models based on these laws. To begin with, we simplify equation \cref{Stl}. Given that $p$ and $\tau_i$ are scalar functions, we have for all $i\in\I$
\begin{align}
 \sum_{j\in\I\backslash\{i\}}K_{ij}(u_iJ_j-u_jJ_i)
 =&\nabla\cdot(u_i(p+\tau_i)I_N)-p\nabla u_i\nonumber\\=&\nabla(u_i(p+\tau_i))-p\nabla u_i\nonumber\\
 =&\nabla(u_i\tau_i)+u_i\nabla p.\label{MomCL1}
\end{align}
Since $K$ is symmetric, adding together the left-hand sides of \cref{MomCL1} for all $i\in\I$, we find that
\begin{align}
 \sum_{i\in\I}\sum_{j\in\I\backslash\{i\}}K_{ij}(u_iJ_j-u_jJ_i)=0.\label{sumKuij}
\end{align}
Combining \cref{MomCL1} with \cref{novoid} and \cref{sumKuij}, we obtain
\begin{align}
 0=&\sum_{i\in\I}\nabla(u_i\tau_i)-u_i\nabla p\nonumber\\
 =&\nabla(u\cdot\tau)+\nabla p,\nonumber
\end{align}
so that
\begin{align}
 \nabla p=-\nabla(u\cdot\tau).\label{px}
\end{align}
Plugging \cref{px} into \cref{MomCL1} we arrive at a system which no longer involves $p$:
\begin{align}
 \sum_{j\in\I\backslash\{i\}}K_{ij}(u_iJ_j-u_jJ_i)=-u_i\nabla(u\cdot\tau)+\nabla(u_i\tau_i),\qquad i\in\I,\label{Syst}
\end{align}
and it holds that
\begin{align}
 \underset{i\in\I}{\sum}\left(-u_i\nabla (u\cdot\tau)+\nabla (u_i\tau_i)\right)=0.\label{gsum}
\end{align}
Next, we introduce a matrix function
\begin{align}
 A:[0,1]^{\numb}\rightarrow\R^{\numb\times\numb},\qquad A_{im}:=\begin{cases}
          -\sum_{j\in\I\backslash\{i\}}K_{ij}u_j,&i=m,\\
          K_{im}u_i,&i\neq m,
         \end{cases}\qquad i,m\in\I.
\end{align}
The symmetry of $K$ once again implies that 
\begin{align}
 \underset{i\in\I}{\sum}A_{im}\equiv0,\qquad m\in\I.\label{Sumr}
\end{align}
System \cref{Syst} can now be written in the form
\begin{align}
 &AJ^{(l)}=-u\partial_{x_l} (u\cdot\tau)+\partial_{x_l} (\tau.u),\qquad l\in\{1,\dots,N\},\label{EqSM}
\end{align}
where we denote by $J^{(l)}$  the vector made up of the  $l$th coordinates of each $J_i$. Due to \cref{gsum,Sumr} for each $l$ (at  least) one equation is redundant.  We exploit this in more detail in the next Subsections.
\subsection{A model with an immovable component}\label{SecHapto}
Recall that {$u_m$} and $u_n$ correspond to normal and necrotic tissues, respectively. A standard modelling assumption is  that any  tissue is  completely  immovable. This means that its flux is  a zero function, turning the mass  conservation law \cref{CL} into an ODE. However, as we show in this Section, already presupposing, e.g.
\begin{align}
 J_n\equiv 0\label{Jn0}
\end{align}
is problematic. 

Singling out the $n$th phase, we use the following convenient notation.
\begin{Notation}
  For a vector $y$ with components corresponding to the four phases we denote by $\tilde y$  the vector which is obtained by removing the component which corresponds to $n$th phase. Similarly, for a matrix $A$ we denote by $\tilde A$ the submatrix which results from removing the  row and column of $A$ corresponding to that phase.
\end{Notation}
Recall that  fluxes $J_i$, $i\in\I$, satisfy system \cref{EqSM} which is underdetermined. 
We notice that the submatrix $\tilde A$ is diagonal-dominant
due to \cref{Sumr} and the nonnegativity of $u_i$'s and $K_{ij}$'s. Using the Gershgorin circle theorem, we infer that the real parts of the eigenvalues of $\tilde A$ do not exceed $-\rho_A(u_i)$, where
\begin{align}
 \rho_A:=u_n\min_{i\in\{c,m,w\}}K_{ni}.
\end{align}
Therefore, for $u_n>0$ the submatrix $\tilde A$ of $A$ is invertible. Consequently, system \cref{EqSM}-\cref{Jn0} is uniquely solvable.

To simplify the calculations, let us now assume that the  following technical assumption holds:
  \begin{align}
K=kk^T\quad\text{for some}\quad k\in[0,1]^{\numb}. \label{Kkk}       
\end{align}
In other words, the symmetric matrix $K$ has rank one. One readily verifies the following formulas:
\begin{Lemma} Let assumption \cref{Kkk} hold. Then:
\begin{align}
 \tilde{A}=-(k\cdot u)\left(I_3-\frac{1}{k\cdot u}(\tilde k.\tilde u)\tilde e^T\right)\diag(\tilde k)
\end{align}
and
\begin{align}
 \tilde{A}^{-1}=-\frac{1}{k\cdot u}\diag^{-1}(\tilde k)\left(I_3+\frac{1}{k_{n}u_{n}}(\tilde k.\tilde u)\tilde e^T\right), 
\end{align}
so that
\begin{align}
 \tilde{A}^{-1}\tilde w
 =&-\frac{1}{k\cdot u}\left(\tilde k.^{-1}.\tilde w+\frac{1}{k_{n}u_{n}}\tilde u(\tilde e\cdot\tilde w)\right).\label{Amw}
\end{align}
\end{Lemma}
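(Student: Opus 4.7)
The plan is to exploit the rank-one structure forced by \cref{Kkk} to turn $\tilde A$ into a small rank-one perturbation of a diagonal matrix, and then invert it via the Sherman--Morrison formula.

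First I would observe that assumption \cref{Kkk} means $K_{ij}=k_ik_j$, so the definition of $A$ rewrites as
\begin{align*}
A = (k.u)k^T - (k\cdot u)\diag(k),
\end{align*}
since the off-diagonal entries are $A_{im}=k_iu_i k_m = ((k.u)k^T)_{im}$ and the diagonal entries pick up the correction $-(k\cdot u)k_i$, which removes the erroneous $k_i^2u_i$ and installs the prescribed row-sum constant. Deleting the $n$-th row and column preserves this structure on the remaining indices and gives
\begin{align*}
\tilde A = (\tilde k.\tilde u)\tilde k^T - (k\cdot u)\diag(\tilde k).
\end{align*}
Factoring $-(k\cdot u)\diag(\tilde k)$ on the right and using $\tilde e^T\diag(\tilde k) = \tilde k^T$ yields the first formula of the lemma.

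Next I would invert the factor $M:=I_3-\tfrac{1}{k\cdot u}(\tilde k.\tilde u)\tilde e^T$, which is a rank-one update of the identity. Sherman--Morrison gives $M^{-1}=I_3+\frac{(\tilde k.\tilde u)\tilde e^T/(k\cdot u)}{1-\tilde e^T(\tilde k.\tilde u)/(k\cdot u)}$. The denominator simplifies via
\begin{align*}
\tilde e\cdot(\tilde k.\tilde u) = \sum_{i\in\I\setminus\{n\}}k_iu_i = k\cdot u - k_nu_n,
\end{align*}
so $1-\tilde e^T(\tilde k.\tilde u)/(k\cdot u) = k_nu_n/(k\cdot u)$, and hence $M^{-1}=I_3+\frac{1}{k_nu_n}(\tilde k.\tilde u)\tilde e^T$. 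Combining this with $\tilde A^{-1}=-\tfrac{1}{k\cdot u}\diag^{-1}(\tilde k)M^{-1}$ gives the second formula.

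Finally, applying $\tilde A^{-1}$ to $\tilde w$ is a direct calculation: the identity part of $M^{-1}$ produces $\diag^{-1}(\tilde k)\tilde w = \tilde k.^{-1}.\tilde w$, while the rank-one part contributes $\frac{1}{k_nu_n}\diag^{-1}(\tilde k)(\tilde k.\tilde u)(\tilde e\cdot\tilde w) = \frac{1}{k_nu_n}\tilde u(\tilde e\cdot\tilde w)$, using the elementary identity $\diag^{-1}(\tilde k)(\tilde k.\tilde u)=\tilde u$. Summing yields \cref{Amw}. The only mild obstacle is bookkeeping between the Hadamard product notation $w^{(1)}.w^{(2)}$ and ordinary matrix/vector multiplication, together with the diagonal-matrix identity $\tilde e^T\diag(\tilde k)=\tilde k^T$; no deep step is needed beyond Sherman--Morrison.
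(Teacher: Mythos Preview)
Your proof is correct. The paper itself omits the proof entirely (``One readily verifies the following formulas''), and your argument --- recasting $\tilde A$ as a rank-one perturbation of a diagonal matrix and inverting via Sherman--Morrison --- is precisely the natural way to carry out that verification.
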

\noindent Using \cref{Amw} we can resolve system  \cref{EqSM}-\cref{Jn0} with respect to the components of the fluxes.  For $l\in\{1,\dots,N\}$ and $i\in\{c,m,w\}$ we obtain
\begin{align}
  J^{(l)}_i=&-\frac{1}{k\cdot u}\left(\frac{1}{k_i} (-u_i\partial_{x_l} (u\cdot\tau)+\partial_{x_l} (u_i\tau_i))+\frac{1}{k_{n}u_{n}}u_i\tilde e\cdot(-\tilde u\partial_{x_l} (u\cdot\tau)+\partial_{x_l} (\tilde\tau.\tilde u))\right)\nonumber\\
  =&-\frac{1}{k\cdot u}\left(\frac{1}{k_i} (-u_i\partial_{x_l} (u\cdot\tau)+\partial_{x_l} (u_i\tau_i))+\frac{1}{k_{n}u_{n}}u_i(-(\tilde e\cdot\tilde u)\partial_{x_l} (u\cdot\tau)+\partial_{x_l} (\tilde u\cdot\tilde\tau))\right)\nonumber\\
  =&-\frac{1}{k\cdot u}\left(\frac{1}{k_i} (-u_i\partial_{x_l} (u\cdot\tau)+\partial_{x_l} (u_i\tau_i))+\frac{1}{k_{n}u_{n}}u_i(-(1-u_{n})\partial_{x_l} (u\cdot\tau)+\partial_{x_l} (\tilde u\cdot\tilde\tau))\right)
  \nonumber\\
  =&-\frac{1}{k\cdot u}\left(\frac{1}{k_i} (-u_i\partial_{x_l} (\tilde u\cdot\tilde\tau)+\partial_{x_l} (u_i\tau_i))+\frac{1}{k_{n}}u_i\partial_{x_l} (\tilde u\cdot\tilde\tau)\right)\nonumber\\
  =&-\frac{1}{k\cdot u}\left(\frac{1}{k_i}\partial_{x_l} (u_i\tau_i)-\left(\frac{1}{k_i}-\frac{1}{k_n}\right) u_i\partial_{x_l} (\tilde u\cdot\tilde\tau)\right).\label{Jl}
\end{align}
 We used \cref{novoid} and \cref{Atau} in the third and fourth equalities, respectively. 
Plugging \cref{Jl} into \cref{CL} and using \cref{Atau}, we obtain  
\begin{subequations}\label{syst3}
\begin{align}
 &\partial_t u_i=\nabla\cdot\left(\frac{1}{k\cdot u}\left(\frac{1}{k_i}\nabla(u_i\tau_i)-\left(\frac{1}{k_i}-\frac{1}{k_n}\right) u_i\nabla(u_c\tau_c+u_m\tau_m)\right)\right)+f_i,\qquad i\in\{c,m\},\\
 &\partial_t u_w=\nabla\cdot\left(\frac{1}{k\cdot u}\left(-\left(\frac{1}{k_w}-\frac{1}{k_n}\right) u_w\nabla(u_c\tau_c+u_m\tau_m)\right)\right)+f_w.
\end{align}
\end{subequations}
For variable $u_n$ we have due to \cref{CL,Jn0} that it satisfies an ODE:
\begin{align}
 \partial_tu_n=f_n.\label{eqn3}
\end{align}
However, the structure of the fluxes in \cref{syst3} would ensure \cref{novoid} only if  
\begin{align}
 \nabla\cdot\sum_{i\in\{c,m,w\}}J_i\equiv0.\nonumber
\end{align}
This condition fails to hold in general, unless for $\tau_c$ and $\tau_m$  such that
\begin{align}
 \sum_{i\in\{c,m,w\}}\left(\frac{1}{k_i}\nabla_u(u_i\tau_i)-\left(\frac{1}{k_i}-\frac{1}{k_n}\right) u_i\nabla_u(u_c\tau_c+u_m\tau_m)\right)=0\ \ \text{for all }0\leq u_c,u_m,u_n,\ \sum_{i\in\{c,m,w\}}u_i\leq1.
\end{align}
It is not fulfilled by the coefficients we have in mind, see \cref{Setup}. 

Let us assume that \cref{syst3} holds only for $i\in\{c,w\}$. Assume further that   
\begin{align}
u_w\equiv0,      
\end{align}
i.e. that the liquid phase is negligible.
In this case we obtain a  haptotaxis model:
\begin{subequations}\label{hapto}
\begin{align}
  &\partial_tu_c=\nabla\cdot\left(\frac{1}{k\cdot u}\left(\frac{1}{k_c}\nabla(u_c\tau_c(u,h))-\left(\frac{1}{k_c}-\frac{1}{k_n}\right) u_c\nabla(u_c\tau_c(u,h)+(1-(u_c+u_n))\tau_m(u,h))\right)\right)\nonumber\\
  &\qquad\quad +f_c(u,h),\\
  &\partial_tu_n=f_n(u,h),\\
  &u_m=1-(u_c+u_n),\\
  &u=(u_c,u_m,u_n).
\end{align}
\end{subequations}
Unless $\tau_c$ is zero for $u_c+u_n=1$, this model cannot ensure  $u_m\geq0$.

Altogether we see that trying to explicitly enforce an immovable phase  leads to a situation where not all basic conservation laws can hold at the same time.   
\subsection{A model with total flux control}
Let us now assume that the number of components in the mixture is three since
\begin{align*}
 u_w\equiv 0,
\end{align*}
thus, unlike previous multiphase models in a related context (see e.g. \cite{Jackson2002307,WALDELAND2018268}), but compare also \cite{TOSIN2010969}, we neglect the interstitial fluid and only take into account the more 'solid' components, namely cancer cells, necrotic matter, and normal tissue, all of which are assumed to be more or less heterogeneously interspersed within the volume of interest. 
 In this Subsection we assume all physical laws from \cref{Setup} to hold in full, so that, in particular, the mass conservation law \cref{CL} holds for each $i\in\{c,m,n\}$. 
Due to \cref{sumf} and \cref{novoid} we can replace \cref{CL} for $i=n$ by
\begin{align}
 &\nabla\cdot J_{sum}=0,\label{Jsum1}
\end{align}
where
\begin{align}
 J_{sum}:=\sum_{i\in\{c,m,n\}}J_{i}.\nonumber
\end{align}
As we have seen previously,  system  \cref{EqSM} is underdetermined. However, if  $J_{sum}$ is  given, then the system is equivalent to the following matrix equation:
\begin{align}
 \bar A(J_c,J_m,J_n)^T=(z_c,z_m,J_{sum})^T,\label{systm}
\end{align}
where 
\begin{align}
 \bar A=\left(\begin{matrix}
                  -(K_{cm}u_m + K_{cn}u_n)& K_{cm}u_c& K_{cn}u_c\\
 K_{cm}u_m& -(K_{cm}u_c + K_{mn}u_n)& K_{mn}u_m\\
 1& 1& 1                 \end{matrix}
\right)
\end{align}
and 
\begin{align*}
z_{i}:=-u_i\nabla(u\cdot\tau)+\nabla (u_i\tau_i).\end{align*}
One can readily verify that 
\begin{align}\label{inv-matrix}
 \bar A^{-1}=\frac{1}{K_{cm}^2\Cr{Sc}}\left(
\begin{matrix}
 -K_{cm} {u_c}-K_{mn} ({u_m}+{u_n}) & {u_c} (K_{cn}-K_{cm}) & {u_c}
   K_{cm}^2\Cr{Sc} \\
 {u_m} (K_{mn}-K_{cm}) & -K_{cm} {u_m}-K_{cn} ({u_c}+{u_n}) & {u_m}
   K_{cm}^2\Cr{Sc} \\
 K_{cm} ({u_c}+{u_m})+K_{mn} {u_n} & K_{cm} ({u_c}+{u_m})+K_{cn}
   {u_n} & {u_n} K_{cm}^2\Cr{Sc}
\end{matrix}
\right),
\end{align}
where
\begin{align}
 \Cl[S]{Sc}:=\frac{1}{K_{cm}^2}(K_{cm}K_{cn}u_c+K_{cm}K_{mn}u_m+K_{cn}K_{mn}u_n).
\end{align}
Using \cref{novoid}, we can rewrite {\eqref{inv-matrix}} as
\begin{align}
 \bar A^{-1}=&\frac{1}{K_{cm}^2\Cr{Sc}}\left(
\begin{matrix}
  {u_c}(K_{mn}-K_{cm})-K_{mn}& {u_c} (K_{cn}-K_{cm}) & {u_c}
   K_{cm}^2\Cr{Sc} \\
 {u_m} (K_{mn}-K_{cm}) & {u_m}(K_{cn}-K_{cm}) -K_{cn}  & {u_m}
   K_{cm}^2\Cr{Sc} \\
 {u_n} (K_{mn}-K_{cm}) +K_{cm}& {u_n}(K_{cn}-K_{cm})+K_{cm}
    & {u_n} K_{cm}^2\Cr{Sc}
\end{matrix}
\right)\nonumber\\
=&\frac{1}{K_{cm}^2\Cr{Sc}}(u(\Cr{bc},\Cr{bm},K_{cm}^2\Cr{Sc})+(\Cl[b]{Bm},\Cl[b]{Bn},(0,0,0)^T)),\label{Am}
\end{align}
where
\begin{align}
 (\Cl[B]{bc},\Cl[B]{bm}):=(K_{mn}-K_{cm},K_{cn}-K_{cm}),\label{BB}
\end{align}
\begin{align}
 (\Cr{Bm},\Cr{Bn}):=\left(
\begin{matrix}
  -K_{mn}& 0 \\
 0 &  -K_{cn} \\
 K_{cm}& K_{cm}
\end{matrix}
\right).\label{bb}
\end{align}
Resolving \cref{systm} with respect to $J_i$'s and using \cref{Am}, we get 
\begin{align}
 (J_c,J_m,J_n)=&\frac{1}{K_{cm}^2\Cr{Sc}}\left(\left(u\left(\Cr{bc},\Cr{bm},K_{cm}^2\Cr{Sc}\right)+\left(\Cr{Bm},\Cr{Bn},(0,0,0)^T\right)\right)(z_c,z_m,J_{sum})^T\right)^T\nonumber\\
 =&\frac{1}{K_{cm}^2\Cr{Sc}}\left(z_c(\Cr{bc}u+\Cr{Bm})^T+z_m(\Cr{bm}u+\Cr{Bn})^T\right)+J_{sum}u^T.\label{Jcmn}
\end{align}
Combining \cref{BB,bb,Jcmn}, we obtain
\begin{align}
 -J_c=&\frac{1}{K_{cm}^2\Cr{Sc}}\left((K_{cm}u_c+K_{mn}(1-u_c))z_c+(K_{cm}-K_{cn})u_cz_m\right)-J_{sum}u_c\nonumber\\
 =&\frac{1}{K_{cm}^2\Cr{Sc}}((K_{cm}u_c+K_{mn}(1-u_c))(-u_c\nabla(u\cdot\tau)+\nabla (u_c\tau_c))\nonumber\\
 &\qquad\ \ +(K_{cm}-K_{cn})u_c(-u_m\nabla(u\cdot\tau)+\nabla (u_m\tau_m)))-J_{sum}u_c\label{Jcc}
\end{align}
and
\begin{align}
 -J_m=&\frac{1}{K_{cm}^2\Cr{Sc}}\left((K_{cm}-K_{mn})u_mz_c+(K_{cm}u_m+K_{cn}(1-u_m))z_m\right)-J_{sum}u_m\nonumber\\
 =&\frac{1}{K_{cm}^2\Cr{Sc}}((K_{cm}-K_{mn})u_m(-u_c\nabla(u\cdot\tau)+\nabla (u_c\tau_c))\nonumber\\
 &\qquad\ \ +(K_{cm}u_m+K_{cn}(1-u_m))(-u_m\nabla(u\cdot\tau)+\nabla (u_m\tau_m)))-J_{sum}u_m.\label{Jmc}
\end{align}
Recalling that $\tau_{n}=0$ due to \cref{Atau}, we conclude  
from \cref{Jcc}-\cref{Jmc} that
\begin{align}
 -J_c+J_{sum}u_c
 =&\frac{1}{K_{cm}^2\Cr{Sc}}(K_{cm}u_c+K_{mn}(1-u_c))((1-u_c)\nabla (u_c\tau_c)-u_c\nabla(u_m\tau_m))\nonumber\\
 &+\frac{1}{K_{cm}^2\Cr{Sc}}(K_{cm}-K_{cn})u_c(-u_m\nabla(u_c\tau_c)+(1-u_m)\nabla(u_m\tau_m))\nonumber\\
 =&\frac{1}{K_{cm}^2\Cr{Sc}}\left((K_{cm}u_c+K_{mn}(1-u_c))(1-u_c)-(K_{cm}-K_{cn})u_cu_m\right)\nabla (u_c\tau_c)\nonumber\\
 &+\frac{1}{K_{cm}^2\Cr{Sc}}\left(-(K_{cm}u_c+K_{mn}(1-u_c))u_c+(K_{cm}-K_{cn})u_c(1-u_m)\right)\nabla(u_m\tau_m).
 \label{Jcm}
\end{align}
and, similarly, 
\begin{align}
 -J_m+J_{sum}u_m
 =&\frac{1}{K_{cm}^2\Cr{Sc}}\left(-(K_{cm}u_m+K_{cn}(1-u_m))u_m+(K_{cm}-K_{mn})u_m(1-u_c)\right)\nabla(u_c\tau_c)\nonumber\\
 &+\frac{1}{K_{cm}^2\Cr{Sc}}\left((K_{cm}u_m+K_{cn}(1-u_m))(1-u_m)-(K_{cm}-K_{mn})u_cu_m\right)\nabla (u_m\tau_m).
 \label{Jmm}
\end{align}

Set
\begin{align}
 &D_{cc}:=(K_{cm}u_c+K_{mn}(1-u_c))(1-u_c)+(K_{cn}-K_{cm})u_cu_m,\\
 &D_{cm}:=-(K_{cm}u_c+K_{mn}(1-u_c))u_c-(K_{cn}-K_{cm})u_c(1-u_m),\\
 &D_{mc}:=-(K_{cm}u_m+K_{cn}(1-u_m))u_m-(K_{mn}-K_{cm})u_m(1-u_c),\\
 &D_{mm}:=(K_{cm}u_m+K_{cn}(1-u_m))(1-u_m)+(K_{mn}-K_{cm})u_cu_m.
\end{align}
Then \cref{Jcm,Jmm} take the form
\begin{align}
 &-J_c=\frac{1}{K_{cm}^2\Cr{Sc}}D_{cc}\nabla(u_c\tau_c)+\frac{1}{K_{cm}^2\Cr{Sc}}D_{cm}\nabla(u_m\tau_m)-J_{sum}u_c,\label{Jc1}\\
 &-J_m=\frac{1}{K_{cm}^2\Cr{Sc}}D_{mc}\nabla(u_c\tau_c)+\frac{1}{K_{cm}^2\Cr{Sc}}D_{mm}\nabla(u_m\tau_m)-J_{sum}u_m.\label{Jm1}
\end{align}
Finally, combining \cref{novoid,CL,Jc1,Jm1,Jsum1,Atau,tau3,tau4} we arrive at the system
\begin{subequations}\label{model2}
\begin{align}
    &\partial_t u_c=\nabla\cdot \left(\frac{D_{cc}}{K_{cm}^2\Cr{Sc}}(u_c,u_m)\nabla(u_c\tau_c(u_c,h))+\frac{D_{cm}}{K_{cm}^2\Cr{Sc}}(u_c,u_m)\nabla(u_m\tau_m(u_c,u_m))-J_{sum}u_c\right)\nonumber\\
    &\quad\quad\   +f_c(u_c,u_m,h),\label{Equc}\\
    &\partial_t u_m=\nabla\cdot \left(\frac{D_{mc}}{K_{cm}^2\Cr{Sc}}(u_c,u_m)\nabla(u_c\tau_c(u_c,h))+\frac{D_{mm}}{K_{cm}^2\Cr{Sc}}(u_c,u_m)\nabla(u_m\tau_m(u_c,u_m))-J_{sum}u_m\right)\nonumber\\
    &\quad\quad \ +f_m(u_c,u_m,h),\label{Equm}\\
    &u_n:=1-u_c-u_m,\label{Equn}\\
    &\nabla\cdot J_{sum}=0.
 \end{align}
\end{subequations}
Our construction guarantees that $u_n$ satisfies
an equation similar to \cref{Equc,Equm}:
\begin{align}
 \partial_t u_n=&\nabla\cdot \left(\frac{D_{wc}}{K_{cm}^2\Cr{Sc}}(u_c,u_m)\nabla(u_c\tau_c(u_c,u_m,h))+\frac{D_{wm}}{K_{cm}^2\Cr{Sc}}(u_c,u_m)\nabla(u_m\tau_m(u_c,u_m,h))-J_{sum}u_n\right)\nonumber\\
 &+f_w(u_c,u_m,h),\label{Equnbis}
\end{align}
where 
\begin{align*}
 &D_{wc}:=-(D_{cc}+D_{mc}),\\
 &D_{wm}:=-(D_{cm}+D_{mm}),
\end{align*}
and 
\begin{align}
 D_{wc}=D_{wm}=0\qquad \text{for all }u\text{ such that }u_c+u_m=1.\nonumber
\end{align}
This a priori ensures that $u_n\geq0$ is satisfied if the solution components are smooth. 
\begin{Remark}
Model \cref{model2} is a generalisation of the model derived in \cite{Jackson2002307}. Unlike that work we require neither the constants $K_{ij}$ to be equal nor the space dimension $N$ to be one. 
\end{Remark}
\begin{Remark}
Our allowing the tissues to be displaced (thus giving up the hypothesis of their immovability) led to nonlinear diffusion and drift in \eqref{Equm} (or, for that matter, in \eqref{Equnbis}). This might seem unusual, as most reaction-difusion-transport systems describing cell motility in tissues assume the latter fixed and use ODEs for their evolution. We emphasize that the obtained terms do not state a self-driven motion of these components, but rather the effect of population- and biochemical pressure exerted thereon.
\end{Remark}
\begin{Remark}
 To close model \cref{model2}, one needs to choose some divergence-free $J_{sum}$. In dimension one and for no-flux boundary conditions $J_{sum}\equiv0$ is the only option. In higher dimensions $J_{sum}$ can be a curl field (in $3D$) or, more generally in $N$ dimensions (cf. \cite{barbarosie}), an exterior product of gradients: $J_{sum}=\nabla \gamma_1\wedge\nabla \gamma _2\wedge \dots \wedge \nabla \gamma _{N-1}$. In the physically relevant $3D$ case this means that $J_{sum}=\nabla \gamma _1\times \nabla \gamma _2$, thus there exist some scalar quantities $\gamma_1,\gamma _2$ (e.g., densities/volume fractions/concentrations) such that the total flux is orthogonal to each of their gradients, or, put in another way, $J_{sum}$ is tangential to a curve which lies in the intersection of the surfaces described by $\gamma_1$ and $\gamma_2$. This would mean that the total flux is following a direction which is equally biased by those two species.  
 The solenoidality of $J_{sum}$ is a reasonable assumption in view of our previous requirement that there were no sources or sinks of material, either.
\end{Remark}

\paragraph{Diffusion matrix} 
Without loss of generality we assume that
\begin{align}
 K_{mn}\geq K_{cn}\geq K_{cm}>0.\label{CompK}
\end{align}
One readily verifies that 
\begin{align}
 &\frac{1}{K_{cm}^2\Cr{Sc}}\left(\begin{matrix}
  D_{cc}&D_{cm}\\
  D_{mc}&D_{mm}
 \end{matrix}\right)=\frac{\Cr{Sm}}{K_{cm}\Cr{Sc}}\left(\begin{matrix}
  1-u_c-\frac{u_n}{\Cr{Sm}}\ve_c&-u_c\\
  -u_m&1-u_m-\frac{u_n}{\Cr{Sm}}\ve_m
 \end{matrix}\right),\label{DasPr}
\end{align}
where
\begin{align}
 &\ve_c:=\frac{K_{cn}}{K_{cm}}-1,\qquad \ve_m:=\frac{K_{mn}}{K_{cm}}-1,\label{defeps}\\
 &\Cr{Sc}:=(1+\ve_c)u_c+(1+\ve_m)u_m+(1+\ve_c)(1+\ve_m)(1-u_c-u_m),\\
 &\Cl[S]{Sm}:=1+\ve_c(1-u_m)+\ve_m(1-u_c).\label{defSm}
\end{align}
Due to assumptions \cref{CompK} we have for $0\leq u_c, u_m,u_c+u_m\leq1$ that
\begin{align}
&D_{cc},D_{mm}\geq0,\qquad D_{cm},D_{mc}\leq0,\nonumber\\
 &0\leq\ve_c\leq \ve_m,\nonumber\\
 &\Cr{Sc}\in[1+\ve_c,(1+\ve_c)(1+\ve_m)],\label{S1bnd}\\
 &\Cr{Sm}\in[1+\ve_c,1+\ve_1+\ve_m].\label{S2bnd}
\end{align}
In particular, for small $\ve_i$  matrix \cref{DasPr} can be regarded as  a  perturbation of  matrix
\begin{align*}
 \frac{1}{K_{cm}}\left(\begin{matrix}
  1-u_c&-u_c\\
  -u_m&1-u_m
 \end{matrix}\right)
\end{align*}
corresponding to  $$\ve_c=\ve_m=0,$$
i.e. to 
$$K_{mn}=K_{cn}=K_{cm}.$$
This case was addressed in \cite{JuengelStelzer}. 
Further, we compute
\begin{align}
 \left(\begin{matrix}
 \partial_{u_c}(u_c\tau_c)&\partial_{u_m}(u_c\tau_c)&\partial_{h}(u_c\tau_c)\\\partial_{u_c}(u_m\tau_m)&\partial_{u_m}(u_m\tau_m)&\partial_{h}(u_m\tau_m)
 \end{matrix}\right)=\left(\begin{matrix}
 2\alpha_cu_c&0&u_cg'\\ \theta\alpha_m u_m^2&2\alpha_m(1+\theta u_c)u_m&0
 \end{matrix}\right).\label{Dtau}
\end{align}
Overall, the diffusion matrix of equations \cref{Equc,Equm} takes the form
\begin{align}
 \frac{\Cr{Sm}}{K_{cm}\Cr{Sc}}\left(\begin{matrix}
  1-u_c-\frac{u_n}{\Cr{Sm}}\ve_c&-u_c\\
  -u_m&1-u_m-\frac{u_n}{\Cr{Sm}}\ve_m
 \end{matrix}\right)\left(\begin{matrix}
 2\alpha_cu_c&0&u_cg'\\ \theta\alpha_m u_m^2&2\alpha_m(1+\theta u_c)u_m&0
 \end{matrix}\right).\label{DiffA}
\end{align}
The complete diffusion matrix of \cref{Equc}, \cref{Equm}, and \cref{EqProt} includes the third line
\begin{align}
 (0,0,D_h).\label{line3}
\end{align}

\section{Existence of weak solutions to \cref{Equc}-\cref{Equn}, \cref{EqProt}}\label{Existence}
In this Section we use the 
method that was presented in \cite{Juengel2015}  in order to establish an existence result for  our cross diffusion
system \cref{Equc}-\cref{Equn}, \cref{EqProt}. The key to applying the method is finding a suitable so-called entropy density.
Motivated by the study in \cite{JuengelStelzer}, where the case of equal $K_{ij}$'s  and no acidity was treated, we consider the  following entropy density:
\begin{subequations}\label{Entrop}
\begin{align}
&E:{\cal D}\rightarrow\R,\ \ {\cal D}:={\cal D}_{cm}\times (0,\infty),\ \  {\cal D}_{cm}:=\{(u_c,u_m)\in(0,1)^2:\quad u_c+u_m<1\},\label{domain}\\
 &{E(u_c,u_m,h):=L(u_c,u_m)+\frac{a}{2}h^2,}\\
 &{L(u_c,u_m):=}u_c(\ln(u_c)-1)+u_m(\ln(u_m)-1)+u_n(\ln(u_n)-1),\\
 &u_n:=1-(u_c+u_m).
\end{align}
\end{subequations}
Here {$L$ is the well-known logarithmic entropy and }$a>0$ is a sufficiently large constant yet to be fixed. For the subsequent computations we need {the matrix of second-order partial  derivatives of $E$:}
\begin{align}
 D^2E{(u_c,u_m,h)}=\left(\begin{matrix}\frac{1}{u_n}\frac{1-u_m}{u_c}&\frac{1}{u_n}&0\\\frac{1}{u_n}&\frac{1}{u_n}\frac{1-u_c}{u_m}&0\\0&0&a\end{matrix}\right).\label{D2E}
\end{align}
{In order to be able to apply} the method from \cite{Juengel2015}{, we need to ensure positive (semi-)definiteness of}  matrix $(D^2E)M$ in ${\cal D}$ where $M$ is the diffusion matrix of  \cref{Equc}, \cref{Equm}, and \cref{EqProt}. In this Subsection we verify this property for the parameter values satisfying the following conditions:
\begin{align}
 0<&\underset{y\in[0,1]}{\min}(4\alpha_m\left(1+\ve_c(1-y)\right)\left(4\alpha_c\left(1+\ve_m\right)-2\theta\alpha_m y^2\ve_m\right)-\alpha_m^2y^2\left(\theta \left(1+\ve_c(1-y)\right)-2\ve_m\right)^2)\label{pos1_}
\end{align}
and
\begin{align}
 0<&\underset{y\in[0,1]}{\min}\left(4\alpha_m\left(1+\theta y\right)\left(1+\ve_cy\right)\left(4\alpha_c\left(1+\ve_m(1-y)\right)-2\theta\alpha_m (1-y)^2\ve_m\right)\right.\nonumber\\
 &\left.\qquad\ -\left(-2\alpha_cy\ve_c+\theta\alpha_m (1-y)\left(1+\ve_cy\right)-2\alpha_m(1-y)\left(1+\theta y\right)\ve_m\right)^2\right),\label{pos2_}
\end{align}
where $\ve_c$ and $\ve_m$ are constants defined in \cref{defeps}.
\begin{Remark}
 Since both functions which need to be minimised in \cref{pos1_,pos2_} are fourth degree polynomials in $y$,  these conditions can be readily checked numerically for a set of given parameters.
\end{Remark}

\begin{Lemma}[Uniform ellipticity]\label{LemmaEll}
 Let \cref{pos1_,pos2_} hold. Let $E$ be as defined in \cref{Entrop}. Then there {exist some  constants $a>0$ and   $0<\Cl[C]{C1}\leq\Cl[C]{C2}<\infty$, such that:}
 \begin{align}
  \Cr{C1}|y|^2\leq y^T((D^2E)M)(u_c,u_m,h)y\leq \Cr{C2}|y|^2\qquad\text{for all  }(u_c,u_m,h)\in{\cal D},\ y\in\R^{{3}}.
 \end{align}
\end{Lemma}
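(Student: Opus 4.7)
The plan is to exploit the block structure of $D^2E$ in \cref{D2E}, which is block-diagonal with $2\times 2$ upper block $H_L := D^2L$ and $1\times 1$ lower block $a$. Because the last row of the diffusion matrix $M$ (built from \cref{DiffA,line3}) is $(0,0,D_h)$, the product is upper block-triangular,
\begin{align*}
 (D^2E)M = \begin{pmatrix} B & \vec v \\ 0 & aD_h \end{pmatrix},\qquad B := H_L\,M^{(2,2)},\quad \vec v := H_L\,(M_{13},M_{23})^T,
\end{align*}
where $M^{(2,2)}$ is the cell/tissue diffusion block and $(M_{13},M_{23})^T$ its coupling to $h$ (carrying a factor $g'(h)$). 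For $y=(y_1,y_2)\in\R^2\times\R$ this yields $y^T(D^2E)My = y_1^T B y_1 + y_2\,y_1^T\vec v + aD_h\,y_2^2$. Granted the two uniform bounds $y_1^T B^{sym}y_1 \geq 2\lambda_0|y_1|^2$ with $B^{sym}:=\tfrac12(B+B^T)$ and $|\vec v|\leq V$ on $\mathcal D$, Young's inequality gives
\begin{align*}
 y^T(D^2E)My \geq \lambda_0|y_1|^2 + \left(aD_h - \tfrac{V^2}{4\lambda_0}\right)y_2^2,
\end{align*}
so the lower bound is obtained by fixing $a$ large enough, while the upper bound follows from a uniform $L^\infty$ bound on the entries of $(D^2E)M$.

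The crucial cancellation underpinning all these bounds is characteristic of the entropy method. Although each entry of $H_L$ blows up at every edge of $\mathcal D_{cm}$, a direct computation would show
\begin{align*}
 H_L\begin{pmatrix} 1-u_c-\tfrac{u_n\ve_c}{\Cr{Sm}} & -u_c \\ -u_m & 1-u_m-\tfrac{u_n\ve_m}{\Cr{Sm}} \end{pmatrix} = \frac{1}{\Cr{Sm}}\begin{pmatrix} \tfrac{1+\ve_m(1-u_c)}{u_c} & -\ve_m \\ -\ve_c & \tfrac{1+\ve_c(1-u_m)}{u_m} \end{pmatrix},
\end{align*}
removing the singular prefactor $1/u_n$; the remaining $1/u_c, 1/u_m$ are absorbed by the matching factors $u_c, u_m$ of the Jacobian \cref{Dtau}. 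Combined with boundedness of $g'(h) = \chi/(1+h/h_{max})^2\in(0,\chi]$ and the strict positivity of $\Cr{Sc},\Cr{Sm}$ from \cref{S1bnd,S2bnd}, this gives both the $L^\infty$-bound for $\Cr{C2}$ and the uniform bound $V$ on $|\vec v|$.

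It remains to establish uniform positive definiteness of $B^{sym}$ on $\mathcal D_{cm}$. The $2\times 2$ Sylvester criterion reduces to the discriminant condition $4B_{11}B_{22}>(B_{12}+B_{21})^2$ once one observes that $B_{11}, B_{22}$ are positive for $\alpha_c,\alpha_m>0$. Carrying out the product explicitly yields, up to the positive factor $(K_{cm}\Cr{Sc})^{-2}$, a polynomial in $(u_c,u_m)$ whose restriction to the edge $\{u_c+u_m=1\}$ (parametrised by $y=u_c$) is precisely the polynomial in \cref{pos2_} and whose restriction to the edge $\{u_c=0\}$ (parametrised by $y=u_m$) is precisely the polynomial in \cref{pos1_}. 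The two conditions thus guarantee strict positivity on these two edges. Since everything extends continuously to the compact triangle $\overline{\mathcal D_{cm}}$, the uniform lower bound $\lambda_0>0$ follows once strict positivity is established on the whole closure. The main obstacle will therefore be the algebraic verification that the minimum of the discriminant over $\overline{\mathcal D_{cm}}$ is attained on one of the two edges covered by \cref{pos1_,pos2_}: on the third edge $\{u_m=0\}$ the discriminant reduces to $16\alpha_c\alpha_m(1+\theta y)(1+\ve_c)(1+\ve_m(1-y))-4\alpha_c^2y^2\ve_c^2$, which is manifestly positive at $y=0$ and whose positivity at $y=1$ coincides with the endpoint value of \cref{pos2_}; possible interior critical points can be excluded by a direct partial-derivative calculation combining the two conditions.
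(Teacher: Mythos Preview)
Your outline matches the paper's approach closely: the block structure, the entropy--Hessian cancellation removing the $1/u_n$ singularity, the Sylvester criterion for the $2\times2$ block, and the choice of large $a$ to absorb the $h$-coupling (the paper does this via the full $3\times3$ Sylvester criterion rather than Young's inequality, but the two are equivalent here).

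There is, however, a genuine gap at exactly the step you yourself flag as ``the main obstacle'': proving the discriminant $d:=4B_{11}B_{22}-(B_{12}+B_{21})^2$ is strictly positive on all of $\overline{\mathcal D_{cm}}$. You verify the two edges covered by \cref{pos1_,pos2_}, check only the two \emph{endpoints} of the third edge $\{u_m=0\}$, and then defer the interior to an unspecified ``direct partial-derivative calculation combining the two conditions''. That is not an argument; nothing you have written excludes a local minimum in the open triangle or along the open segment $(0,1)\times\{0\}$. The paper supplies precisely the missing algebraic observation: $d$ is a \emph{quadratic} polynomial in $u_c$ with nonpositive leading coefficient (the $u_c^2$-term works out to $-16\alpha_c\alpha_m\theta\ve_m\beta_m-(2\alpha_c\ve_c+2\alpha_m\theta u_m\ve_m)^2$). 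Hence for each fixed $u_m$ the map $u_c\mapsto d(u_c,u_m)$ is concave on $[0,1-u_m]$ and attains its minimum at an endpoint, $u_c=0$ or $u_c=1-u_m$. This single remark simultaneously disposes of the open triangle \emph{and} the whole bottom edge, not just its corners, reducing everything to \cref{pos1_,pos2_}.

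A minor correction: you assert that both $B_{11}$ and $B_{22}$ are positive, but $2B_{11}=4\alpha_c\beta_c-2\theta\alpha_m u_m^2\ve_m$ need not be positive a priori. The paper instead uses $2B_{22}=4\alpha_m(1+\theta u_c)\beta_m\ge4\alpha_m>0$, which is unconditional and suffices for the $2\times2$ Sylvester criterion.
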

\begin{proof}
To begin with, we compute 
\begin{align}
 \Cr{Sm}\left(\begin{matrix}
         \partial^2_{u_cu_c}E&\partial^2_{u_cu_m}E\\\partial^2_{u_cu_m}E&\partial^2_{u_mu_m}E
        \end{matrix}\right)
\left(\begin{matrix}
  1-u_c-\frac{u_n}{\Cr{Sm}}\ve_c&-u_c\\
  -u_m&1-u_m-\frac{u_n}{\Cr{Sm}}\ve_m
 \end{matrix}\right)
 =&\left(\begin{matrix}\frac{1}{u_c}\beta_c&-\ve_m\\-\ve_c&\frac{1}{u_m}\beta_m\end{matrix}\right),\label{D2EM1}
\end{align}
where
\begin{align}
 &\beta_c:=1+\ve_m(1-u_c),\qquad \beta_m:=1+\ve_c(1-u_m),\nonumber
\end{align}
so that
\begin{align}
 &\beta_c\in[1,1+\ve_m],\qquad \beta_m\in[1,1+\ve_c].\label{betacm}
\end{align}
Combining \cref{D2E,D2EM1,DiffA,line3}, we obtain (arguments of functions are omitted)
\begin{align}
 (D^2E)M
 =&\frac{1}{K_{cm}\Cr{Sc}}\left(\begin{matrix}
 2\alpha_c\beta_c-\theta\alpha_m u_m^2\ve_m&-2\alpha_m(1+\theta u_c)u_m\ve_m&\beta_cg'\\
 -2\alpha_cu_c\ve_c+\theta\alpha_m u_m\beta_m&2\alpha_m(1+\theta u_c)\beta_m&-u_c\ve_c g'\\
 0&0&aD_h\Cr{Sm}
 \end{matrix}
\right).
\end{align}
We introduce the symmetric matrix
\begin{align*}
 P:=(D^2E)M+((D^2E)M)^T= \frac{1}{K_{cm}\Cr{Sc}}\left(\begin{matrix}
    \tilde P & \begin{matrix} \beta_cg' \\ -u_c\ve_c g' \end{matrix} \\
    \begin{matrix} \beta_cg' & -u_c\ve_c g' \end{matrix} & 2aD_h\Cr{Sm}
\end{matrix}\right),
\end{align*}
where
\begin{align}
 \tilde P:=\left(\begin{matrix}
        4\alpha_c\beta_c-2\theta\alpha_m u_m^2\ve_m&-2\alpha_cu_c\ve_c+\theta\alpha_m u_m\beta_m-2\alpha_m(1+\theta u_c)u_m\ve_m\\
        -2\alpha_cu_c\ve_c+\theta\alpha_m u_m\beta_m-2\alpha_m(1+\theta u_c)u_m\ve_m&4\alpha_m(1+\theta u_c)\beta_m
       \end{matrix}
\right).\label{MaP}
\end{align}
Next, we study  $\det(\tilde P)$ in ${\cal D}_{cm}$. We compute 
\begin{align}
d_P
:=&\det(\tilde P)\nonumber\\
=&4\alpha_m(1+\theta u_c)\beta_m(4\alpha_c\beta_c-2\theta\alpha_m u_m^2\ve_m)-(-2\alpha_cu_c\ve_c+\theta\alpha_m u_m\beta_m-2\alpha_m(1+\theta u_c)u_m\ve_m)^2\nonumber\\
 =&4\alpha_m\left(1+\theta u_c\right)\left(1+\ve_c(1-u_m)\right)\left(4\alpha_c\left(1+\ve_m(1-u_c)\right)-2\theta\alpha_m u_m^2\ve_m\right)\nonumber\\
 &-\left(-2\alpha_cu_c\ve_c+\theta\alpha_m u_m\left(1+\ve_c(1-u_m)\right)-2\alpha_m\left(1+\theta u_c\right)u_m\ve_m\right)^2.\nonumber
\end{align}
Observe that $d_P$ is quadratic with respect to $u_c$, and the   coefficient of $u_c^2$ is negative. Consequently, $d_P$ cannot attain its minimum inside  ${\cal D}_{cm}\cup((0,1)\times\{0\})$. It remains to ensure that  $d_P$ is positive on the sets $\{0\}\times[0,1]$ and $\{(u_c,u_m)\in(0,1]\times[0,1):\ u_c+u_m=1\}$. This the case if 
\begin{align}
 &\underset{y\in[0,1]}{\min}d_P(0,y)\nonumber\\
 =&\underset{y\in[0,1]}{\min}(4\alpha_m\left(1+\ve_c(1-y)\right)\left(4\alpha_c\left(1+\ve_m\right)-2\theta\alpha_m y^2\ve_m\right)-\alpha_m^2y^2\left(\theta \left(1+\ve_c(1-y)\right)-2\ve_m\right)^2)\nonumber\\ >&0\label{pos1}
\end{align}
and
\begin{align}
 \underset{y\in[0,1]}{\min}d_P(y,1-y)=&\underset{y\in[0,1]}{\min}\left(4\alpha_m\left(1+\theta y\right)\left(1+\ve_cy\right)\left(4\alpha_c\left(1+\ve_m(1-y)\right)-2\theta\alpha_m (1-y)^2\ve_m\right)\right.\nonumber\\
 &\left.\qquad\ -\left(-2\alpha_cy\ve_c+\theta\alpha_m (1-y)\left(1+\ve_cy\right)-2\alpha_m(1-y)\left(1+\theta y\right)\ve_m\right)^2\right)\nonumber\\>&0.\label{pos2}
\end{align}

By Sylvester's criterion,  $P$ is positive definite if and only if $\tilde P$ is positive definite and $\det(P)>0$.
Due to \cref{S1bnd}-\cref{S2bnd}, \cref{betacm}, and $g$ Lipschitz all functions involved in  \cref{MaP} are bounded and functions $\Cr{Sc},\Cr{Sm},\beta_c,\beta_m$ have positive lower bounds in ${\cal D}$. In particular, $\tilde P_{22}\geq 2\alpha_m>0$, so that $\tilde P$ is positive definite if and only if $\det(\tilde P)>0$. Further, we have
\begin{align*}
 \det(P)=&aD_h\Cr{Sm}\det(\tilde P)+\varphi\nonumber\\
 \geq&aD_h\det(\tilde P)+\varphi,
\end{align*}
where $\varphi:\overline{{\cal D}}\rightarrow\R$ is a bounded function ({recall that $g$ is Lipschitz}).
Thus, for sufficiently large $a$, $\det(P)>0$ holds  provided that $\det(\tilde P)>0$. 
Altogether, we conclude that conditions \cref{pos1,pos2} imply that matrix $P$ is positive for all triples {in} $\overline{{\cal D}}$. Assuming these conditions to be satisfied, let $\lambda(P)$  denote an eigenvalue of $P$. 
Then
\begin{align*}
  \frac{\det(P)}{\tr^2(P)}\leq\lambda(P) \leq\tr(P).
\end{align*}
In $\overline{{\cal D}}$, functions $\tr(P)$ and $\det(P)$ are bounded from above and from below, respectively, by some positive constants. Hence, we have positive lower and upper bounds for the eigenvalues of $P$. 
\end{proof}

Now we are ready to {state} our  existence result.
\begin{Theorem}\label{mainthm}
Let  $0<K_{cm}\leq K_{cn}\leq K_{mn}$ and $\alpha_c,\alpha_m,\theta,h_{max},\chi>0$ be some  constants which satisfy    \cref{pos1_,pos2_,defeps}.   Let coefficients $\tau_c$ and $\tau_m$ be as defined in \cref{tau3,tau4} and let Lipschitz functions $f_c,f_m,f_h: {\overline{{\cal D}}}\rightarrow\R${, with domain ${\cal D}$ as in \cref{domain},} be such that
\begin{subequations}\label{Assumpf}
\begin{alignat}{3}
 &f_i\geq 0&&\qquad \text{for }u_i=0,\quad i\in\{c,m\},\\
 &f_c+f_n\leq 0&&\qquad \text{for }u_c+u_m=1,\\
 &f_h\geq0&&\qquad\text{for }h=0.
\end{alignat} 
\end{subequations}
 Then for every given $J_{sum}\in L^{2}_{loc}([0,\infty);(L^2(\Omega))^n)$ and {$(u_{c0},u_{m0},h)\in
(L^{\infty}(\Omega))^2\times L^2(\Omega)$
such that $(u_{c0},u_{m0},h)(x)\in\overline{{\cal D}}$  for a.a. $x\in\Omega$} there exists a weak solution $(u_c,u_m,h):[0,\infty)\times\Omega\rightarrow{\overline{{\cal D}}}$ to system \cref{Equc}-\cref{Equn}, \cref{EqProt} under no-flux boundary conditions. This means that:
  \begin{align}
(u_c,u_m,h)\in L^2_{loc}([0,\infty);({H^1(\Omega)})^3),\quad \partial_t(u_c,u_m,h)\in L^2_{loc}([0,\infty);((H^1(\Omega))^3)'),  \label{weakreg}                                                                                                                                                     \end{align}
\begin{subequations}
  \begin{align}
   \left<\partial_t u_c,\varphi\right>=&-\int_{\Omega}\left(\frac{D_{cc}}{K_{cm}^2\Cr{Sc}}(u_c,u_m)\nabla(u_c\tau_c(u_c,h))+\frac{D_{cm}}{K_{cm}^2\Cr{Sc}}(u_c,u_m)\nabla(u_m\tau_m(u_c,u_m))-J_{sum}u_c\right)\cdot\nabla\varphi\,dx \nonumber\\
   &+\int_{\Omega}\varphi f_c(u_c,{ u_n},h)\,dx,\label{weakuc}\\ 
   \left<\partial_t u_m,\varphi\right>=&-\int_{\Omega}\left(\frac{D_{mc}}{K_{cm}^2\Cr{Sc}}(u_c,u_m)\nabla(u_c\tau_c(u_c,h))+\frac{D_{mm}}{K_{cm}^2\Cr{Sc}}(u_c,u_m)\nabla(u_m\tau_m(u_c,u_m))-J_{sum}u_m\right)\cdot\nabla\varphi\,dx \nonumber\\
   &+\int_{\Omega}\varphi f_m(u_c,u_m,h)\,dx,\label{weakum}\\
   \left<\partial_t h,\varphi\right>=&-\int_{\Omega}D_h\nabla h\cdot\nabla \varphi\, dx +\int_{\Omega}\varphi f_h(u_c,u_m,h)\,dx\label{weakuh}
  \end{align}
  \end{subequations}
a.e. in $(0,\infty)$ for all $\varphi\in H^1(\Omega)^3$, and the initial conditions for each variable are satisfied in $L^2(\Omega)$-sense.

\end{Theorem}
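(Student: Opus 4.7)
The plan is to apply the boundedness-by-entropy method of \cite{Juengel2015}, whose central hypothesis---uniform ellipticity of $(D^2E)M$ on $\mathcal{D}$---has already been verified in \cref{LemmaEll}. In entropy variables $w:=DE(u_c,u_m,h)$, the map $u\mapsto w$ is a diffeomorphism from $\mathcal{D}$ onto $\R^3$: the $(u_c,u_m)\mapsto(w_c,w_m)$ block is the classical logarithmic change of variables on the two-dimensional simplex, and the $h$-block is just $h\mapsto ah$. Consequently any function $w:[0,\infty)\times\Omega\to\R^3$ corresponds via $u=(DE)^{-1}(w)$ to a triple $(u_c,u_m,h)$ that automatically satisfies the pointwise constraint $(u_c,u_m)\in\overline{\mathcal{D}}_{cm}$, giving the $L^\infty$-bounds on $u_c,u_m,u_n$ for free.

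\textbf{Approximation and a priori estimates.} Following the standard recipe, I would combine implicit Euler time-discretization of step $\delta>0$ with an elliptic regularization $\varepsilon\bigl((-\Delta)^s w+w\bigr)$ for some $s>N/2$, so that $H^s(\Omega)\hookrightarrow L^\infty(\Omega)$. At each time step the stationary problem in $w$ is solved by a Leray--Schauder fixed-point argument: coercivity of the linearised problem comes from the symmetrised matrix $P$ constructed in the proof of \cref{LemmaEll}, and the transport term $-J_{sum}u_i$ is skew-symmetric at leading order thanks to $\nabla\cdot J_{sum}=0$. Testing with $w^{k+1}$ yields the discrete entropy inequality
\begin{align*}
E(u^{k+1})-E(u^k)+\delta C_1\|\nabla w^{k+1}\|_{L^2}^2+\varepsilon\delta\|w^{k+1}\|_{H^s}^2\leq \delta\int_\Omega w^{k+1}\cdot f(u^{k+1})\,dx + \delta\, R^{k+1}_{J_{sum}},
\end{align*}
where the transport remainder $R^{k+1}_{J_{sum}}$ is absorbed via Young's inequality using $J_{sum}\in L^2_{loc}([0,\infty);L^2(\Omega)^N)$. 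Summing in $k$ delivers uniform $H^1$-bounds on $w$, and hence on $(u_c,u_m,h)$. The sign conditions \cref{Assumpf}, together with Stampacchia-type truncation on the approximate level, enforce $u_c,u_m,h\geq 0$; the Lipschitz hypothesis on $f_h$ combined with Gronwall's inequality additionally controls $\|h\|_{L^\infty_{loc}([0,\infty);L^2(\Omega))}$.

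\textbf{Limit passage and main obstacle.} The principal obstacle will be upgrading the weak $H^1$-convergence of $u$ to strong convergence, as required to pass to the limit in the nonlinear fluxes $\frac{D_{ij}}{K_{cm}^2 \Cr{Sc}}\nabla(u_i\tau_i)$ and in the advective products $J_{sum}u_i$. The weak formulation itself yields an $L^2_{loc}([0,\infty);(H^1(\Omega))')$-bound on $\partial_t u$, so Aubin--Lions gives strong convergence of $(u_c,u_m,h)$ in $L^2_{loc}([0,\infty);L^2(\Omega))$ along a subsequence, together with a.e. convergence. Combined with the $L^\infty$-bounds on $(u_c,u_m)$, dominated convergence handles the diffusion coefficients and the Lipschitz reactions, while weak-strong pairing deals with the transport. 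Sending successively $\varepsilon\to 0$ and $\delta\to 0$ then produces a weak solution in the sense of \cref{weakuc}--\cref{weakuh}, and the pointwise invariance $(u_c,u_m,h)\in\overline{\mathcal{D}}$ is preserved at every stage by the entropy construction.
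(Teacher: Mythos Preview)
Your proposal follows essentially the same route as the paper: both invoke J\"ungel's boundedness-by-entropy framework with the entropy $E=L(u_c,u_m)+\tfrac{a}{2}h^2$, verify the key ellipticity via \cref{LemmaEll}, and then run the implicit-Euler / $H^s$-regularisation / Leray--Schauder / Aubin--Lions machinery.

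Two small points where your sketch diverges from the paper. First, you invoke $\nabla\cdot J_{sum}=0$ to get skew-symmetry of the transport term, but the theorem only assumes $J_{sum}\in L^2_{loc}([0,\infty);(L^2(\Omega))^N)$ with no divergence constraint; the paper simply remarks that the linear transport is subordinate to diffusion, and indeed your own Young-inequality absorption (using $|u_i|\leq 1$) already suffices without solenoidality. Second, the one computation the paper actually carries out in detail is the verification of hypothesis H3, i.e.\ the bound $DE\cdot(f_c,f_m,f_h)\leq C(1+E-\min E)$, which is exactly what is needed to control your right-hand side term $\int_\Omega w^{k+1}\cdot f(u^{k+1})\,dx$; this follows from the sign conditions \cref{Assumpf} and the Lipschitz assumption on the $f_i$, but you leave it implicit. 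The paper also addresses the unboundedness of ${\cal D}$ in the $h$-direction by first treating bounded $h_0$ (for which $h$ stays bounded by standard semilinear parabolic theory) and then passing to general $h_0\in L^2$ by approximation, whereas you rely directly on the $L^2$-control of $h$ built into $E$ plus Stampacchia truncation for $h\geq 0$; both are workable.
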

{
\begin{Remark}
 Using the assumptions on the  coefficients of system \cref{Equc}-\cref{Equn}, \cref{EqProt},  one readily verifies that if $(u_c,u_m,h)$ satisfies \cref{weakreg} and  $(u_c,u_m,h)\in{\cal D}$ a.e., then the gradients $\nabla(u_c\tau_c(u_c,u_m))$ and $\nabla(u_m\tau_m(u_c,u_m))$ as well as the fluxes  in  \cref{weakuc,weakum} belong to $L^2_{loc}([0,\infty);({L^2(\Omega)})^N)$.
\end{Remark}
\noindent{\it Proof of \cref{mainthm} (sketch).} 
 We rely on the theory of weak solvability  for cross diffusion systems which was developed in   \cite{Juengel2015}. 
We recall that the main tool of the method in \cite{Juengel2015} is a suitable  entropy density. Here we use function 
$E:{\cal D}\rightarrow\R$ previously   defined in  \cref{Entrop}. This function is the sum of the standard logarithmic entropy $L$ for variables $u_1$ and $u_2$  and the quadratic $ah^2$. Combining the corresponding properties of the logarithmic entropy and our previous calculations in this Section, we obtain:
\begin{enumerate}
\item[H1:]\label{HH1} $E\in C^2({\cal D})$ and is convex and bounded below by 
\begin{align*}
 \underset{{\cal D}}{\min}E=E\left(\frac{1}{3},\frac{1}{3},\frac{1}{3},0\right).
\end{align*}
The derivative 
\begin{align}
 DE:{\cal D}\rightarrow\R^3,\qquad DE=(\ln(u_c)-\ln(u_n),\ln(u_m)-\ln(u_n),ah)
\end{align}
is invertible, the inverse being
\begin{align}
 (DE)^{(-1)}(z_1,z_2,z_3)=\left(\frac{e^{z_1}}{1+e^{z_1}+e^{z_2}},\frac{e^{z_2}}{1+e^{z_1}+e^{z_2}},\frac{1}{a}z_3\right).\nonumber
\end{align}
\item[H2':] For sufficiently large $a$, left multiplying the diffusion matrix   $M$ by $D^2E$ yields a uniformly positive definite matrix (see \cref{LemmaEll}).
\item[H2'':] The diffusion matrix   $M$ is uniformly bounded in ${\cal D}$ (compare \cref{DiffA,line3}, recall that $g$ is Lipschitz).
\item[H3:]\label{HH3} $M$ and $f_i$, $i\in\{c,m,h\}$, are continuous mappings and there exists a constant $\Cr{CfDE}$ such that
\begin{align}
 DE\cdot (f_c,f_m,f_h)\leq\Cl[C]{CfDE}\left(1+E-\underset{{\cal D}}{\min}E\right)\qquad\text{in }{\cal D}.\label{H3}
\end{align}
\end{enumerate}
Estimate \cref{H3} is a standard consequence of the assumptions \cref{Assumpf}. Indeed, since $z\mapsto z\ln z$ is negative and bounded below in $(0,1)$, the part originating from the logarithmic entropy satisfies 
\begin{align}
  (\partial_{u_c}L,\partial_{u_m}L)\cdot (f_c,f_m)=&f_c\ln(u_c)+f_m\ln(u_m)+(-f_c-f_m)\ln(u_n)\nonumber\\
  \leq&-\|\nabla f_c\|_{L^{\infty}({\cal D})}u_c\ln(u_c)-\|\nabla f_m\|_{L^{\infty}({\cal D})}u_m\ln(u_m)\nonumber\\
  &-\left(\|\nabla f_c\|_{L^{\infty}({\cal D})}+\|\nabla f_m\|_{L^{\infty}({\cal D})}\right)(1-(u_c+u_m))\ln(1-(u_c+u_m))\nonumber\\
  \leq &\Cl[C]{Cfi1}\qquad\text{in }{\cal D}\label{estDEf1}
\end{align}
for some constant $\Cr{Cfi1}>0$.
Further, since $f_h$ is Lipschitz, we also have
\begin{align}
 \partial_h E f_h\leq\Cl[C]{Ch}(1+ah^2)\qquad\text{in }{\cal D}\label{estDEf2}
\end{align}
for some constant $\Cr{Ch}>0$. Adding \cref{estDEf1,estDEf2} together, we obtain
\begin{align}
 DE\cdot (f_c,f_m,f_h)\leq&\Cl[C]{Cf3}(1+a h^2)\nonumber\\
 \leq&\Cr{CfDE}\left(1+L-\underset{{\cal D}}{\min}L+a h^2\right)\nonumber\\
 =&\Cr{CfDE}\left(1+E-\underset{{\cal D}}{\min}E\right)\qquad\text{in }{\cal D}\nonumber
\end{align}
since obviously $\min E=\min L$.
\\\indent 
 The main result of \cite{Juengel2015}, Theorem 2 on existence of bounded weak solutions, cannot be directly applied in our case for two reasons: firstly, apart from diffusion and reaction,  equations for $u_c$ and $u_m$ also involve transport in the direction of a given vector-valued function $J_{sum}$;  secondly, the  domain ${\cal D}$ is not bounded. Were it not for these differences, the above  properties H1-H3 would correspond to the hypotheses H1-H3 in \cite{Juengel2015}.   
 Still, the proof of existence of weak solutions can be carried out very similar to the proofs presented in \cite{Juengel2015}. The latter go through the following steps:  approximation of the time derivative using the implicit Euler scheme, regularisation of the diffusion operator  by adding a higher order differential  operator such as  $\epsilon(I+(-\Delta)^m)$ for $m>N/2$ and small $\epsilon$, solving the linearised approximation problem using the Lax-Milgram lemma, solving the nonlinear approximation problem using the Leray-Schauder theorem, establishing uniform estimates, and, finally, using the compactness method in order to pass to the limit and solve the original problem. 
 Our case can be handled in the very same way. Indeed, on the one hand  the linear  transport  along $J_{sum}$ is subordinate to diffusion, so that it, e.g. does not hinder the derivation of estimates. On the other hand even though ${\cal D}$ is not bounded, we actually know that for bounded $h_0$ the $h$-component of a solution is a priori bounded on all finite time cylinders. This is a consequence of the standard theory of semilinear parabolic PDEs. General $h_0\in L^2(\Omega)$ can be regularised and a corresponding solution obtained by means of yet one more limit procedure. \\
 We omit further details and refer the interested reader to \cite{Juengel2015} where the complete proofs for very similar cases can be found.
\qed
}

\section{Numerical study}\label{SecNum}

We perform numerical simulations of the system \eqref{model2} supplemented with the PDE \eqref{EqProt} for the evolution of proton concentration $h$, endowed with no-flux boundary conditions. For simplicity we choose $J_{sum}=0$. We also perform a nondimensionalisation of the model and use it in our simulations (for its concrete form and for the employed parameters refer to the Appendix). The initial conditions are as follows:

\begin{subequations}\label{eq:ICs_mpm}
	\begin{align}
	u_c(0,\mathbf{ x}) &= 0.05   \left( e^{\frac{-(x-500)^2- (y-500)^2}{2(25)^2}}    + e^{\frac{-(x-600)^2- (y-500)^2}{2 (20)^2}} + e^{\frac{-(x-300)^2- (y-400)^2}{2(10)^2}}\right),\label{eq:IC-c}\\
	h(0,\mathbf{ x}) & = 10^{-7}e^{\frac{-(x-500)^2- (y-500)^2}{2 (15)^2}} + 10^{-7}e^{\frac{-(x-600)^2- (y-500)^2}{2(10)^2}} + 10^{-6.4} e^{\frac{-(x-300)^2- (y-400)^2}{2(7.5)^2}},\label{eq:IC-h}\\
	u_n(0,\mathbf{ x}) & = 0.9 \left(e^{\frac{-(x-500)^2- (y-500)^2}{2 (5)^2}} + 
	e^{\frac{-(x-600)^2- (y-500)^2}{2(2)^2}} + e^{\frac{-(x-300)^2- (y-400)^2}{2(1)^2}} \right),\label{eq:IC-n}\\
	u_m(0,\mathbf{ x}) &= 1 - u_c(0,\mathbf{ x}) - u_n(0,\mathbf{ x}) \quad \text{with} \quad u_m(0,500,500) = u_m(0,600,500) = u_m(0,300,400)=0.\label{eq:IC-m}
	\end{align}
\end{subequations}

\noindent
These are illustrated in Figure \ref{fig:IC_set1_mpm}. The problem is set in a square $[0,1000]\times [0,1000]$ (in $\mu m$), corresponding to the size of large pseudopalisades \cite{Brat2004}. For the discretisation we use a method of lines approach. Thereby, the diffusion terms in all involved PDEs are computed by using a standard central difference scheme, while a first order upwind scheme is employed for the advection terms occurring in the equations for glioma and normal tissue. For the acidity equation we discretise time upon using an implicit-explicit (IMEX) method, with forward and backward Euler schemes for the diffusion and reaction terms, respectively. The glioma and normal tissue PDEs are discretised in time by an explicit Euler method.

\begin{figure}[!htbp]
	\centering
		\begin{subfigure}[b]{0.25\textwidth}
			{\includegraphics[width=1\linewidth]{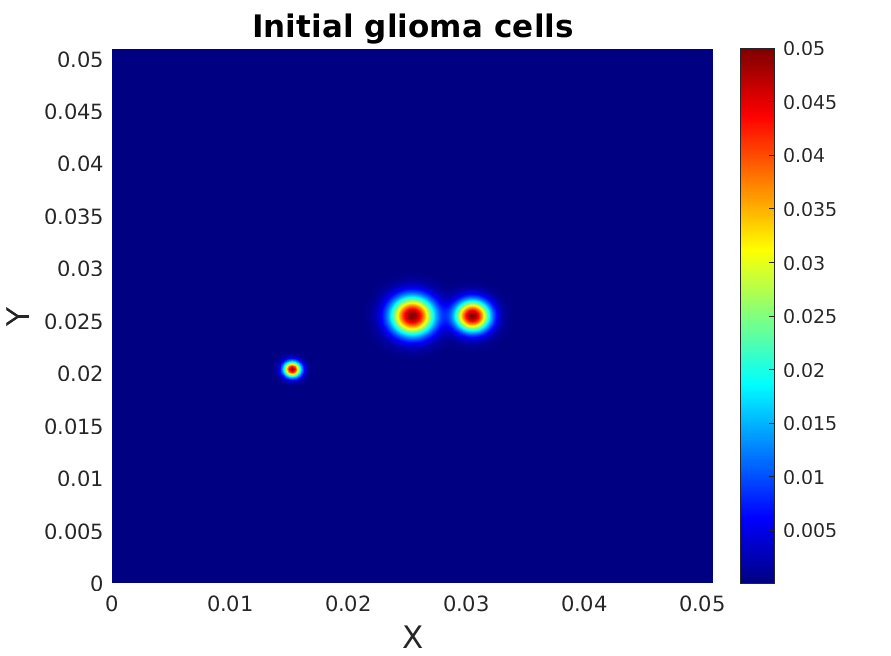}\subcaption{Glioma cells}}
		\end{subfigure}\hfill
		\begin{subfigure}[b]{0.25\textwidth}
			{\includegraphics[width=1\linewidth]{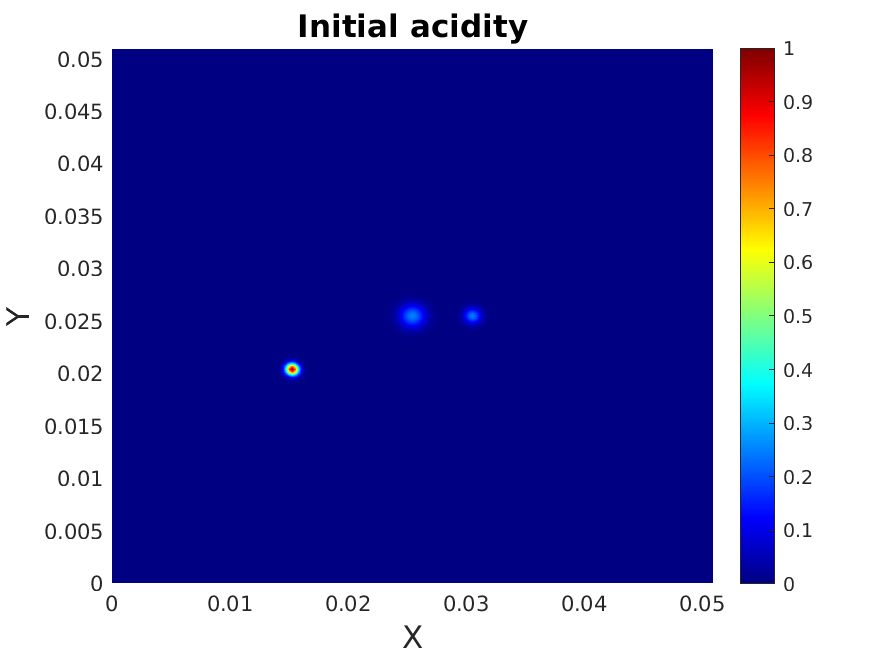}\subcaption{Acidity}}
		\end{subfigure}\hfill
		\begin{subfigure}[b]{0.25\textwidth}
			{\includegraphics[width=1\linewidth]{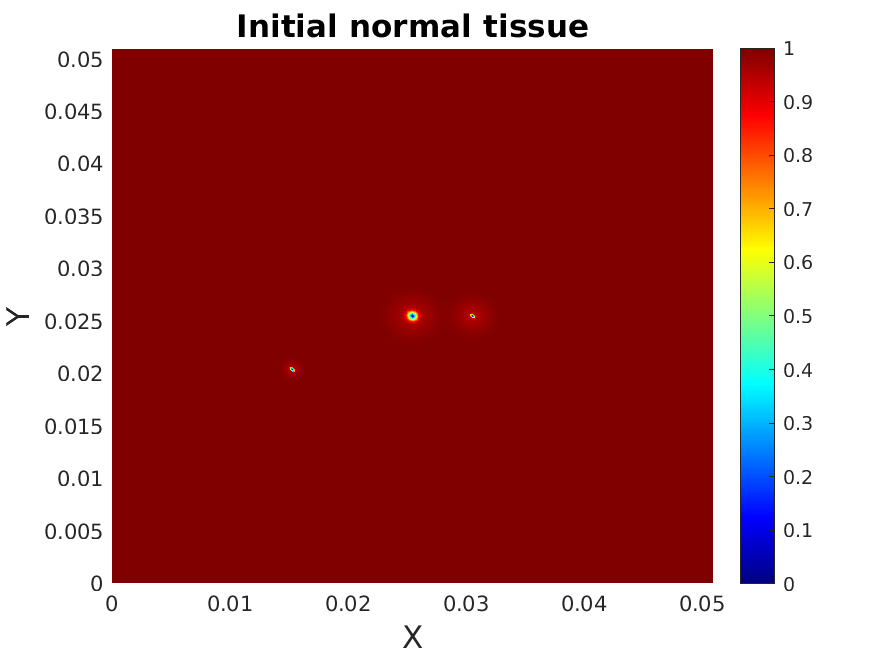}\subcaption{Normal tissue}}
		\end{subfigure}\hfill
		\begin{subfigure}[b]{0.25\textwidth}
			{\includegraphics[width=1\linewidth]{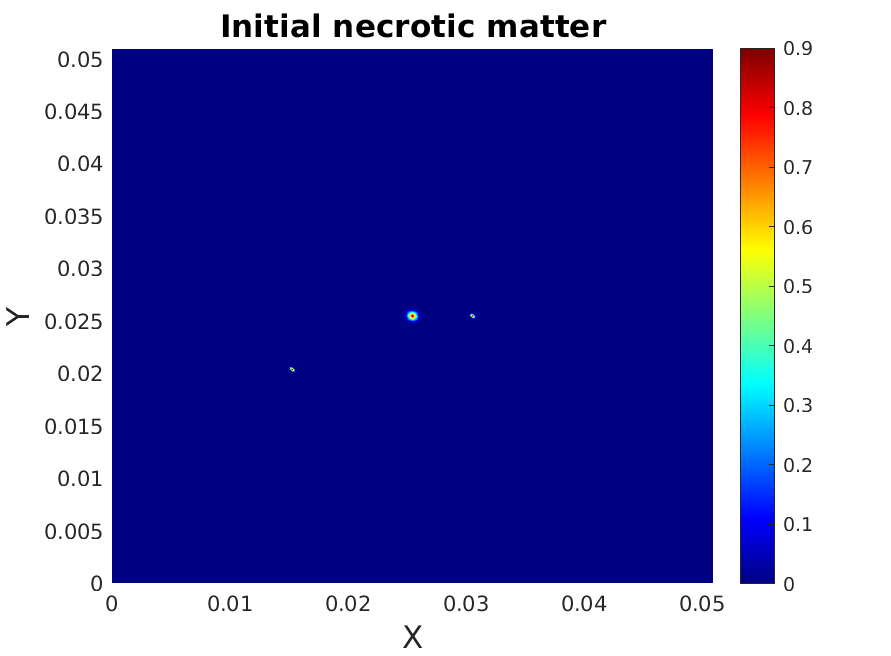}\subcaption{Necrotic matter}}%
		\end{subfigure}
	\caption{Initial conditions \eqref{eq:ICs_mpm}.}
	\label{fig:IC_set1_mpm}
\end{figure}

Figure \ref{fig:simulations_mpm} shows the computed volume fractions of glioma (first column), normal tissue (3rd column), necrotic matter (last column), and acidity concentration (second column) at several times, in a total time span which is relevant for pseudopalisade formation. The typical garland-like structure of glioma pseudopalisades is clearly visible. The tumor cells encircle a highly acidic and necrotic region, while outwards, beyond the glioma ring, the normal tissue remains non-depleted and the acidity decreases. 

\begin{figure}[!htbp]
	\centering
		\begin{minipage}[hstb]{.24\linewidth}
		\raisebox{1.2cm}{\rotatebox[origin=t]{90}{30 days}}{\includegraphics[width=1\linewidth, height = 3cm]{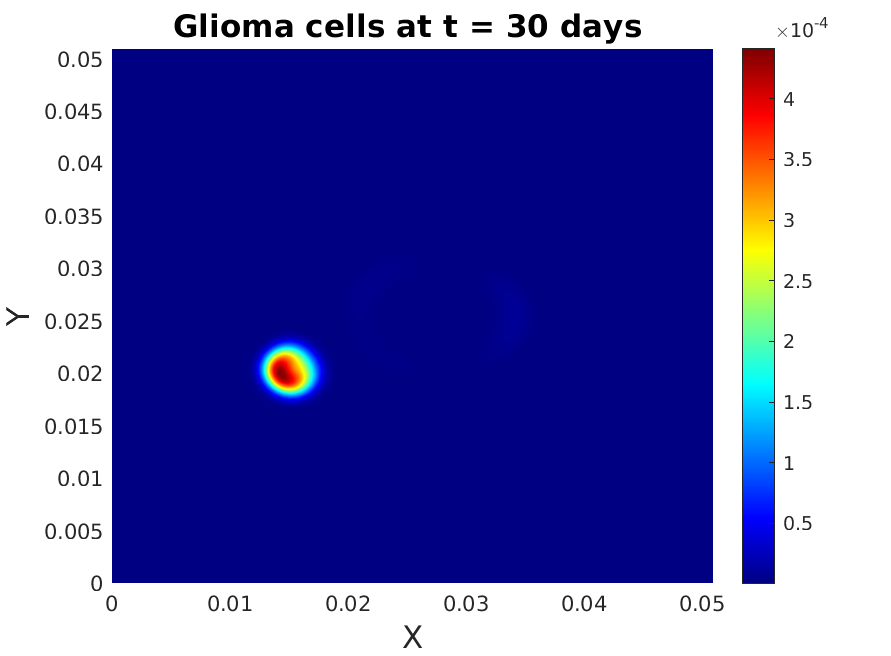}}\\
		\raisebox{1.2cm}{\rotatebox[origin=t]{90}{90 days}}{\includegraphics[width=1\linewidth, height = 3cm]{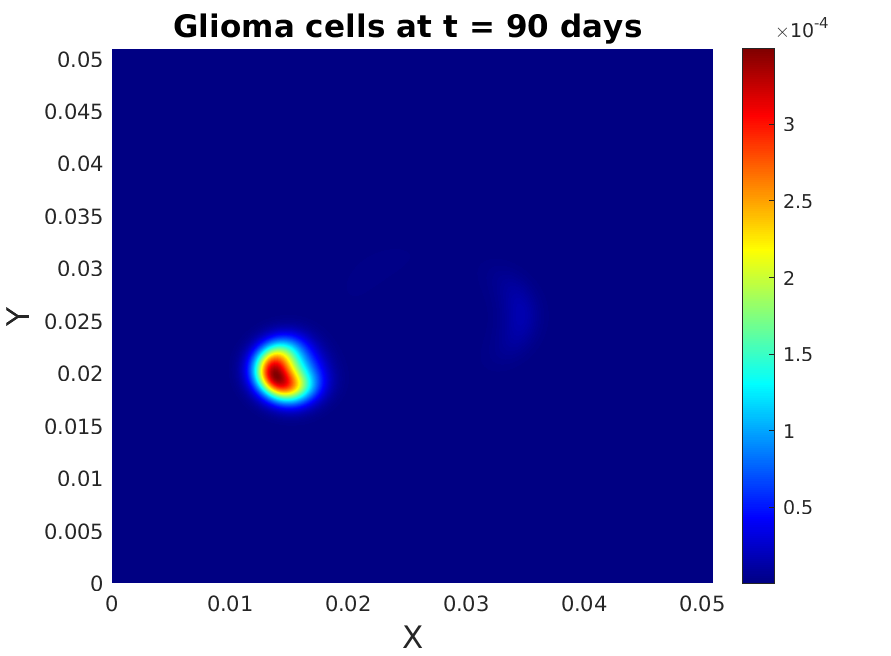}}\\
		\raisebox{1.2cm}{\rotatebox[origin=t]{90}{150 days}}{\includegraphics[width=1\linewidth, height = 3cm]{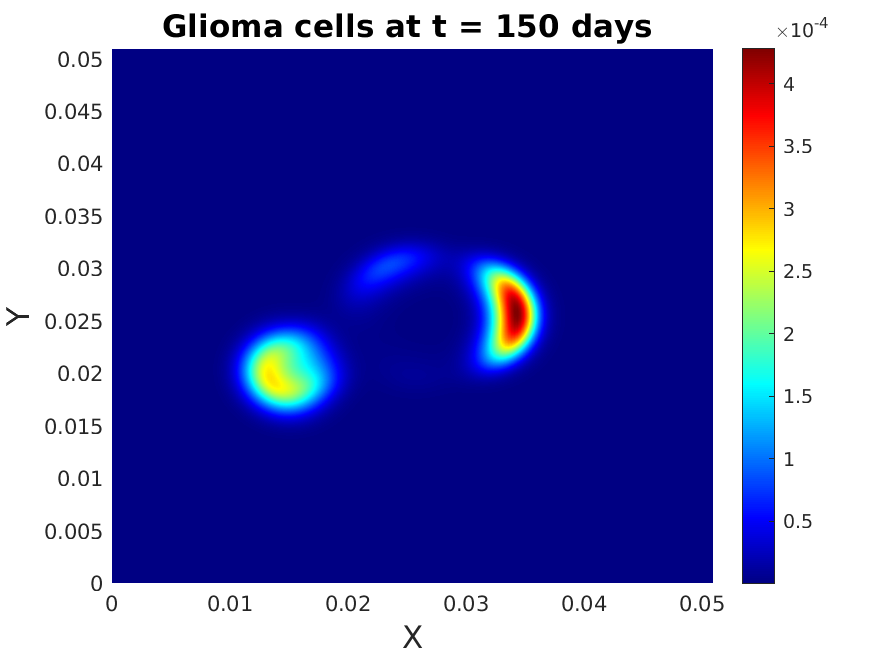}}\\
		\raisebox{1.2cm}{\rotatebox[origin=t]{90}{210 days}}{\includegraphics[width=1\linewidth, height = 3cm]{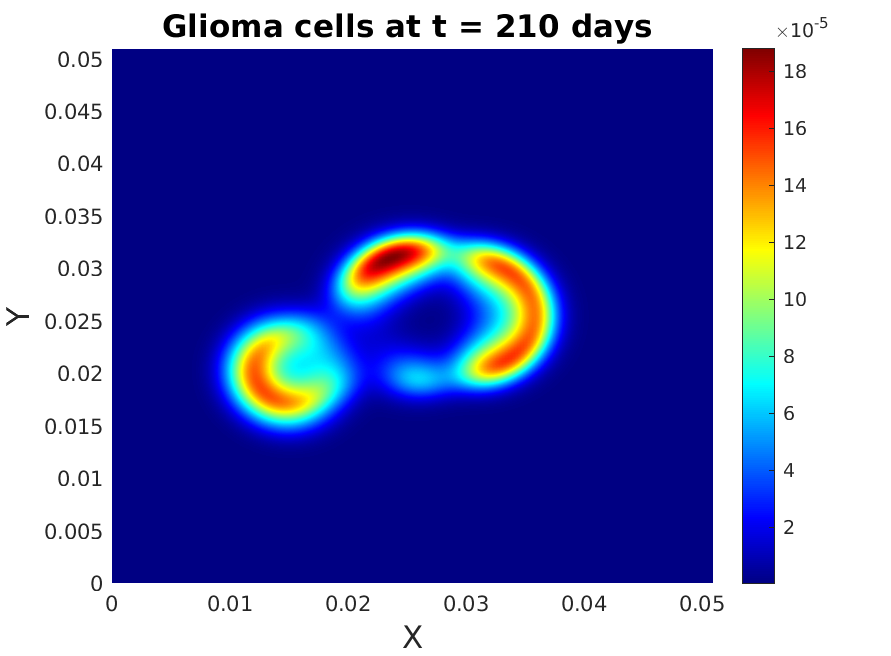}}\\
		\raisebox{1.2cm}{\rotatebox[origin=t]{90}{300 days}}{\includegraphics[width=1\linewidth, height = 3cm]{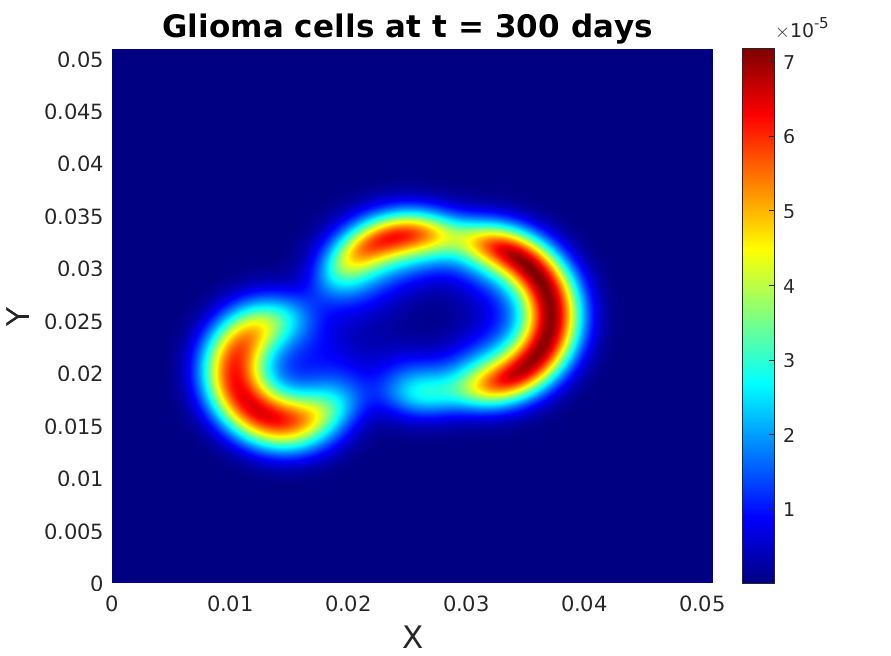}}\\
		\raisebox{1.2cm}{\rotatebox[origin=t]{90}{360days}}{\includegraphics[width=1\linewidth, height = 3cm]{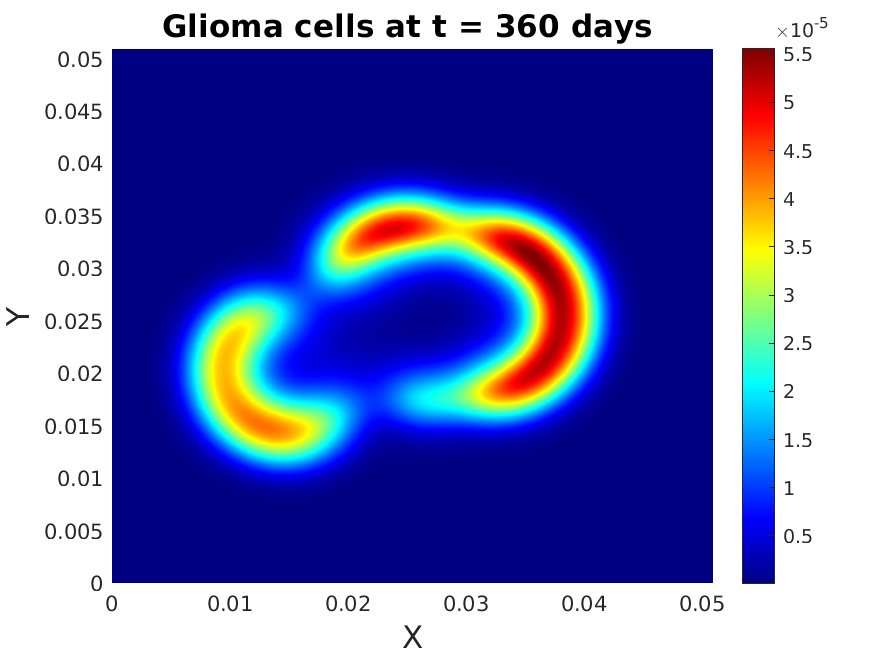}}\\
		\raisebox{1.2cm}{\rotatebox[origin=t]{90}{450 days}}{\includegraphics[width=1\linewidth, height = 3cm]{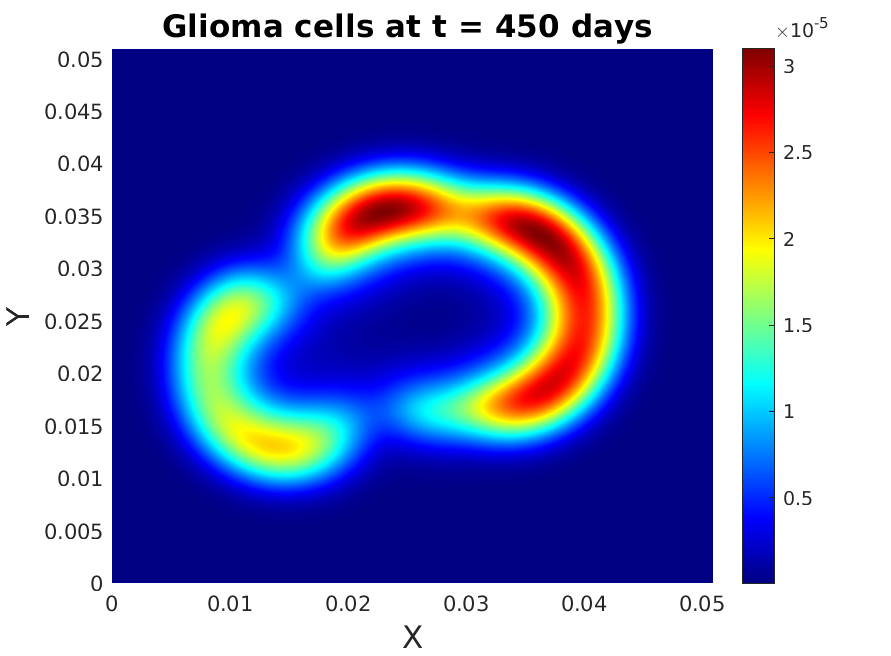}}%
		\subcaption{Glioma cells}
	\end{minipage}%
	\hspace{0.2cm}
	\begin{minipage}[hstb]{.24\linewidth}
		{\includegraphics[width=1\linewidth, height = 3cm]{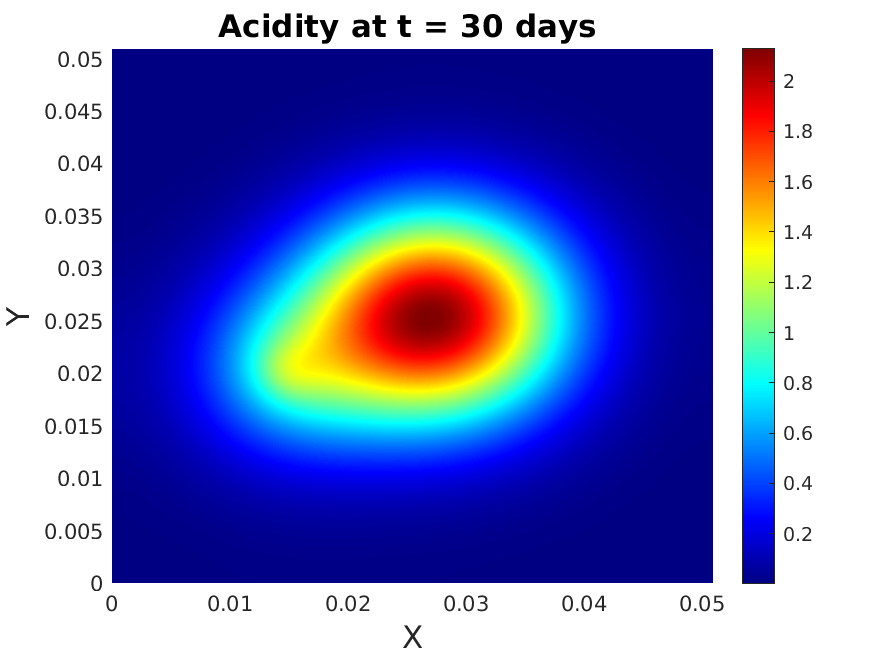}}\\
		{\includegraphics[width=1\linewidth, height = 3cm]{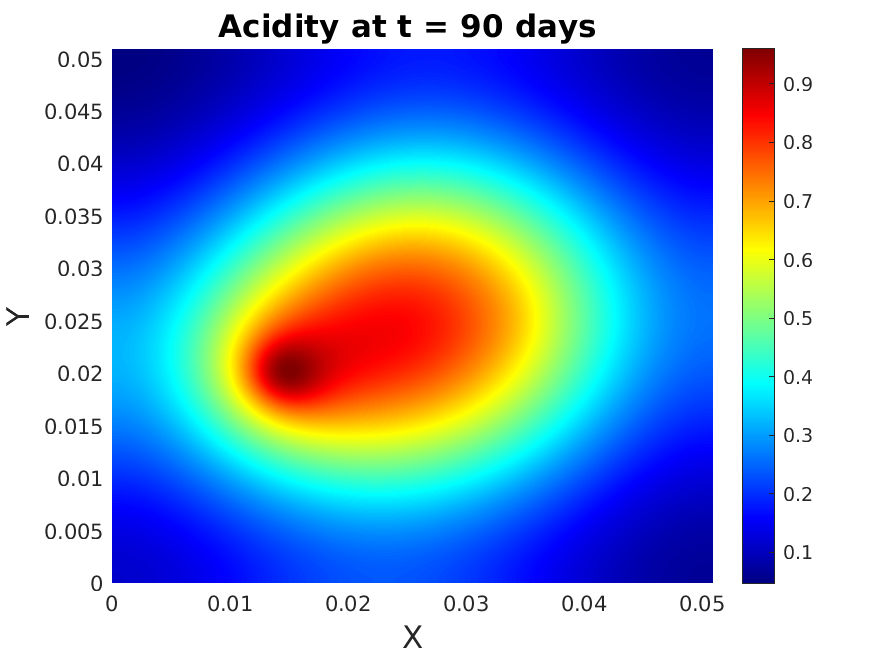}}\\
		{\includegraphics[width=1\linewidth, height = 3cm]{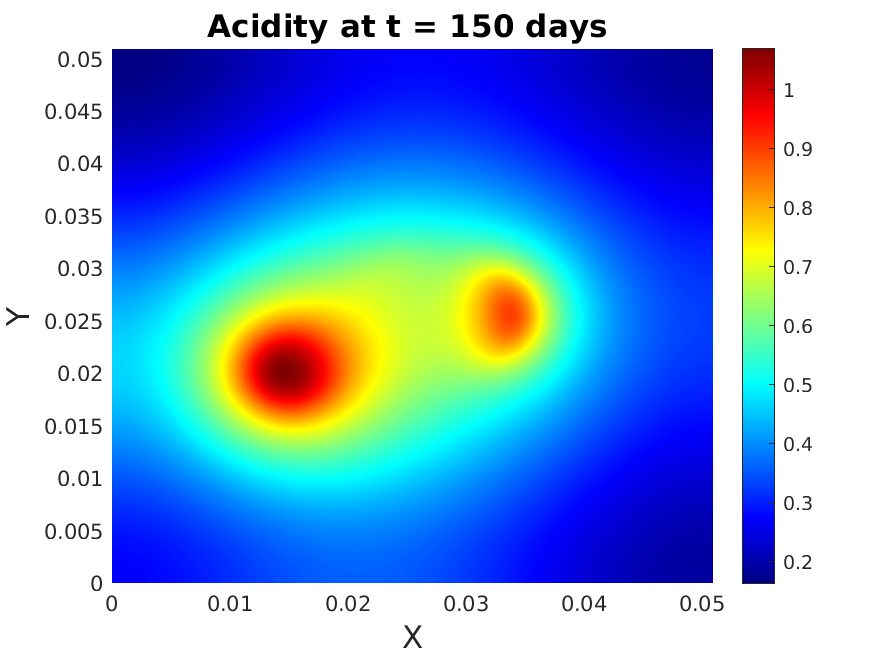}}\\
		{\includegraphics[width=1\linewidth, height = 3cm]{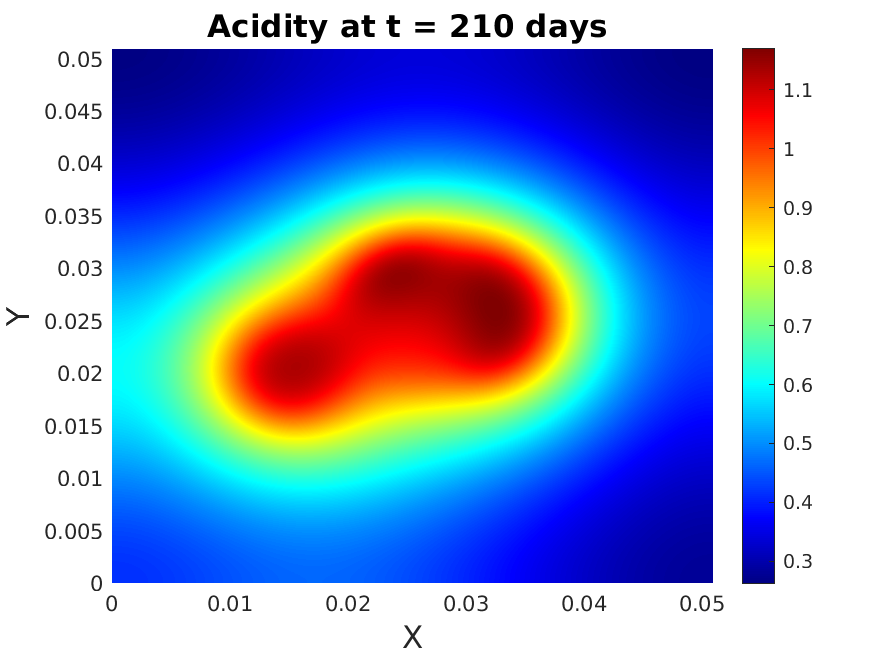}}\\
		{\includegraphics[width=1\linewidth, height = 3cm]{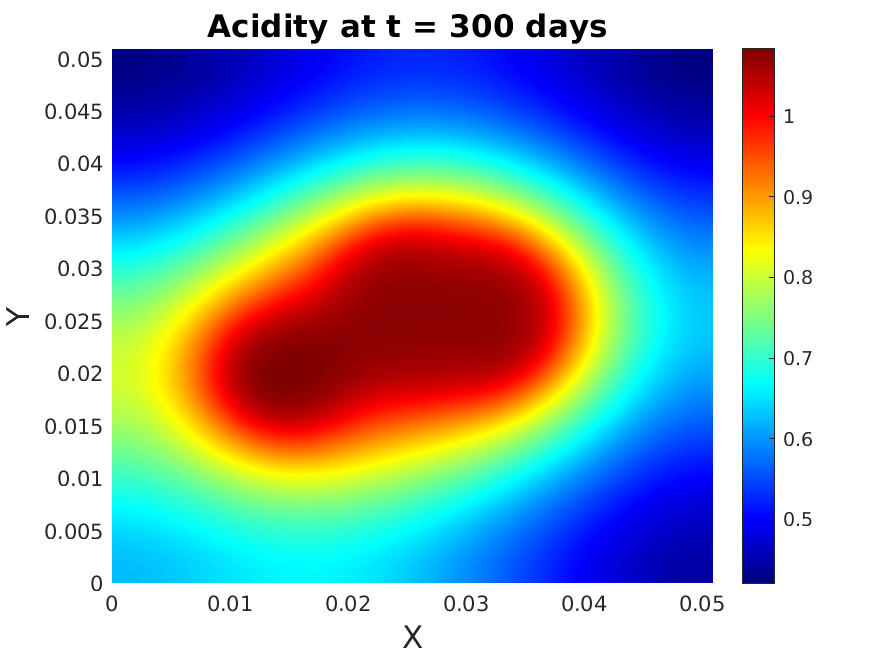}}\\
		{\includegraphics[width=1\linewidth, height = 3cm]{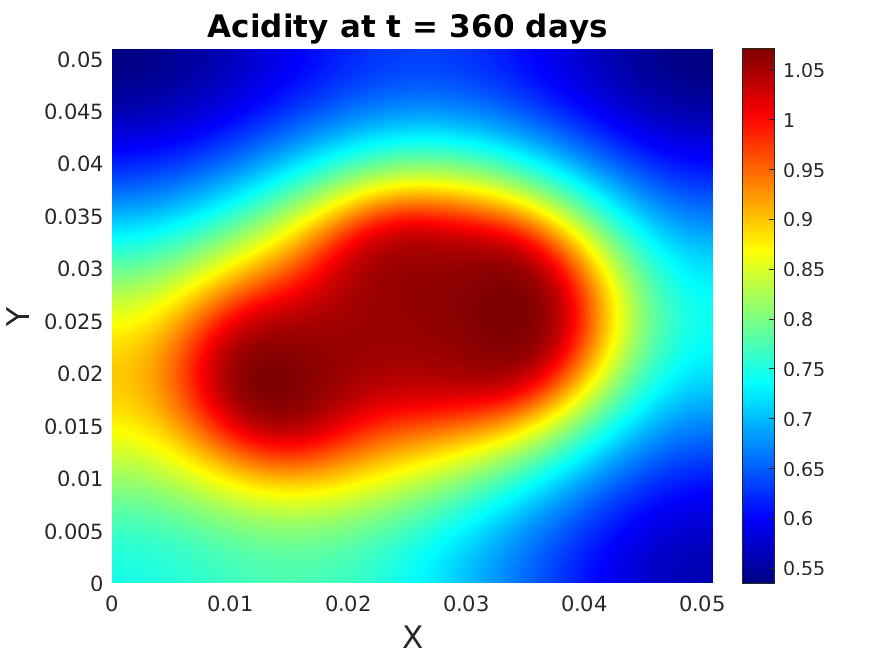}}\\
		{\includegraphics[width=1\linewidth, height = 3cm]{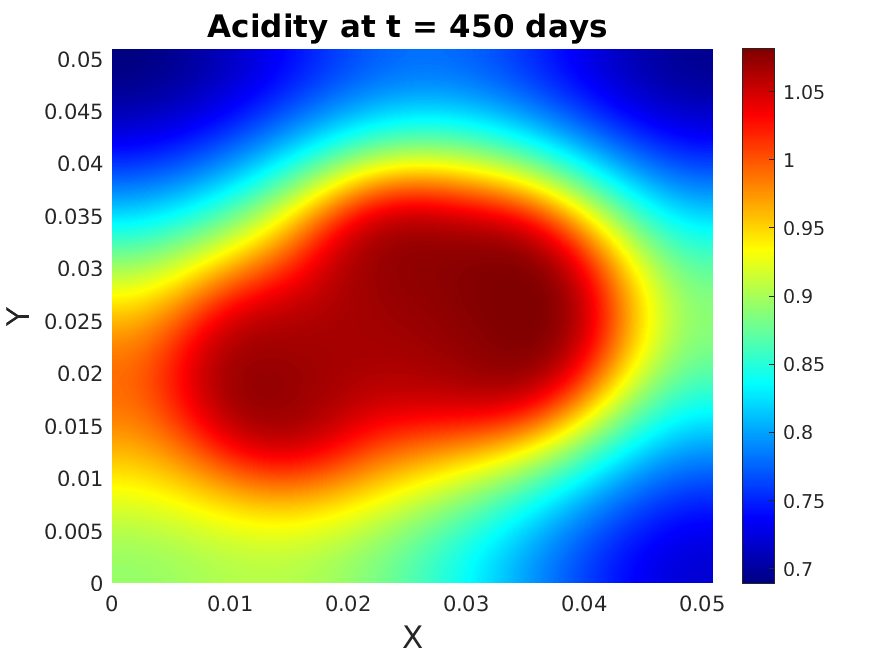}}%
		\subcaption{Acidity}
	\end{minipage}%
	\hspace{0.01cm}
	\begin{minipage}[hstb]{.24\linewidth}
		{\includegraphics[width=1\linewidth, height = 3cm]{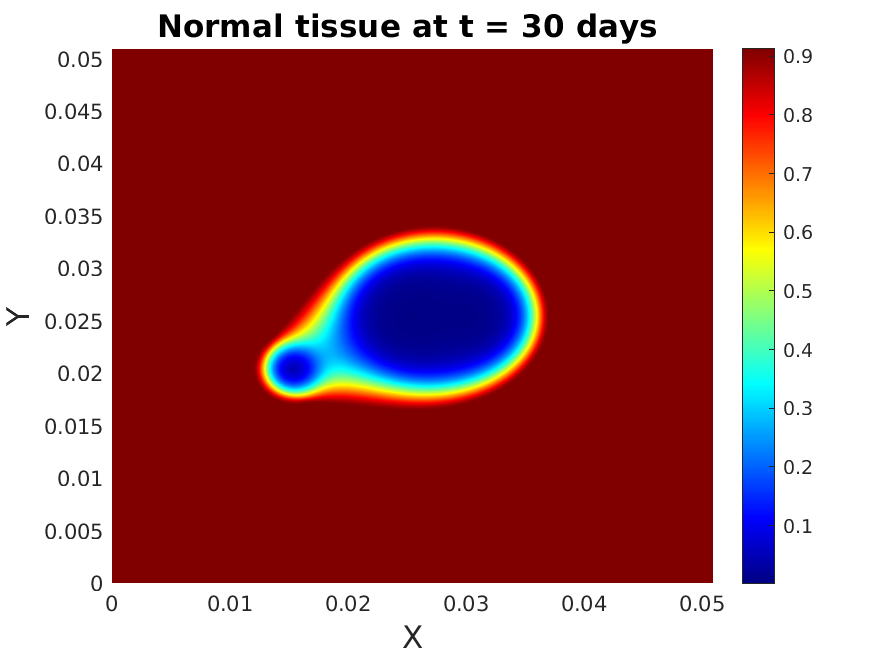}}\\
		{\includegraphics[width=1\linewidth, height = 3cm]{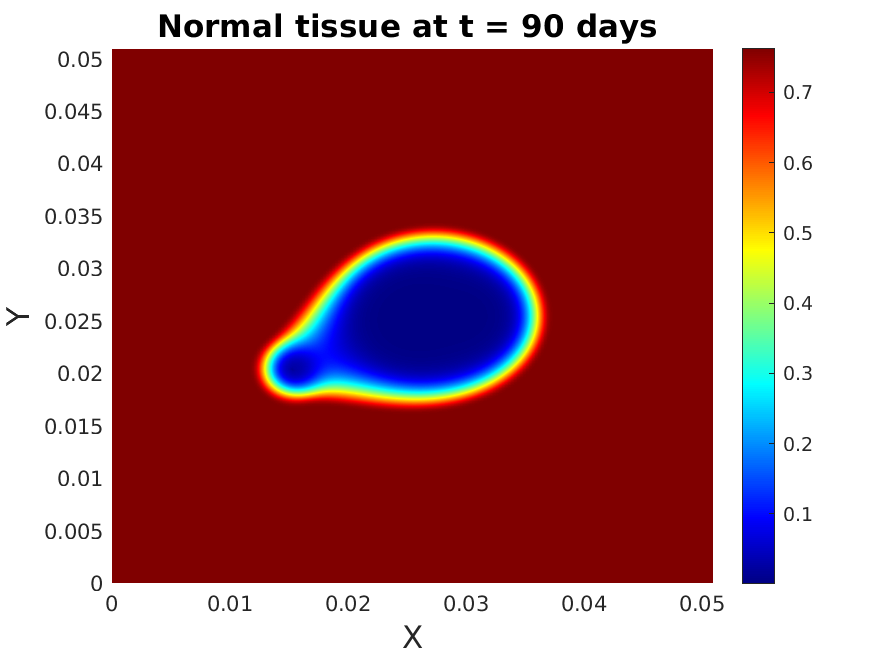}}\\
		{\includegraphics[width=1\linewidth, height = 3cm]{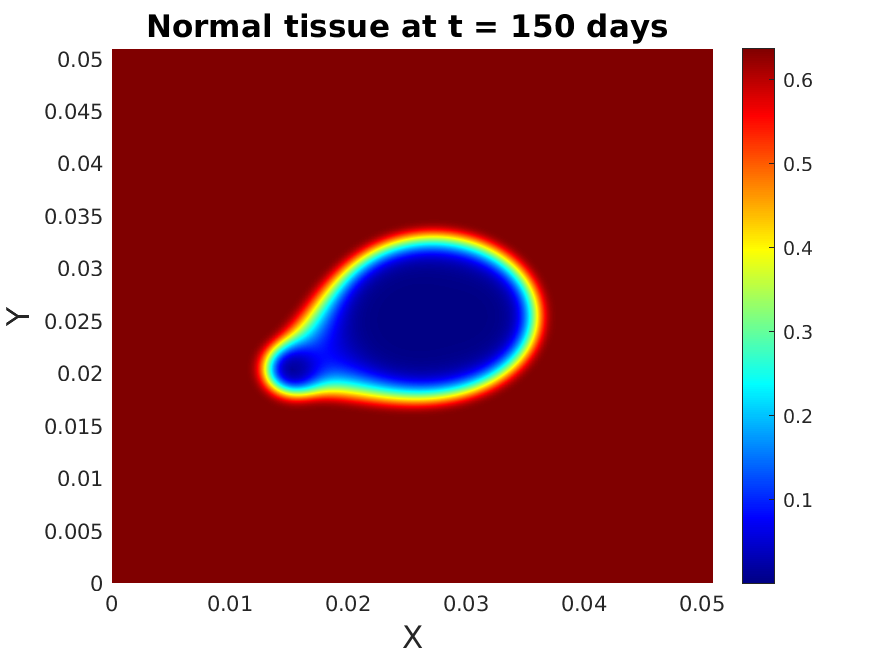}}\\
		{\includegraphics[width=1\linewidth, height = 3cm]{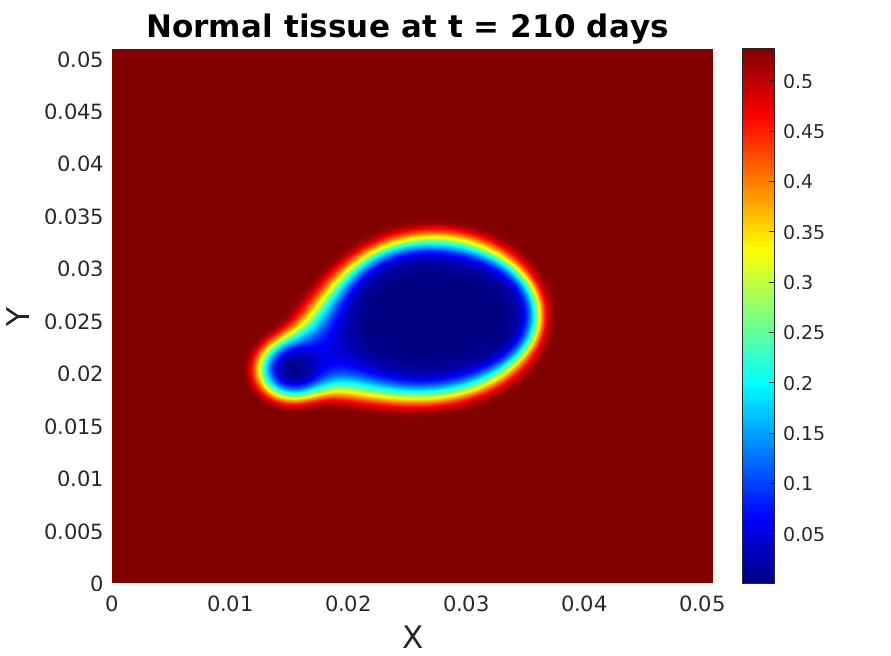}}\\
		{\includegraphics[width=1\linewidth, height = 3cm]{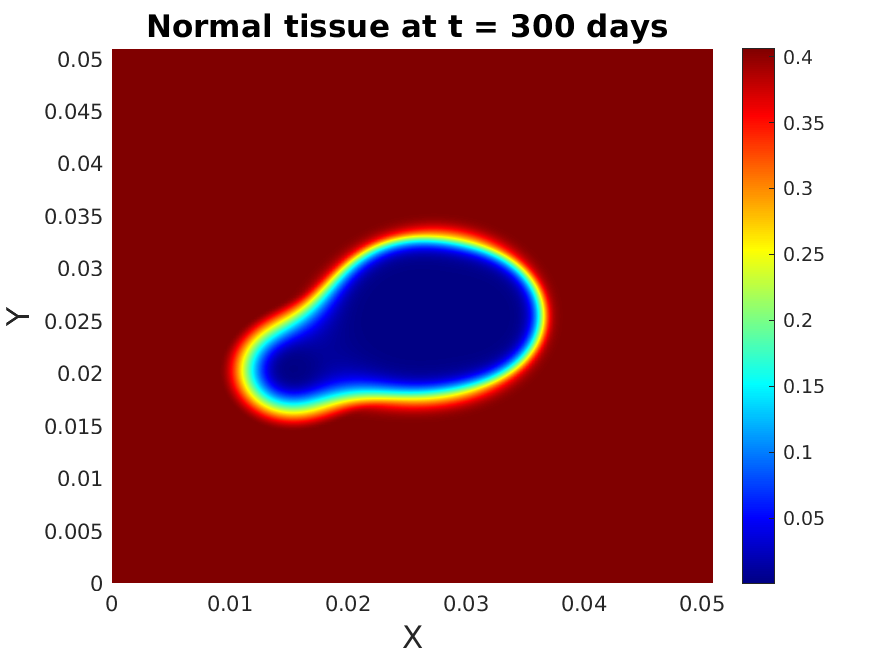}}\\
		{\includegraphics[width=1\linewidth, height = 3cm]{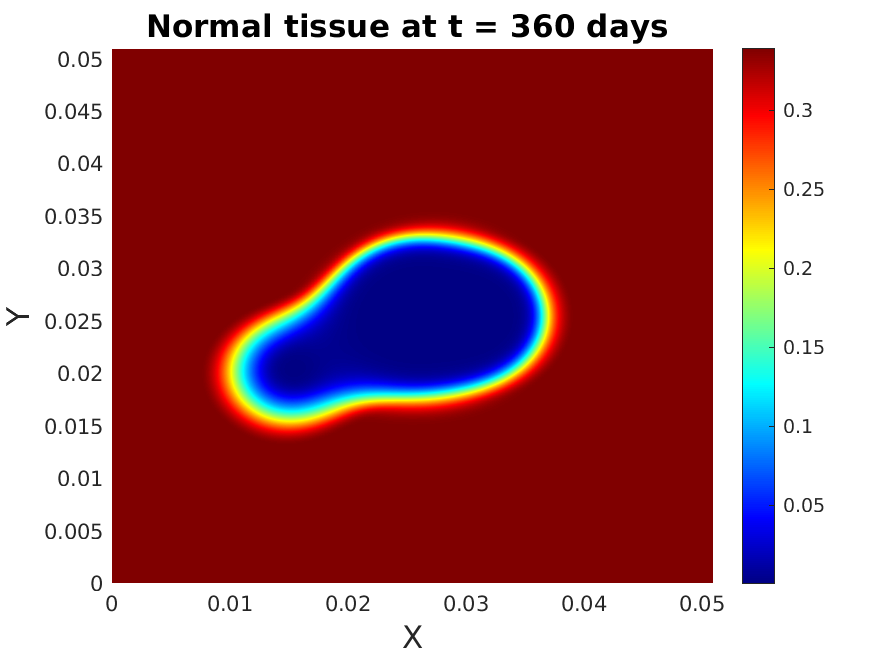}}\\
		{\includegraphics[width=1\linewidth, height = 3cm]{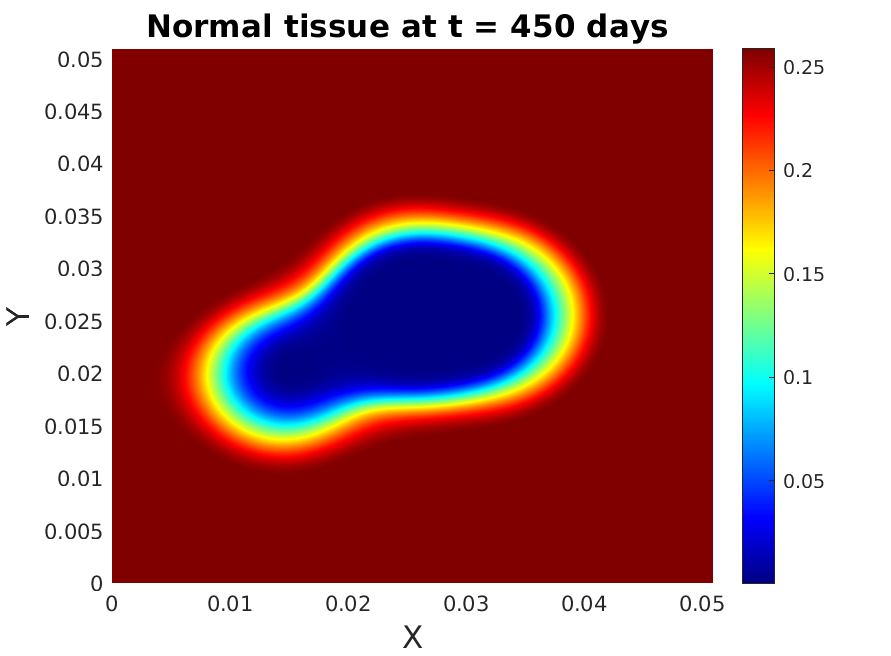}}%
		\subcaption{Normal tissue}
	\end{minipage}%
	\hspace{0.01cm}
	\begin{minipage}[hstb]{.24\linewidth}
		{\includegraphics[width=1\linewidth, height = 3cm]{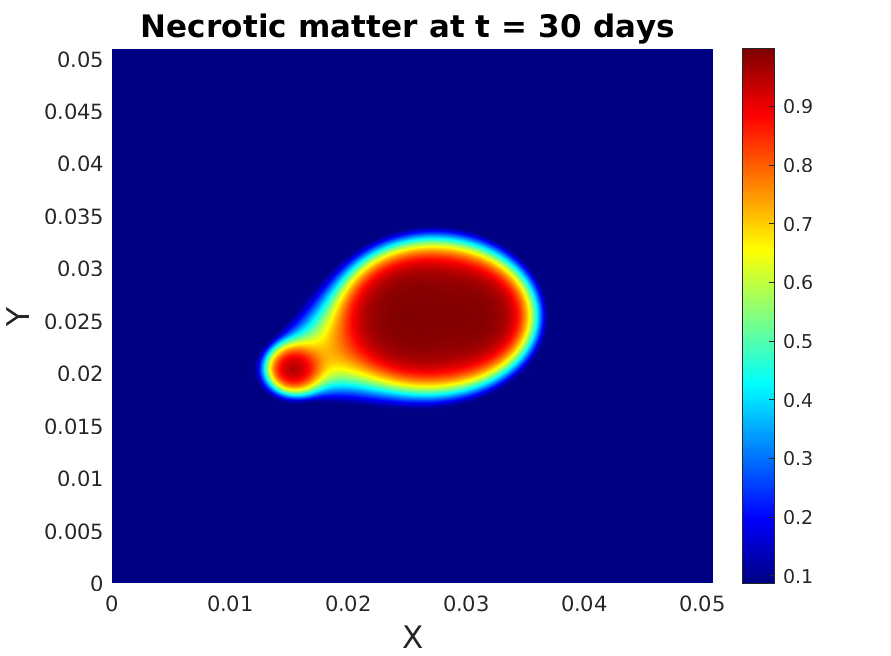}}\\
		{\includegraphics[width=1\linewidth, height = 3cm]{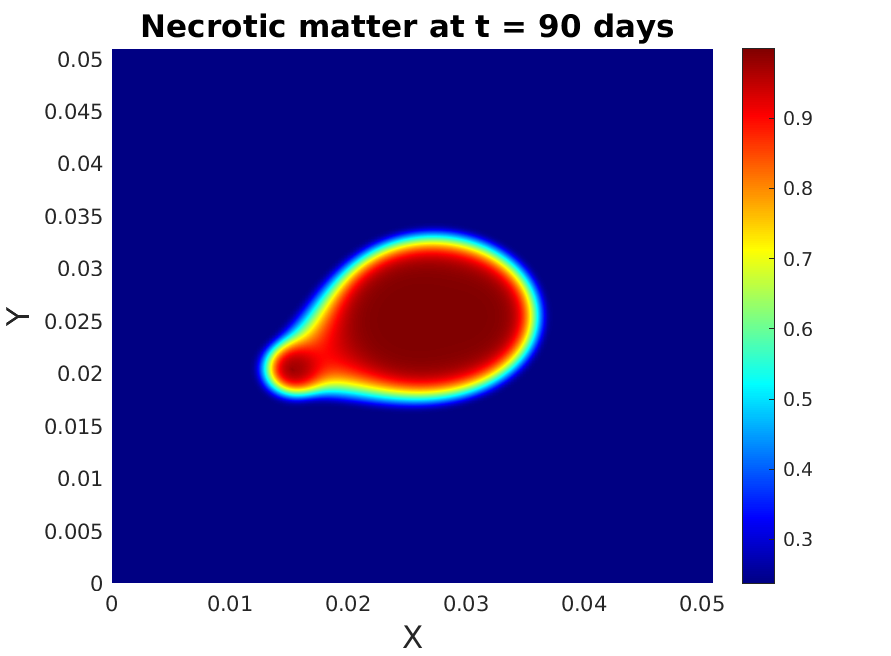}}\\
		{\includegraphics[width=1\linewidth, height = 3cm]{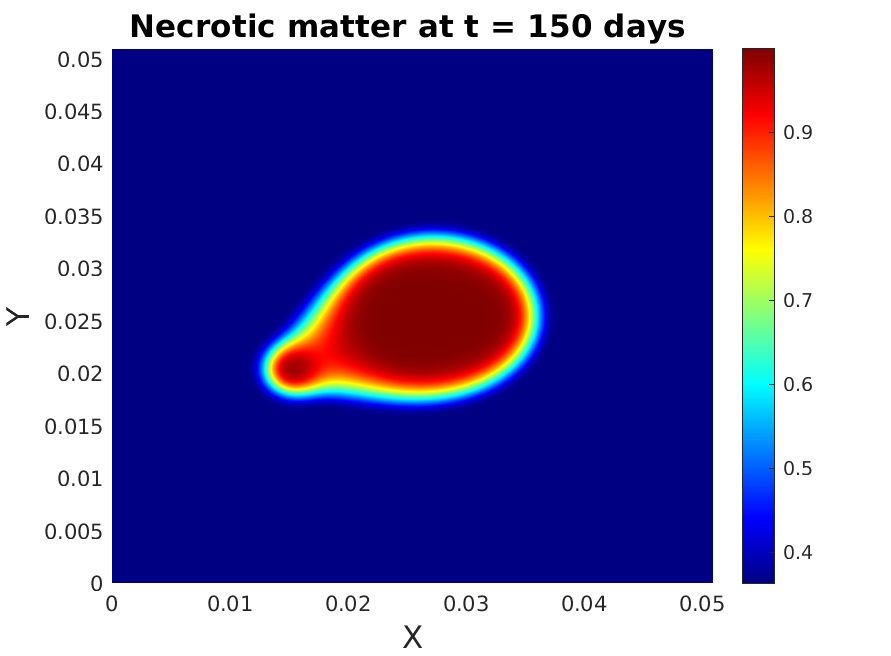}}\\
		{\includegraphics[width=1\linewidth, height = 3cm]{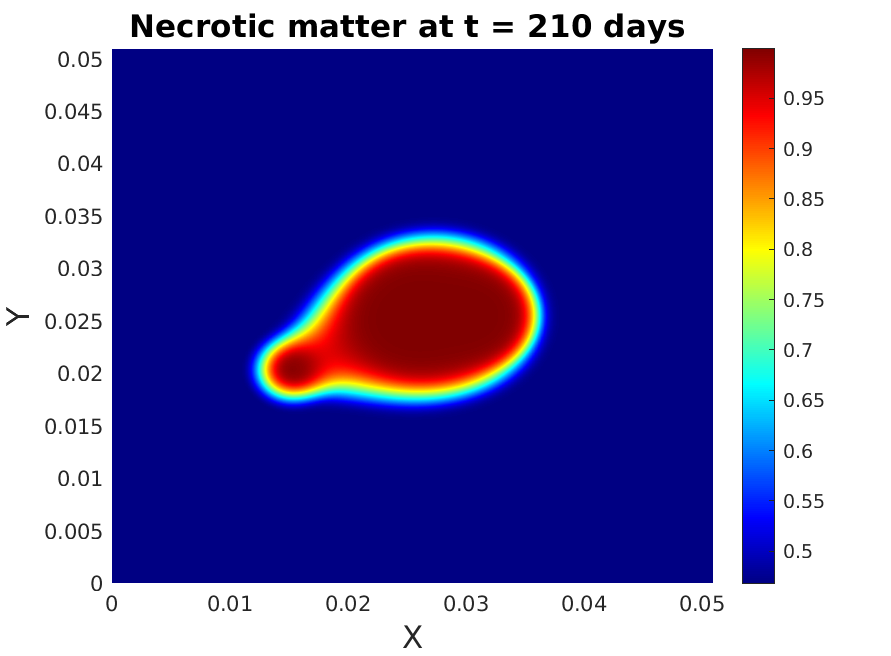}}\\
		{\includegraphics[width=1\linewidth, height = 3cm]{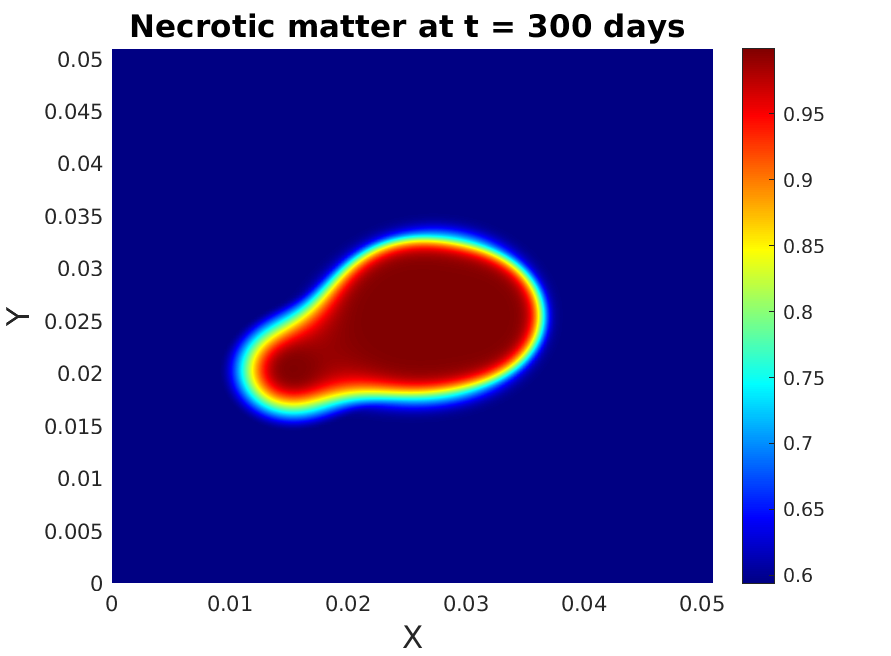}}\\
		{\includegraphics[width=1\linewidth, height = 3cm]{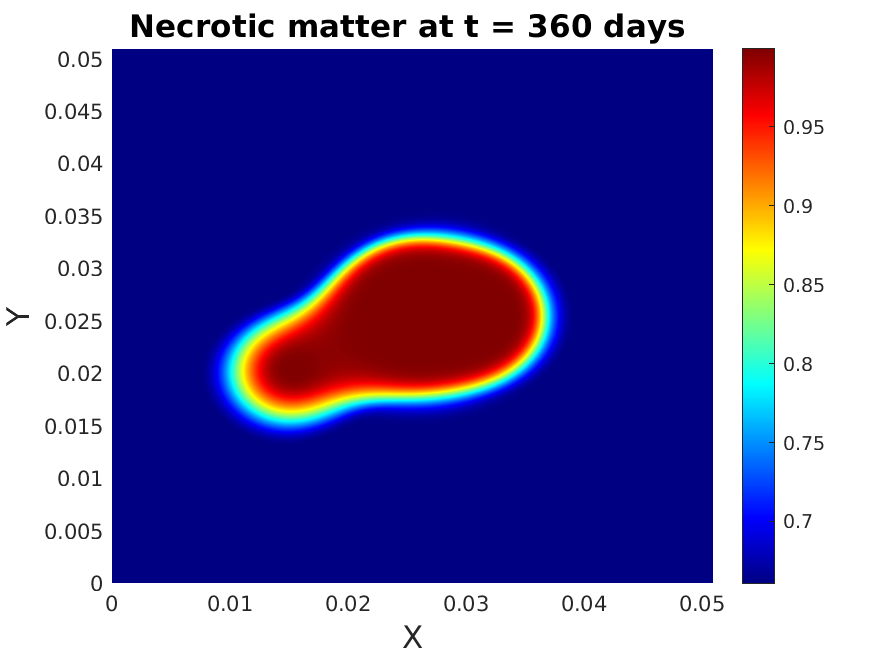}}\\
		{\includegraphics[width=1\linewidth, height = 3cm]{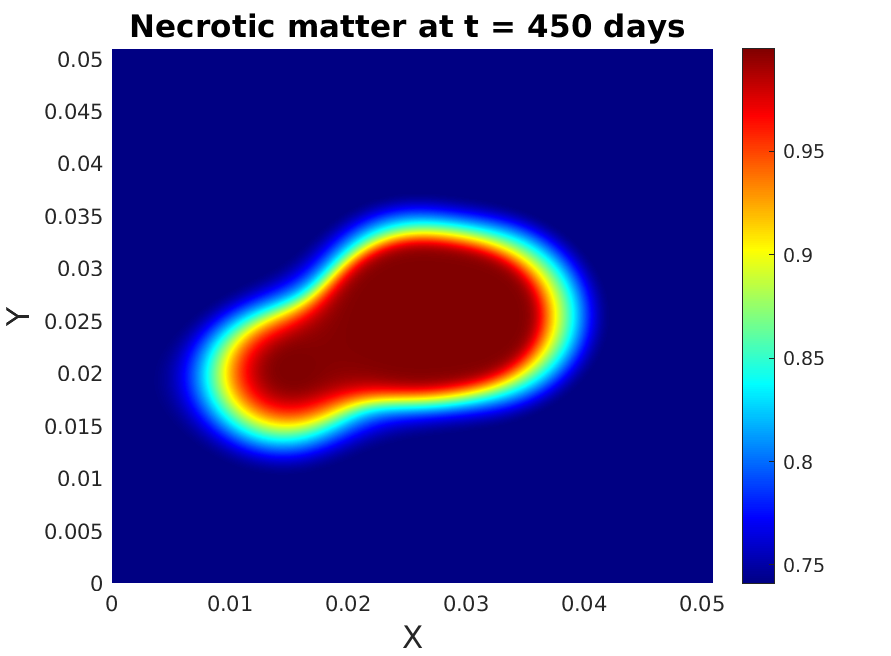}}%
		\subcaption{Necrotic matter}
	\end{minipage}%
	\caption{Computed $u_c$, $h$, $u_m$, and $u_n$ from \eqref{model2} and \eqref{EqProt} with initial conditions \eqref{eq:ICs_mpm} and with no-flux boundary conditions.}
	\label{fig:simulations_mpm}
\end{figure}

To investigate the effect of acidity we compare the previous results with those obtained for $\chi=0$, i.e. in the absence of additional pressure on glioma cells due to acidosis, recall \eqref{tau3}. The corresponding plots are shown in Figure \ref{fig:comp-acid}. The effect of repellent pH-taxis can be seen in the difference between the solution components in two situations (with $\chi>0$, with pH-taxis and acid-influenced motility, respectively with $\chi=0$). The pseudopalisades develop faster and get thicker and more extensive in the former case, accompanied by enhanced acidity in the proximity of higher cancer cell densities and enhanced normal tissue depletion inside the ring-shaped structures and around the glioma aggregates, which also triggers a higher volume fraction for the necrotic matter. These observations are in line with those obtained in \cite{kumar2020multiscale} by another modeling approach: the repellent pH-taxis is not the driving factor of pseudopalisade formation, but it leads to wider such structures.

\begin{figure}[!htbp]
	\centering
	\begin{minipage}[hstb]{.24\linewidth}
		\raisebox{1.2cm}{\rotatebox[origin=t]{90}{30 days}}{\includegraphics[width=1\linewidth, height = 3cm]{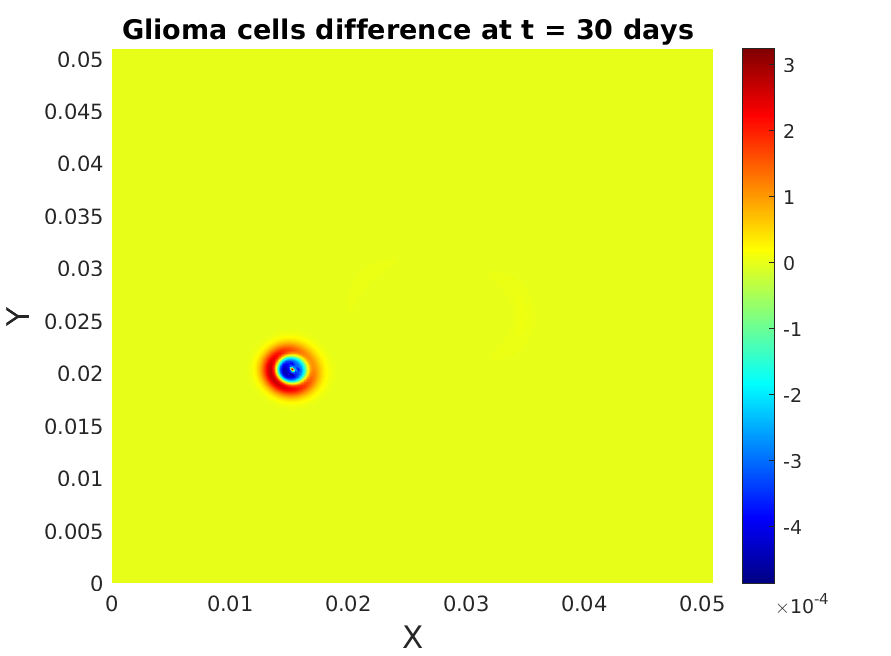}}\\
		\raisebox{1.2cm}{\rotatebox[origin=t]{90}{90 days}}{\includegraphics[width=1\linewidth, height = 3cm]{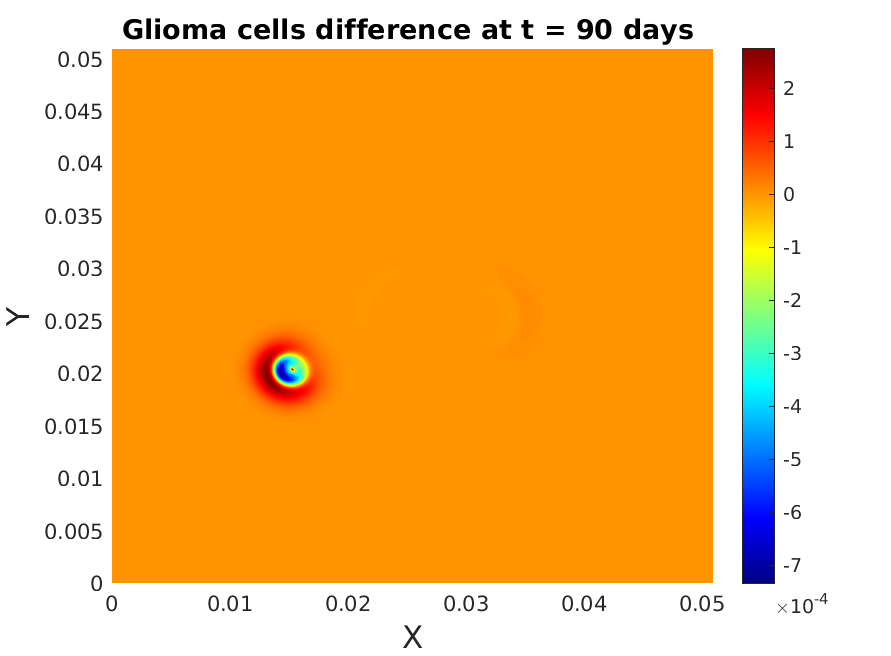}}\\
		\raisebox{1.2cm}{\rotatebox[origin=t]{90}{150 days}}{\includegraphics[width=1\linewidth, height = 3cm]{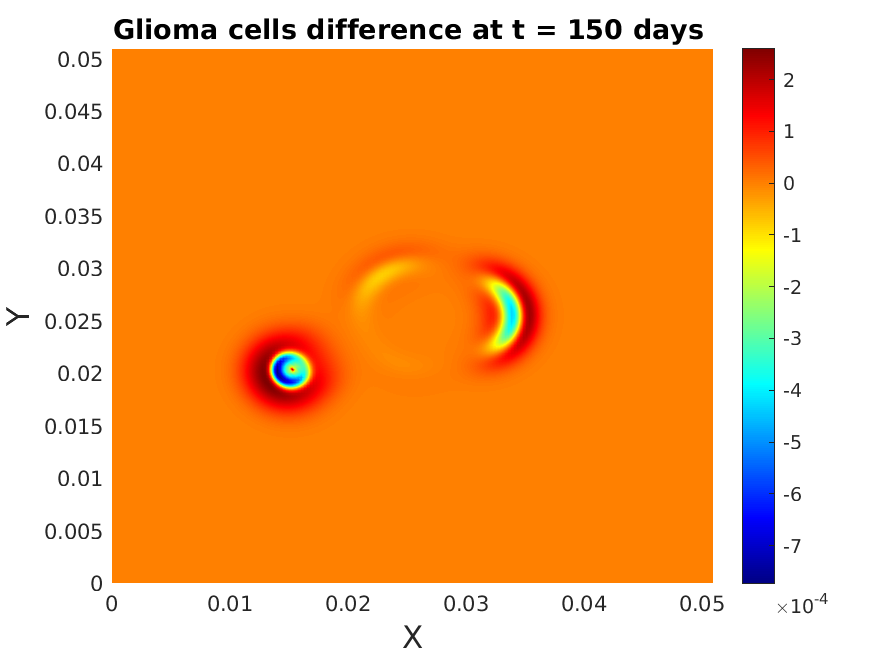}}\\
		\raisebox{1.2cm}{\rotatebox[origin=t]{90}{210 days}}{\includegraphics[width=1\linewidth, height = 3cm]{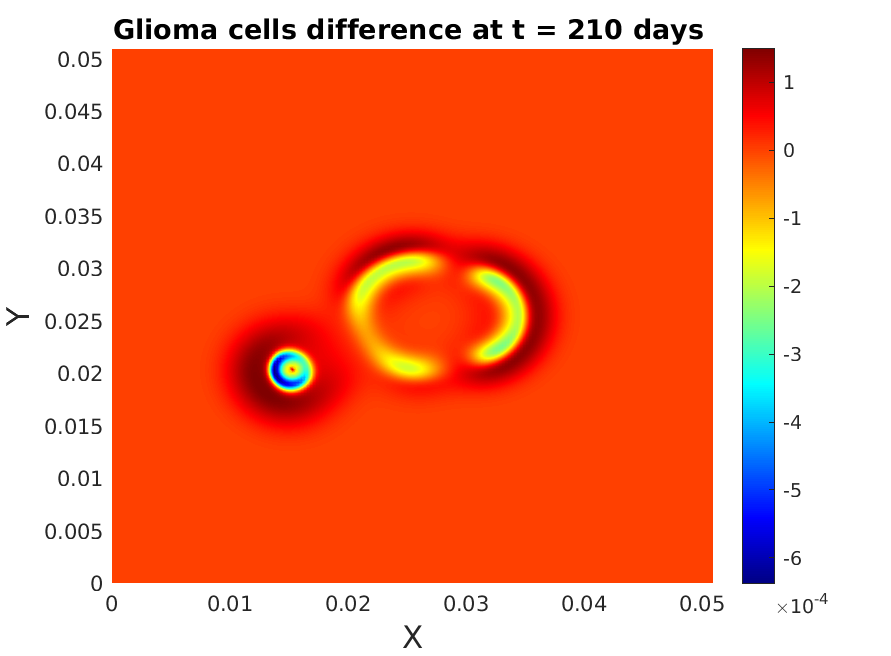}}\\
		\raisebox{1.2cm}{\rotatebox[origin=t]{90}{300 days}}{\includegraphics[width=1\linewidth, height = 3cm]{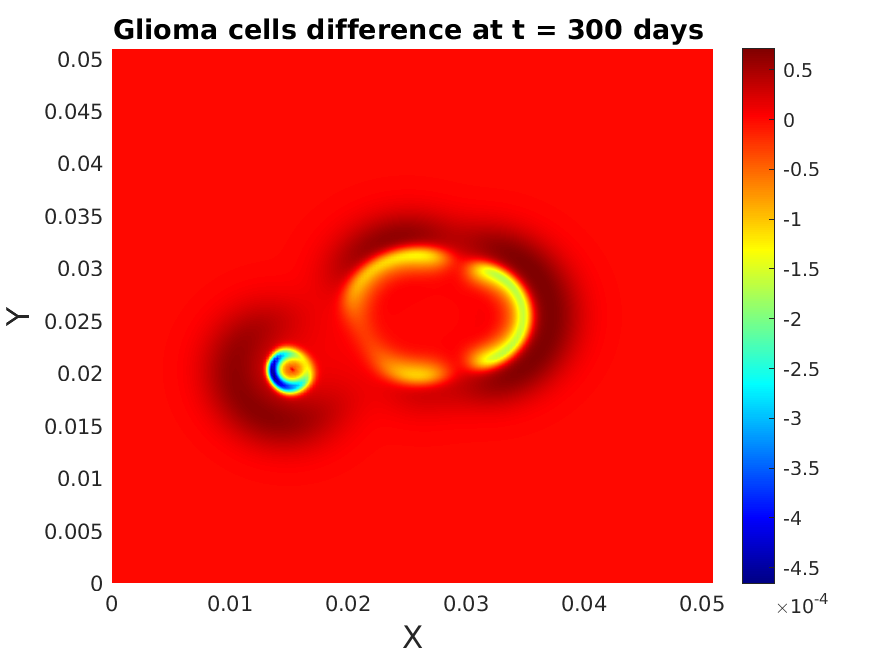}}\\
		\raisebox{1.2cm}{\rotatebox[origin=t]{90}{360 days}}{\includegraphics[width=1\linewidth, height = 3cm]{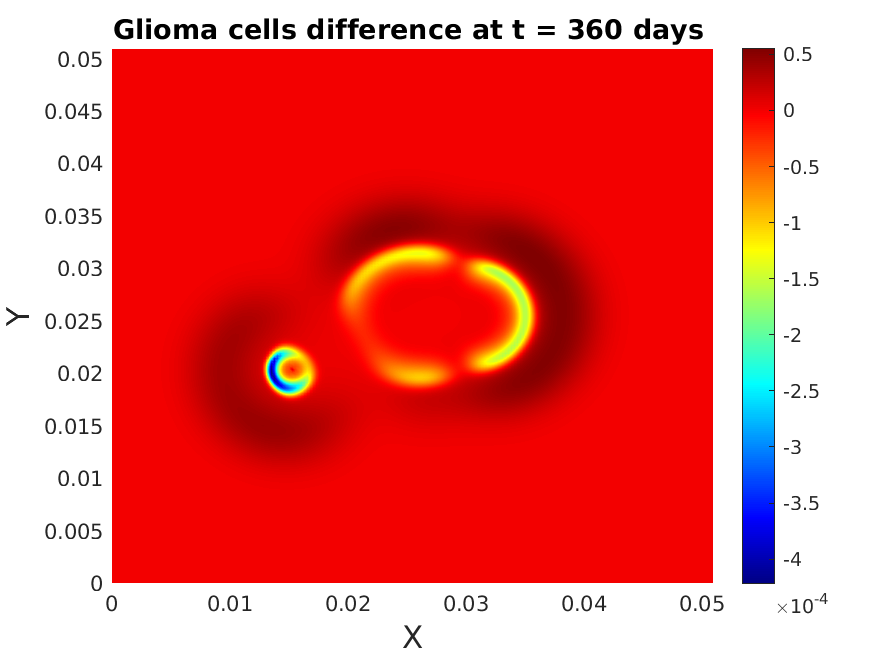}}\\
		\raisebox{1.2cm}{\rotatebox[origin=t]{90}{450 days}}{\includegraphics[width=1\linewidth, height = 3cm]{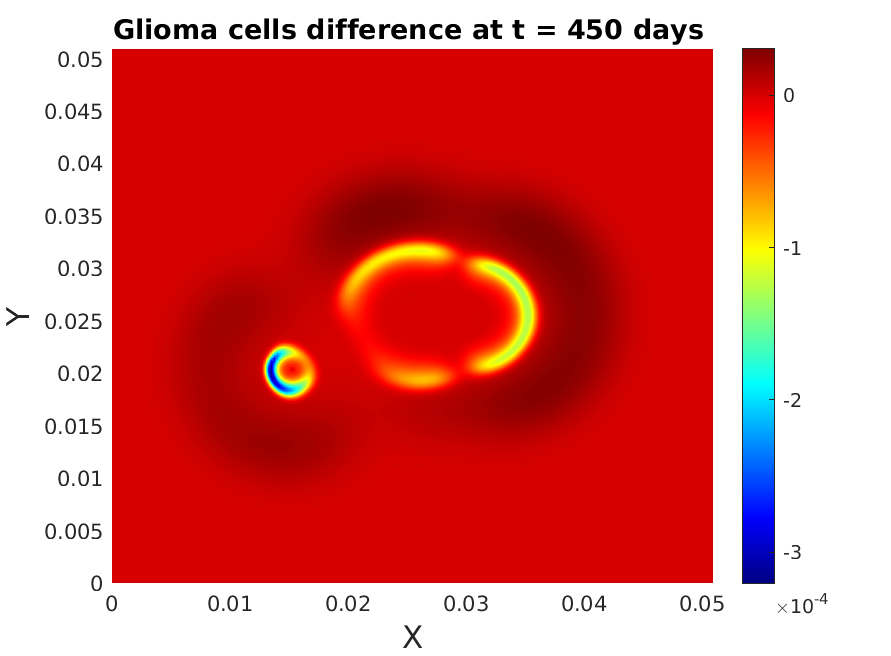}}%
		\subcaption{\scriptsize Glioma cells difference}
	\end{minipage}%
	\hspace{0.2cm}
	\begin{minipage}[hstb]{.24\linewidth}
		{\includegraphics[width=1\linewidth, height = 3cm]{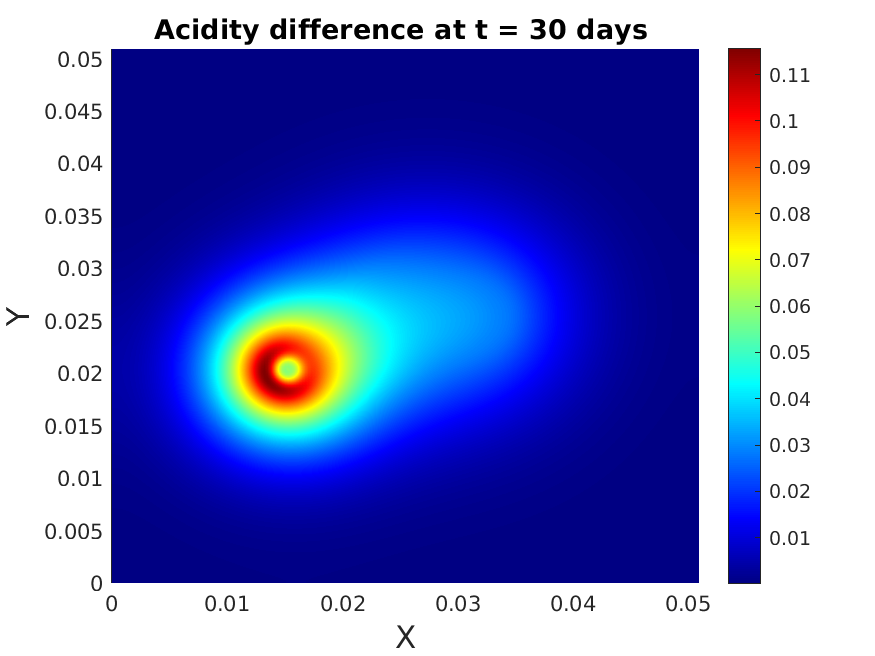}}\\
		{\includegraphics[width=1\linewidth, height = 3cm]{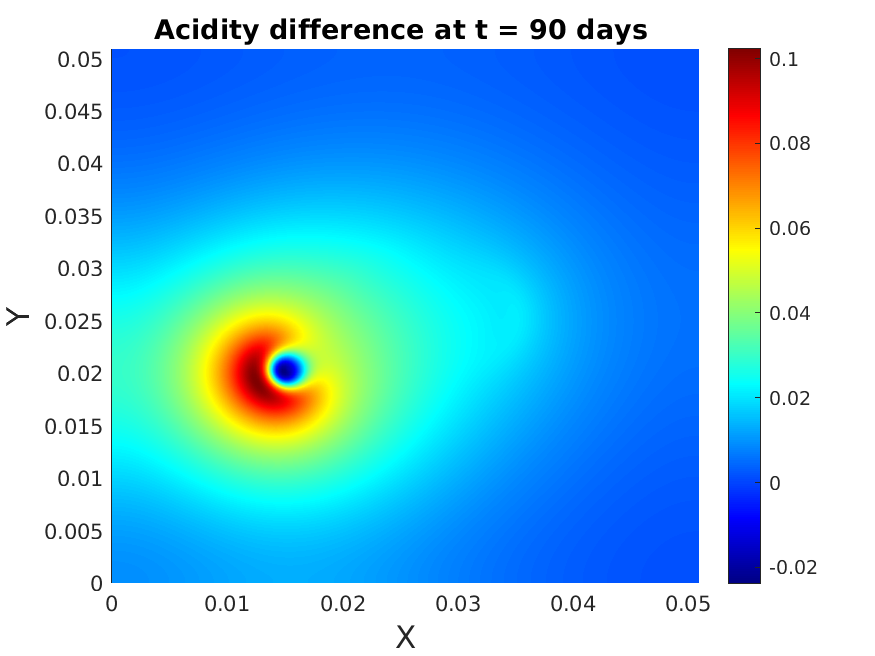}}\\
		{\includegraphics[width=1\linewidth, height = 3cm]{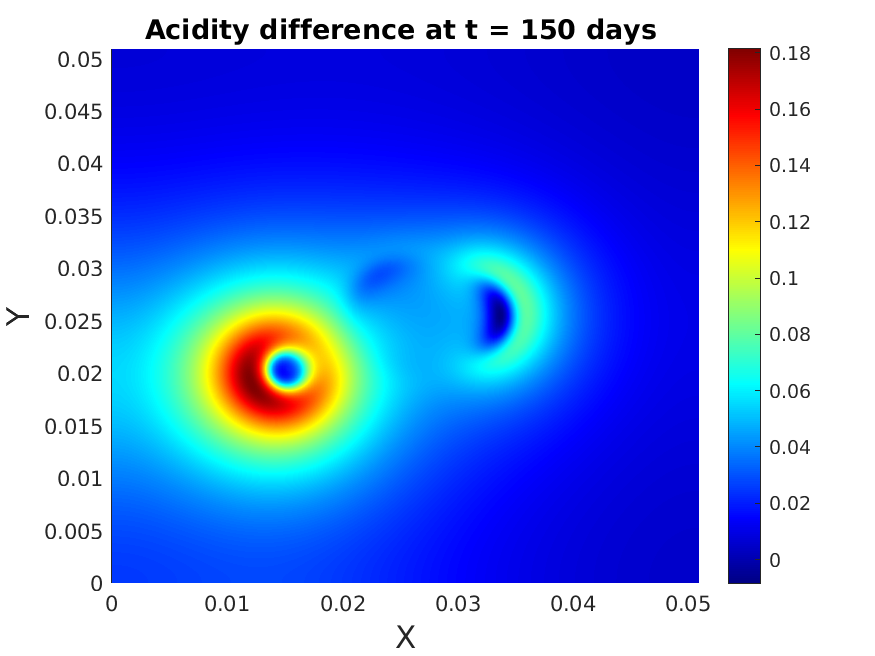}}\\
		{\includegraphics[width=1\linewidth, height = 3cm]{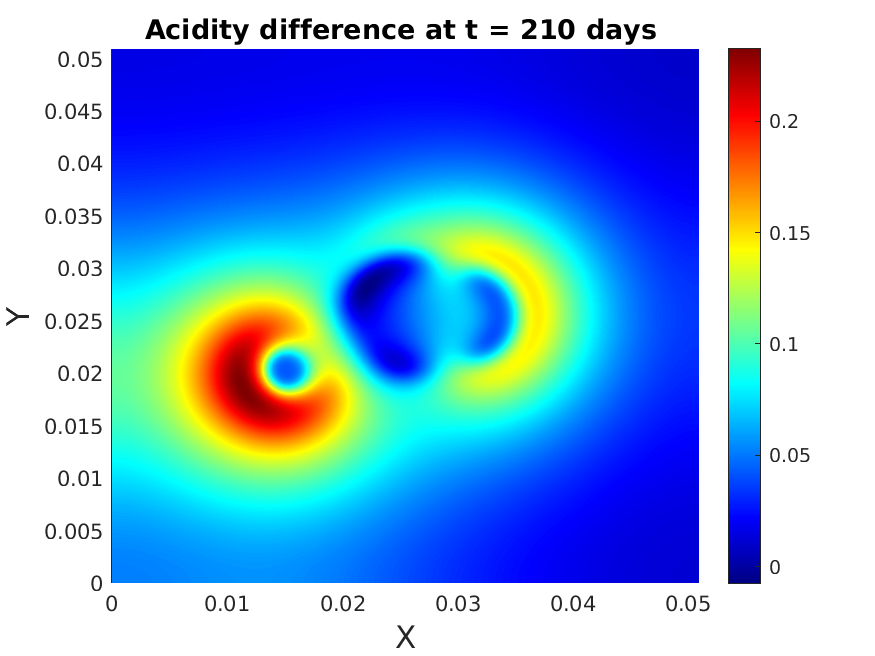}}\\
		{\includegraphics[width=1\linewidth, height = 3cm]{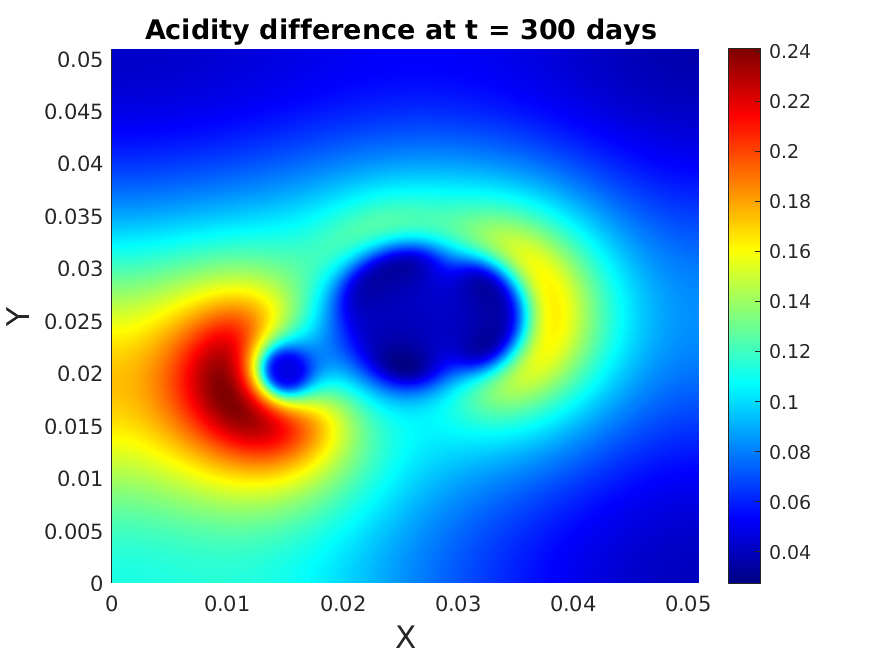}}\\
		{\includegraphics[width=1\linewidth, height = 3cm]{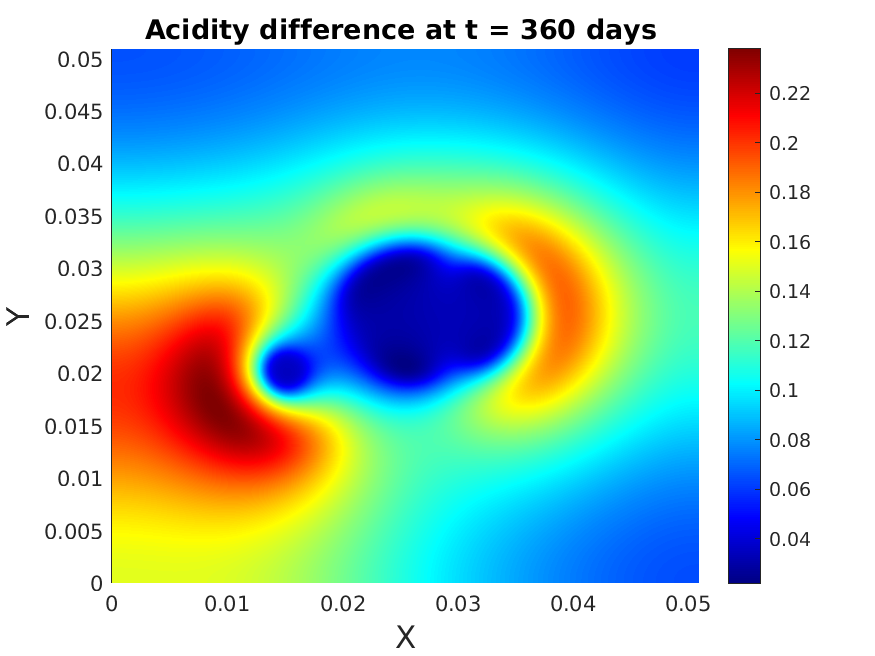}}\\
		{\includegraphics[width=1\linewidth, height = 3cm]{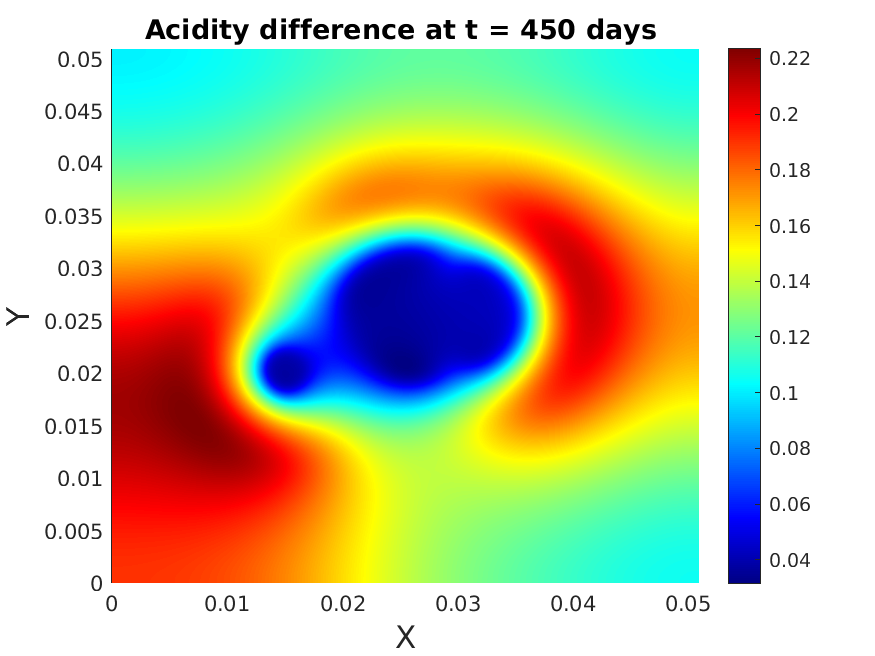}}%
		\subcaption{\scriptsize Acidity difference}
	\end{minipage}%
	\hspace{0.01cm}
	\begin{minipage}[hstb]{.24\linewidth}
		{\includegraphics[width=1\linewidth, height = 3cm]{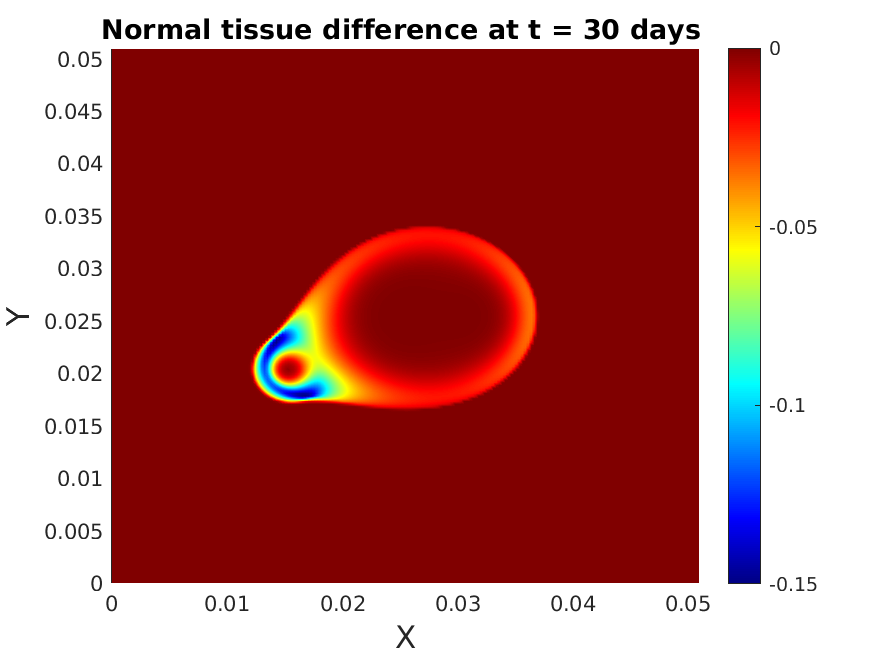}}\\
		{\includegraphics[width=1\linewidth, height = 3cm]{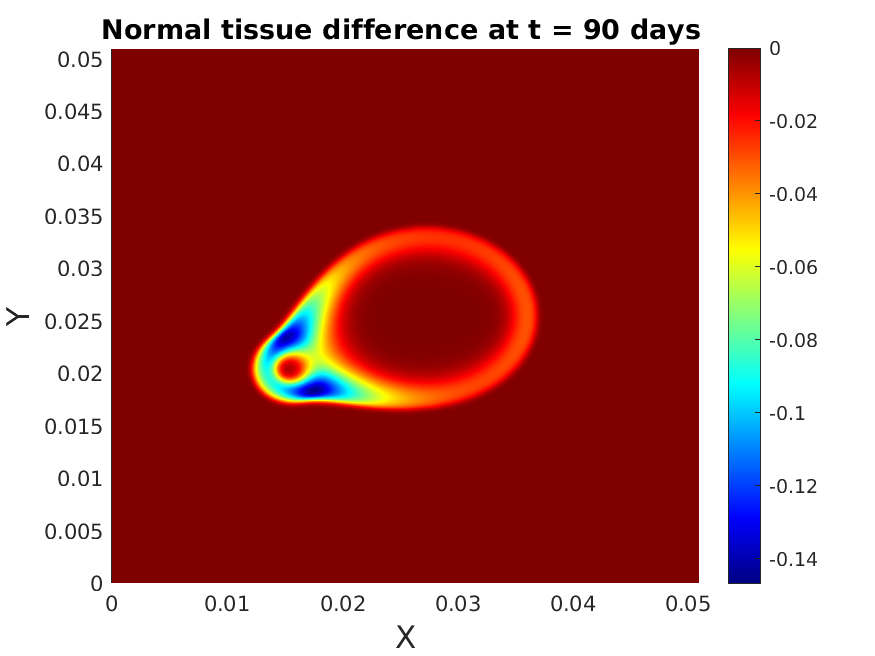}}\\
		{\includegraphics[width=1\linewidth, height = 3cm]{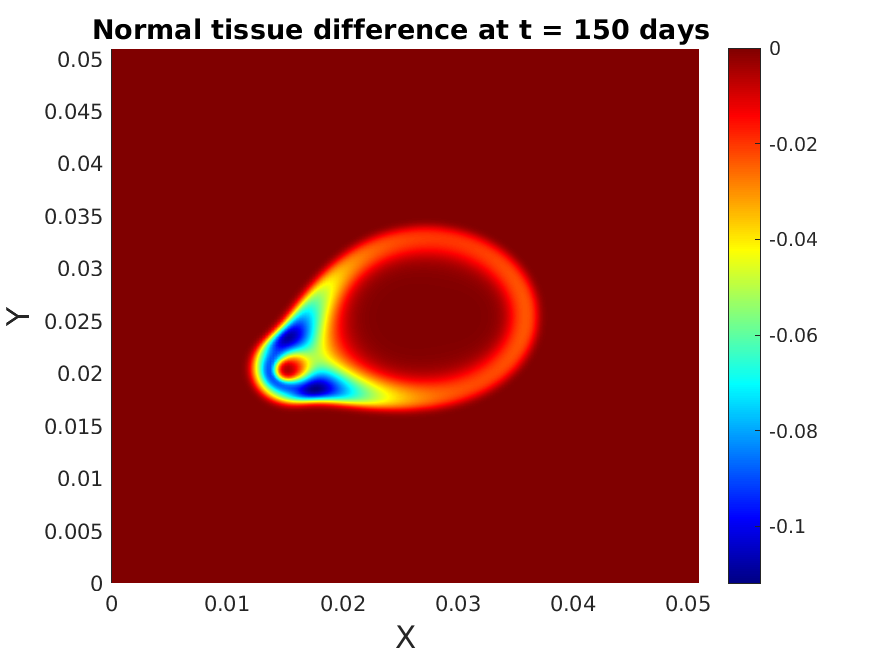}}\\
		{\includegraphics[width=1\linewidth, height = 3cm]{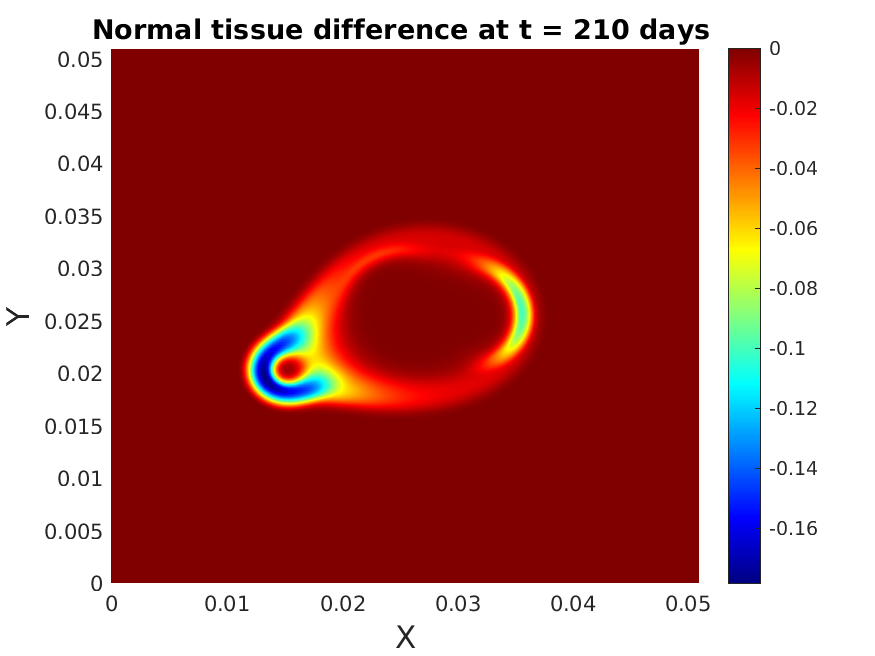}}\\
		{\includegraphics[width=1\linewidth, height = 3cm]{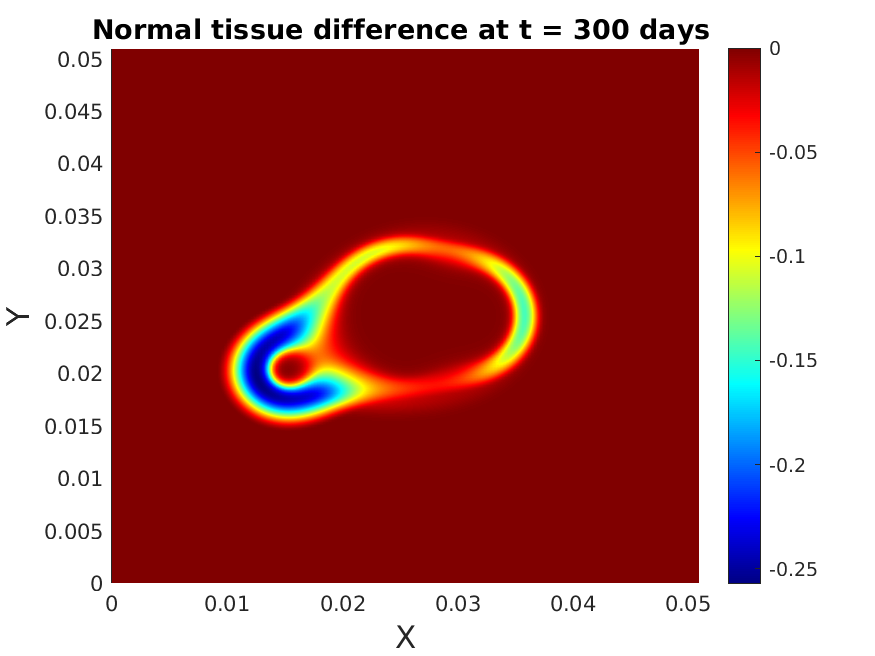}}\\
		{\includegraphics[width=1\linewidth, height = 3cm]{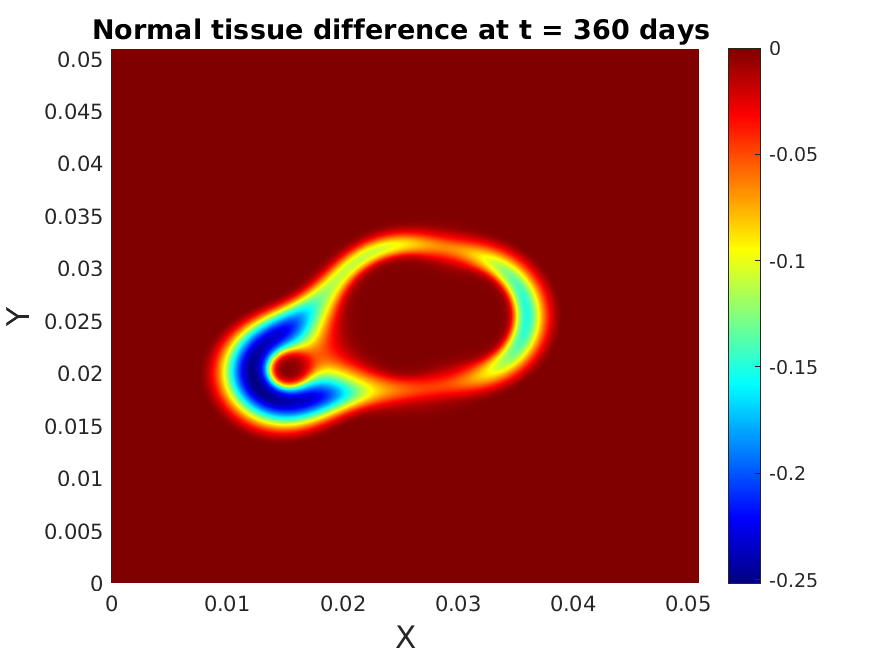}}\\
		{\includegraphics[width=1\linewidth, height = 3cm]{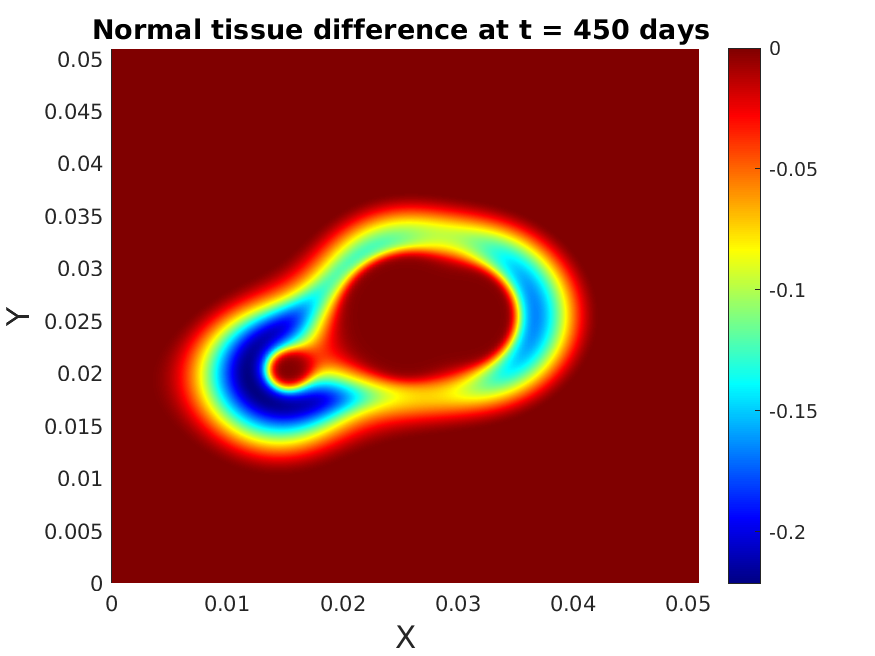}}%
		\subcaption{\scriptsize Normal tissue difference}
	\end{minipage}%
	\hspace{0.01cm}
	\begin{minipage}[hstb]{.24\linewidth}
		{\includegraphics[width=1\linewidth, height = 3cm]{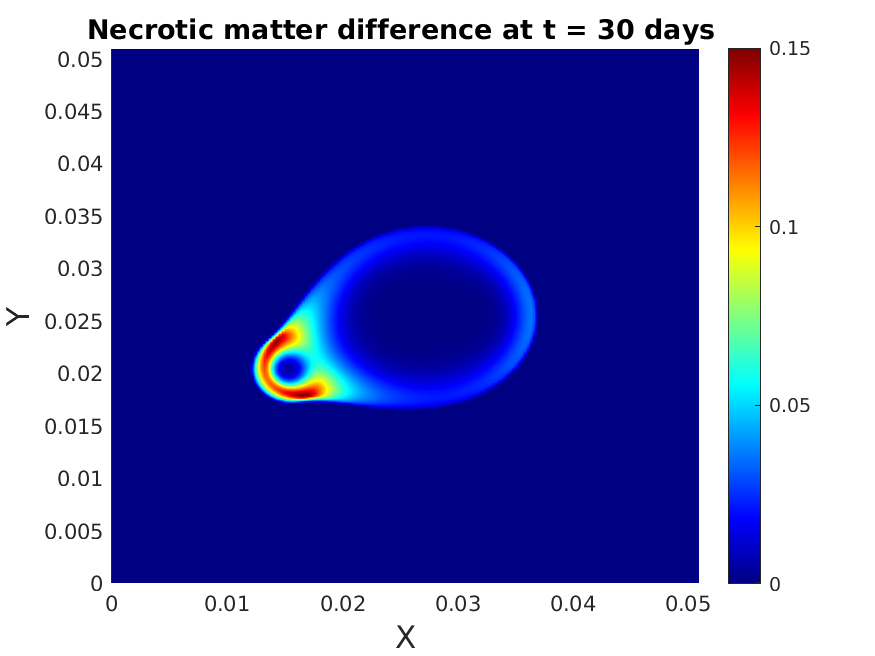}}\\
		{\includegraphics[width=1\linewidth, height = 3cm]{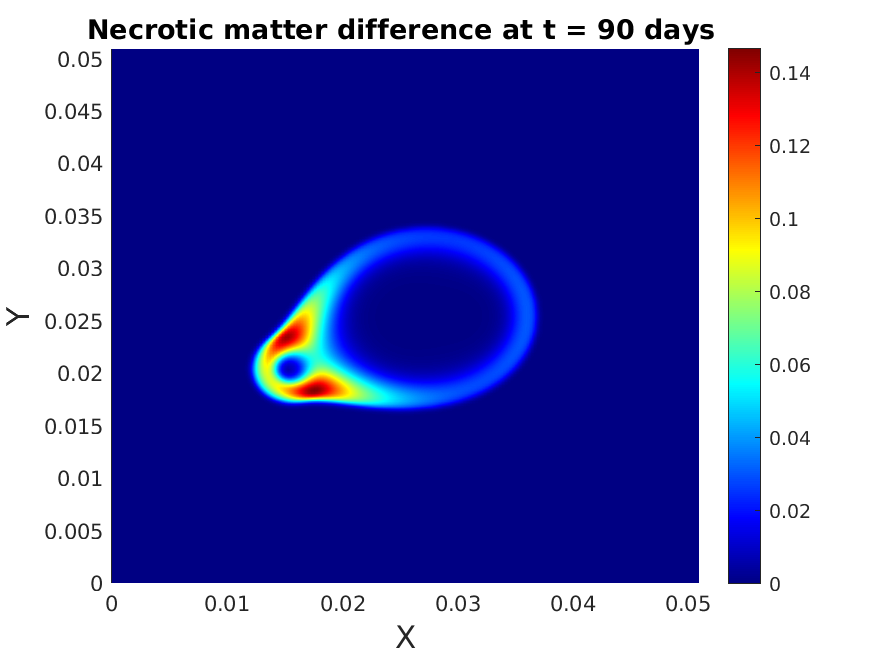}}\\
		{\includegraphics[width=1\linewidth, height = 3cm]{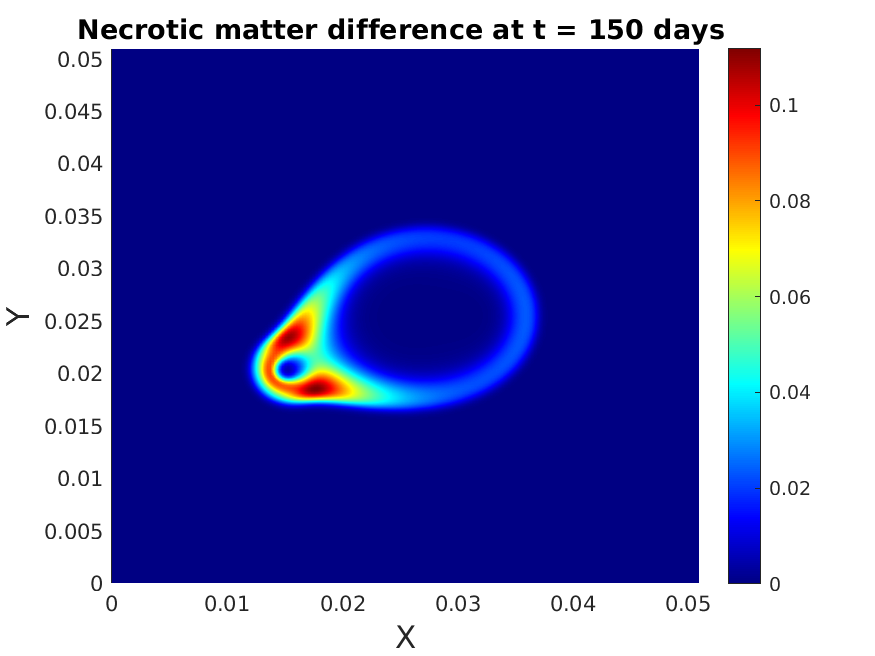}}\\
		{\includegraphics[width=1\linewidth, height = 3cm]{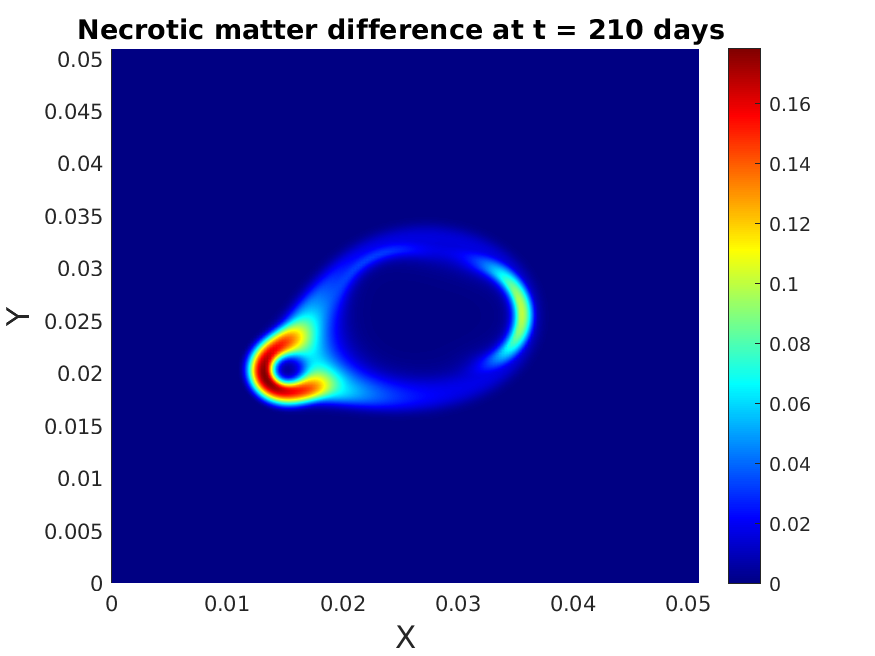}}\\
		{\includegraphics[width=1\linewidth, height = 3cm]{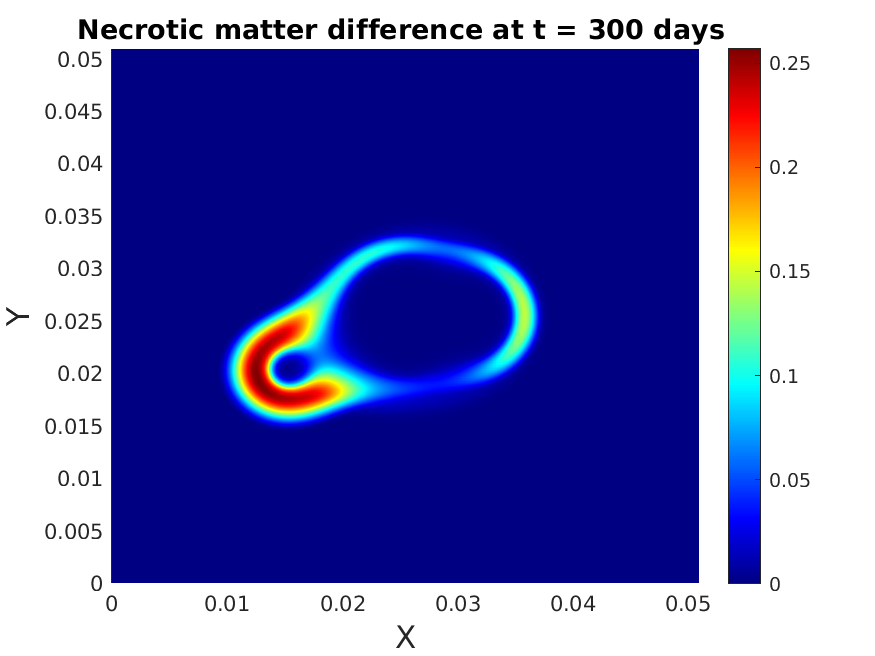}}\\
		{\includegraphics[width=1\linewidth, height = 3cm]{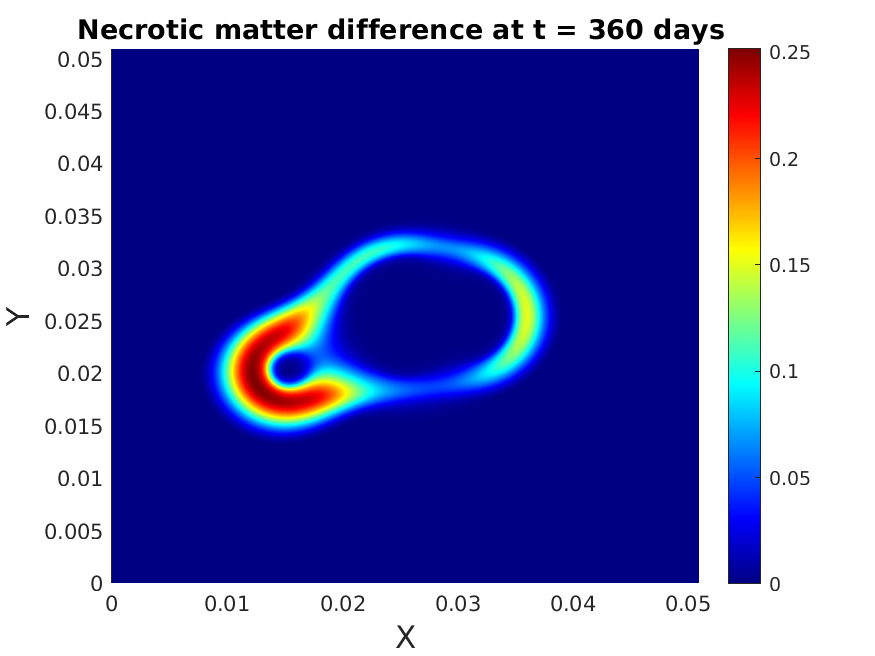}}\\
		{\includegraphics[width=1\linewidth, height = 3cm]{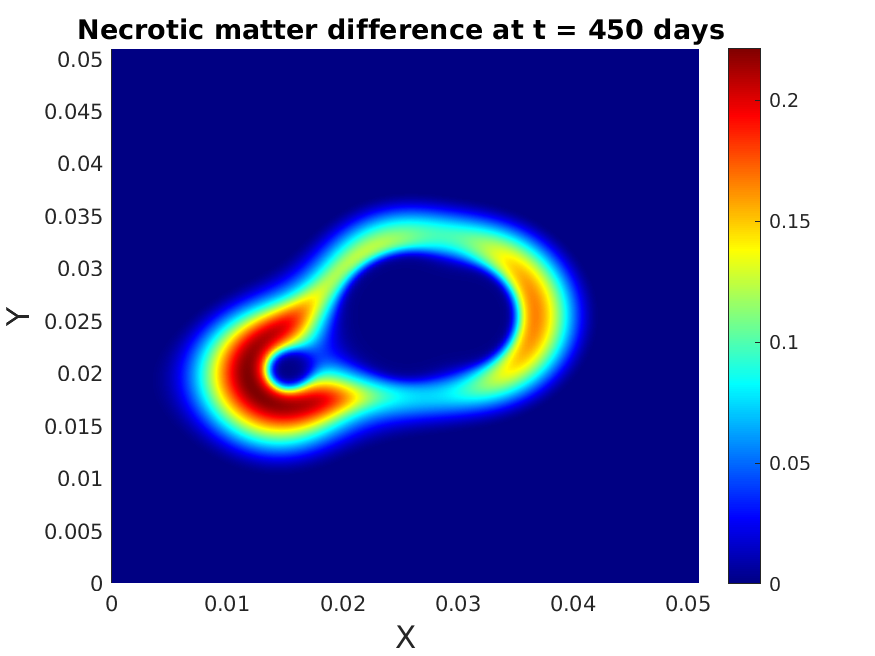}}%
		\subcaption{\scriptsize Necrotic matter difference}
	\end{minipage}%
	\caption{Difference between solution components obtained with $\chi>0$ and those computed with $\chi=0$.}
	\label{fig:comp-acid}
\end{figure}

Finally, we also study the effect of drag coefficients $K_{cm}, K_{mn}, K_{cn}$ in \eqref{Force}. Recall that they represent the resistance inferred by the phases in contact when passing over each other. They are contained in the diffusion and drift coefficients of system \eqref{model2}. In the computations for Figure \ref{fig:simulations_mpm} we had considered them to be different, more precisely $K_{cm}<K_{mn}<K_{cn}$. This ensured a relatively easy movement of glioma through normal tissue when compared to the shift over necrotic matter of both normal tissue and cancer cells. Now let all these drag coefficients be equal (as was assumed in \cite{Jackson2002307}) and take the previously lowermost value $K_{cm}$, thus reducing the drag between necrotic matter and the other two phases. The differences between the two cases are plotted in Figure \ref{fig:comp-Kcm} and show in the second case an enhanced outward migration of glioma cells, away from the highly acidic area at the core of the pseudopalisade and with a pronounced suppression of normal cells, and emergence of larger necrotic matter. An opposite effect is noticed when letting $K_{cm}=K_{mn}=K_{cn}$, all at the previously highest value $K_{cn}$; we do not show here those results (we refer for them to \cite{pawans-diss}), but rather illustrate in Figure \ref{fig:comp-Kmn} the situation when all drag coefficients equal the previously intermediate value $K_{mn}$. This means that the drag between cancer cells and normal tissue is increased, while they can easier shift over the necrotic matter. As a consequence, the extent of the pseudopalisades is reduced, but the glioma aggregates in the garland-like structures are larger. This is due not only to the higher drag between cancer cells and normal tissue, but also to the lower $K_{cn}$ value, which ensures that the glioma cells can leave faster than previously the acidic area inside the 'ring'. Around the site where the cancer cells were initially more concentrated the necrosis is reduced and the normal tissue is better preserved. 

\section*{Acknowledgement} P.K. acknowledges funding by the German Academic Exchange Service (DAAD) in form of a PhD scholar\-ship.

\begin{figure}[!htbp]
	\centering
		\begin{minipage}[hstb]{.24\linewidth}
		\raisebox{1.2cm}{\rotatebox[origin=t]{90}{30 days}}{\includegraphics[width=1\linewidth, height = 3cm]{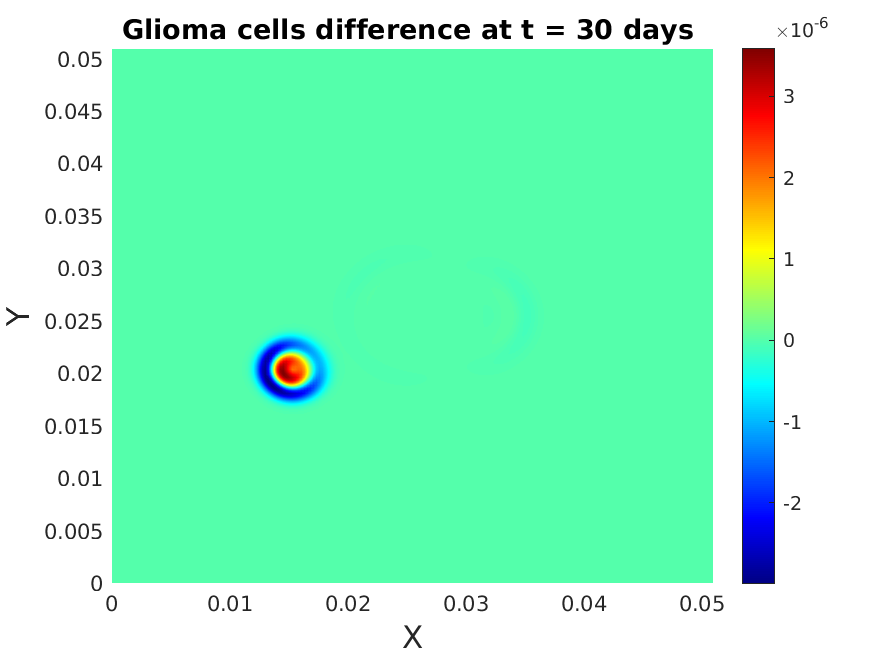}}\\
		\raisebox{1.2cm}{\rotatebox[origin=t]{90}{90 days}}{\includegraphics[width=1\linewidth, height = 3cm]{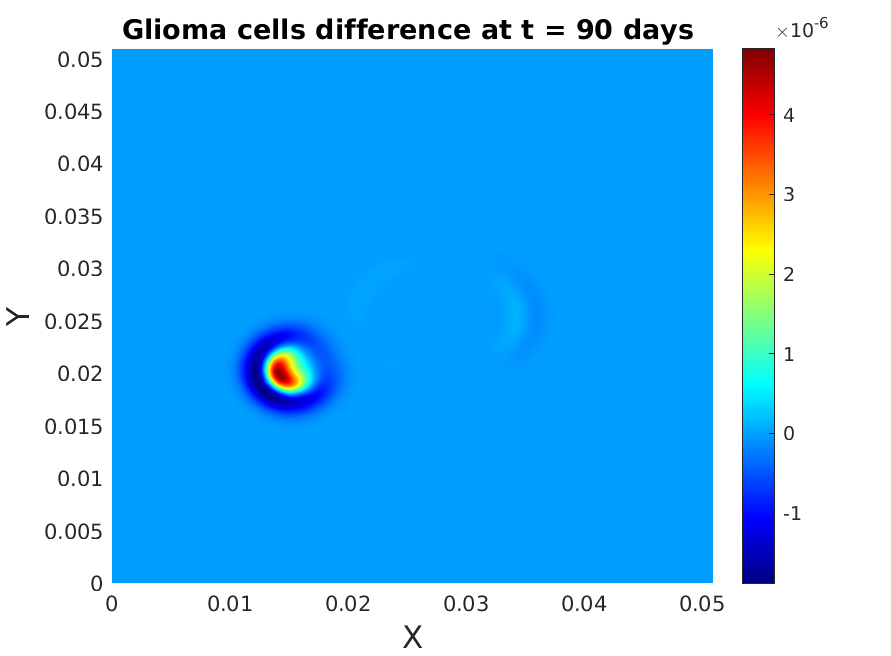}}\\
		\raisebox{1.2cm}{\rotatebox[origin=t]{90}{150 days}}{\includegraphics[width=1\linewidth, height = 3cm]{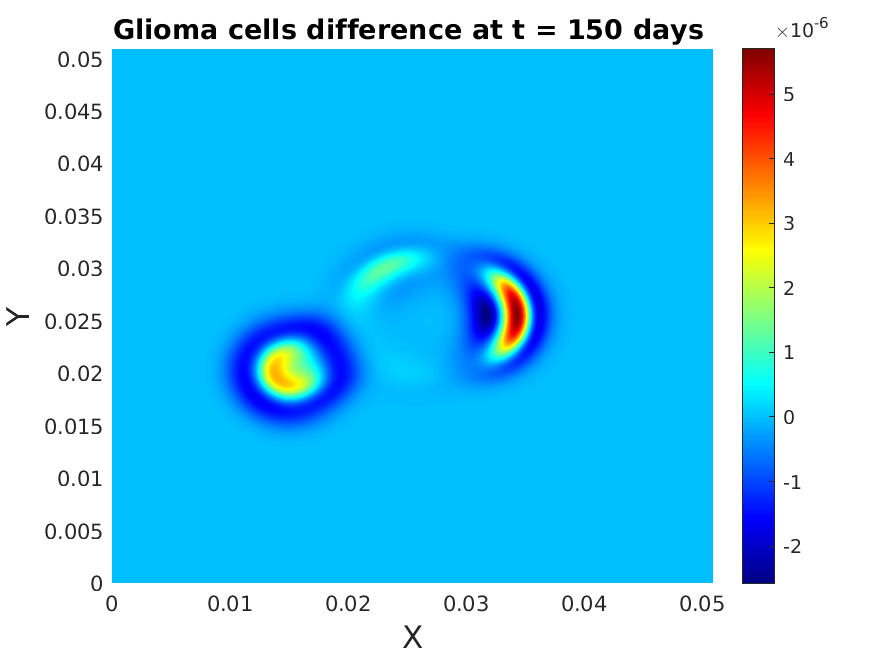}}\\
		\raisebox{1.2cm}{\rotatebox[origin=t]{90}{210 days}}{\includegraphics[width=1\linewidth, height = 3cm]{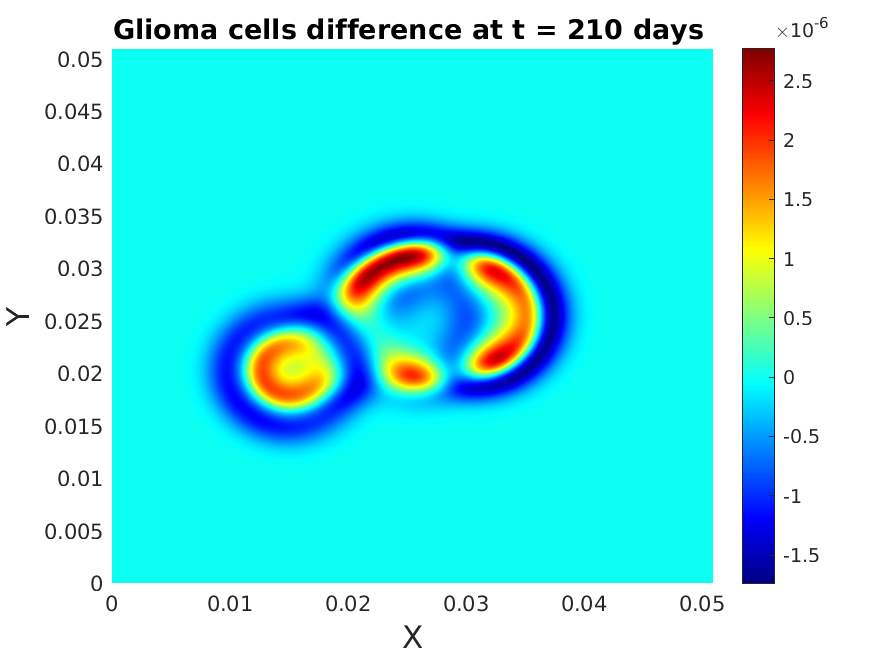}}\\
		\raisebox{1.2cm}{\rotatebox[origin=t]{90}{300 days}}{\includegraphics[width=1\linewidth, height = 3cm]{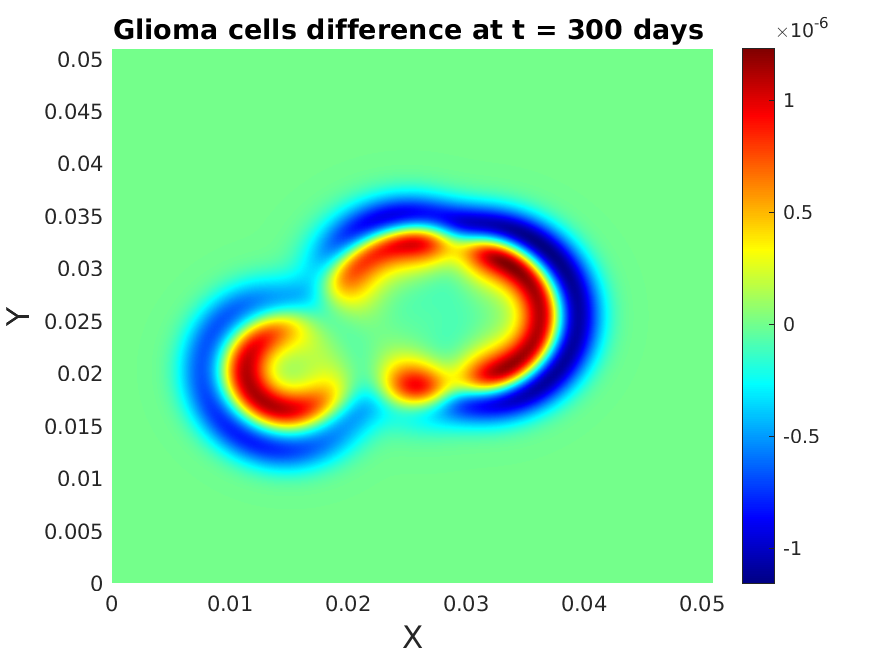}}\\
		\raisebox{1.2cm}{\rotatebox[origin=t]{90}{360 days}}{\includegraphics[width=1\linewidth, height = 3cm]{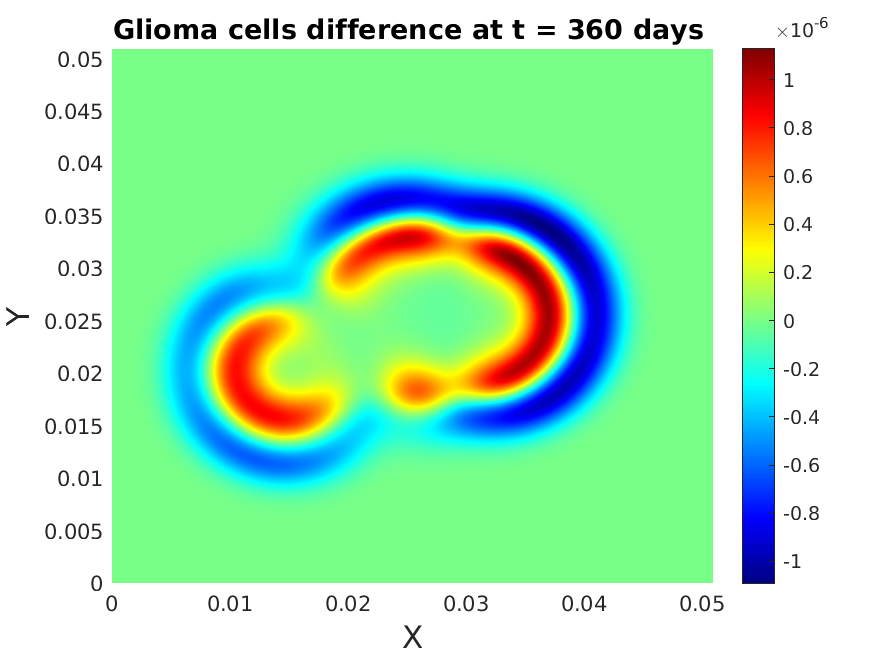}}\\
		\raisebox{1.2cm}{\rotatebox[origin=t]{90}{450 days}}{\includegraphics[width=1\linewidth, height = 3cm]{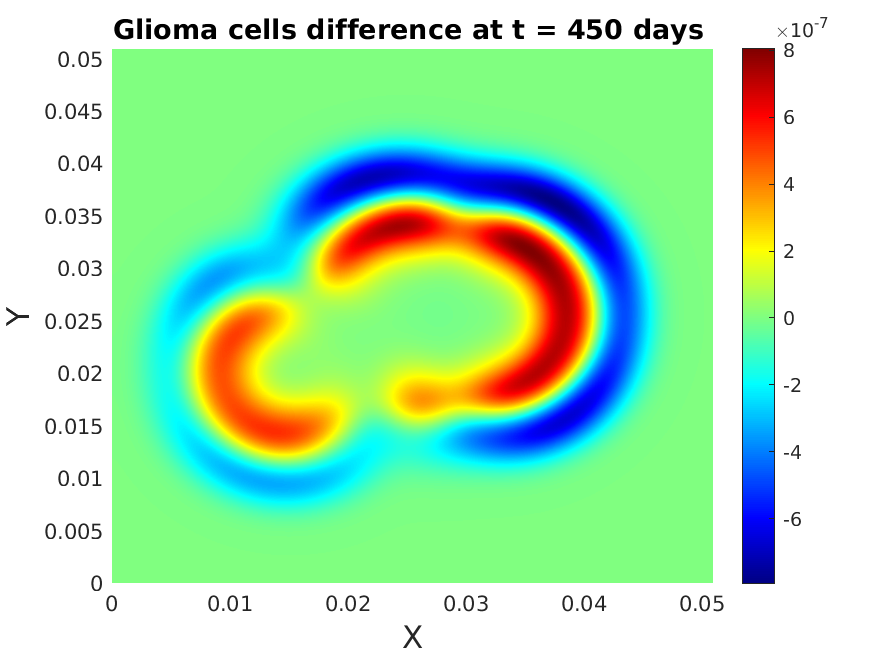}}%
		\subcaption{\scriptsize Glioma cells difference}
	\end{minipage}%
	\hspace{0.2cm}
	\begin{minipage}[hstb]{.24\linewidth}
		{\includegraphics[width=1\linewidth, height = 3cm]{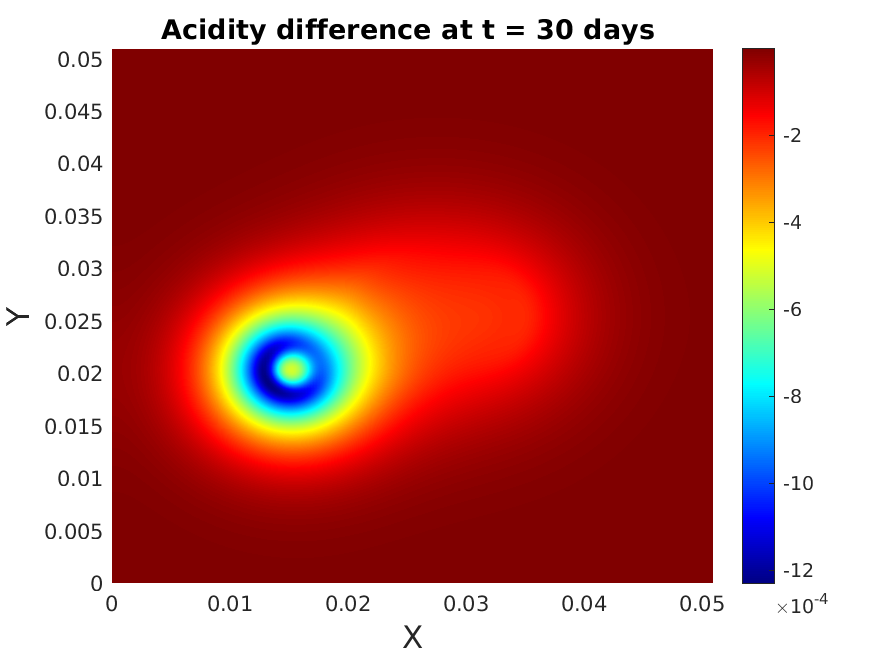}}\\
		{\includegraphics[width=1\linewidth, height = 3cm]{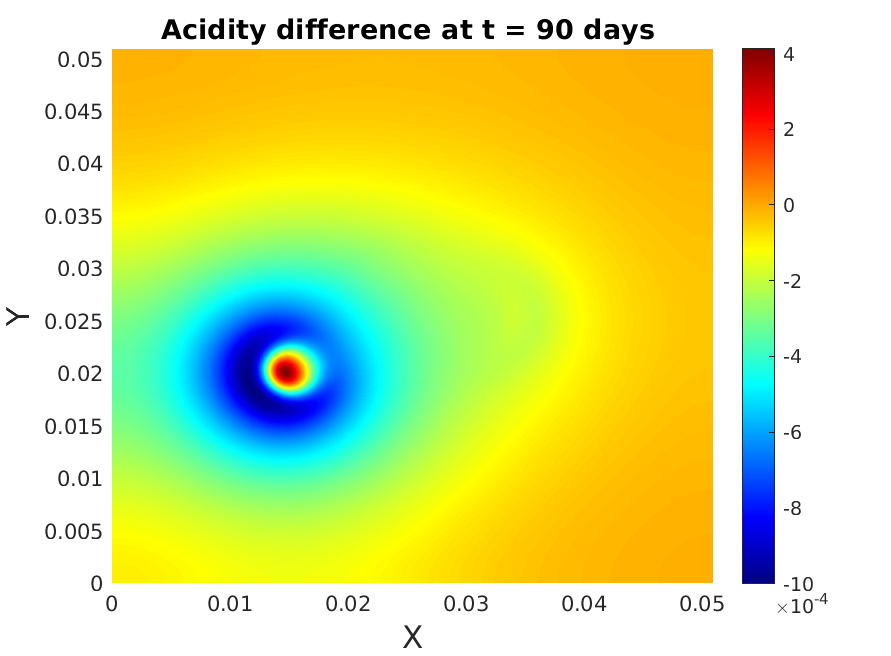}}\\
		{\includegraphics[width=1\linewidth, height = 3cm]{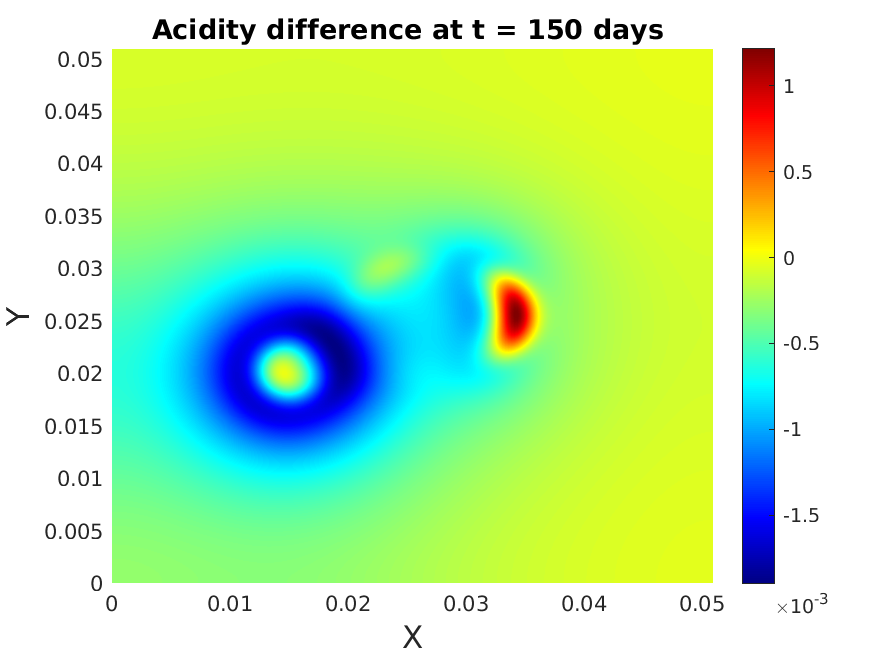}}\\
		{\includegraphics[width=1\linewidth, height = 3cm]{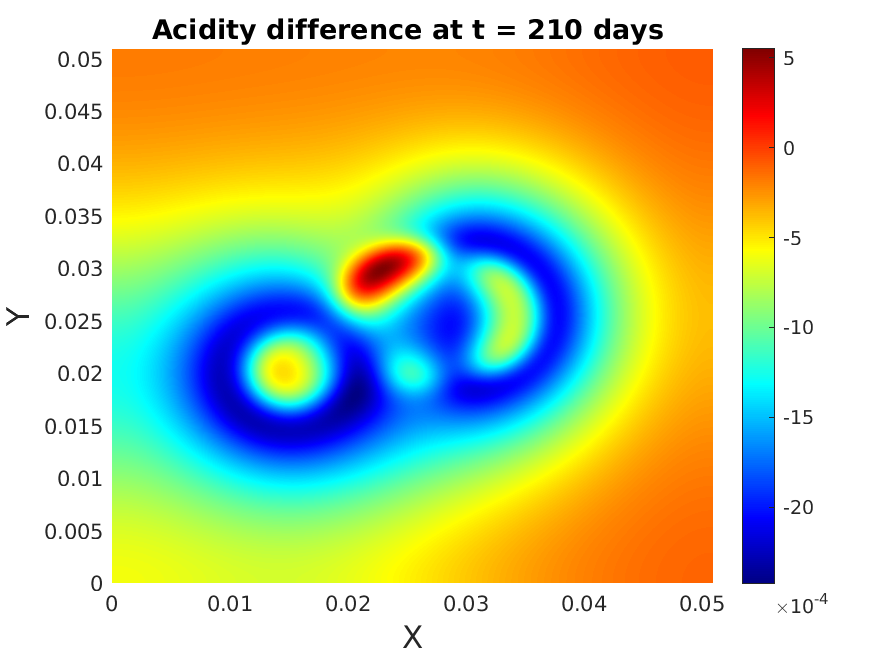}}\\
		{\includegraphics[width=1\linewidth, height = 3cm]{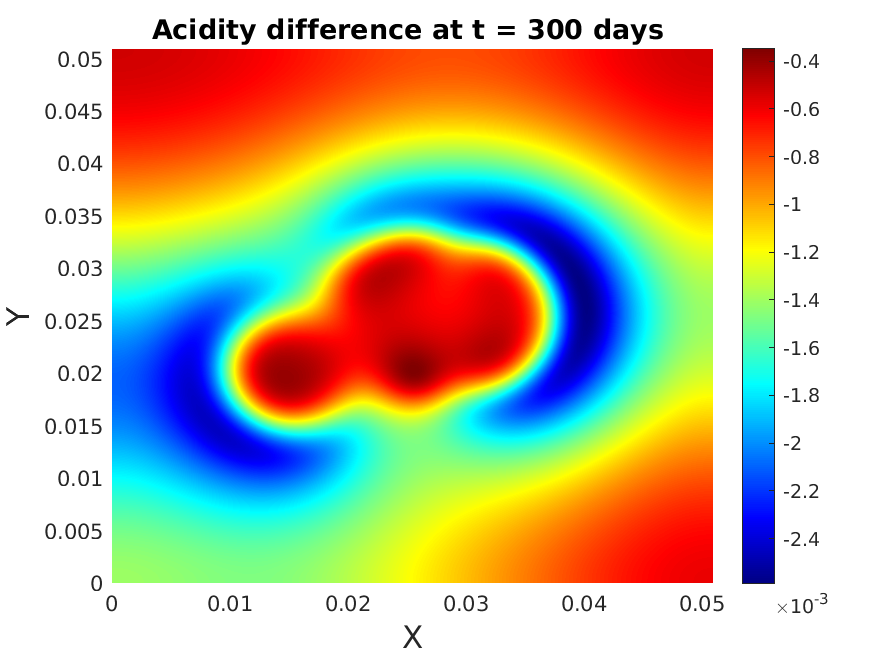}}\\
		{\includegraphics[width=1\linewidth, height = 3cm]{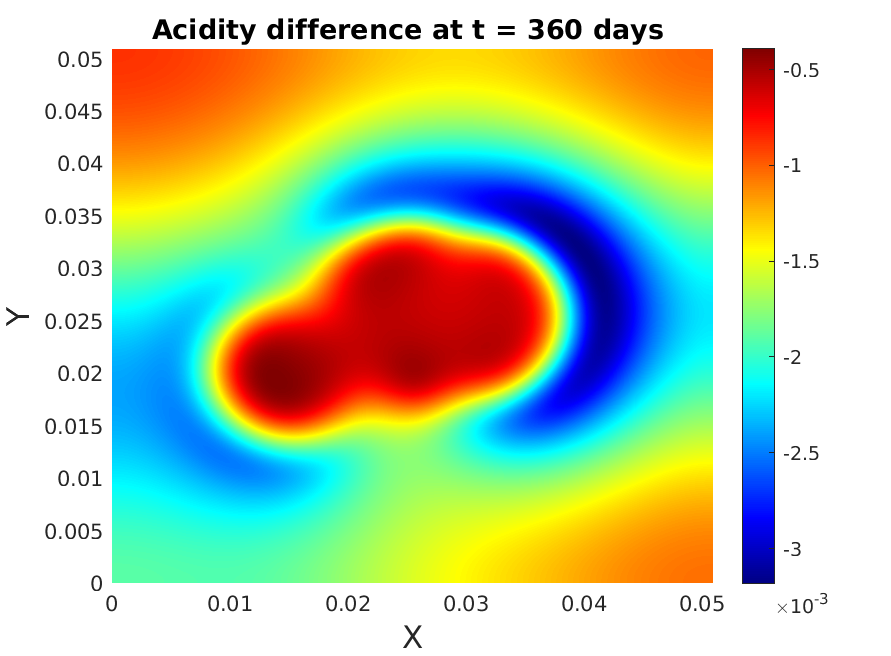}}\\
		{\includegraphics[width=1\linewidth, height = 3cm]{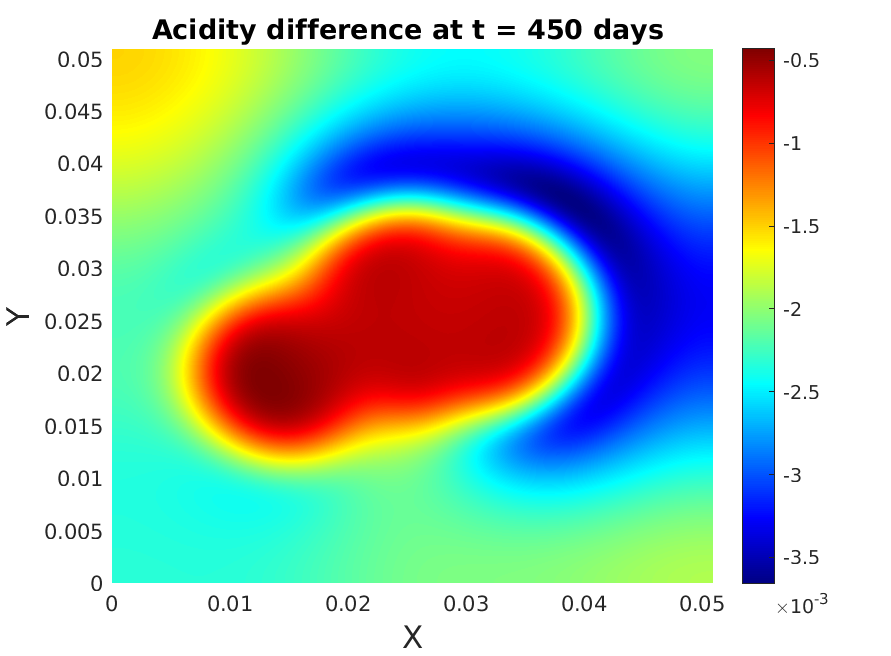}}%
		\subcaption{\scriptsize Acidity difference}
	\end{minipage}%
	\hspace{0.01cm}
	\begin{minipage}[hstb]{.24\linewidth}
		{\includegraphics[width=1\linewidth, height = 3cm]{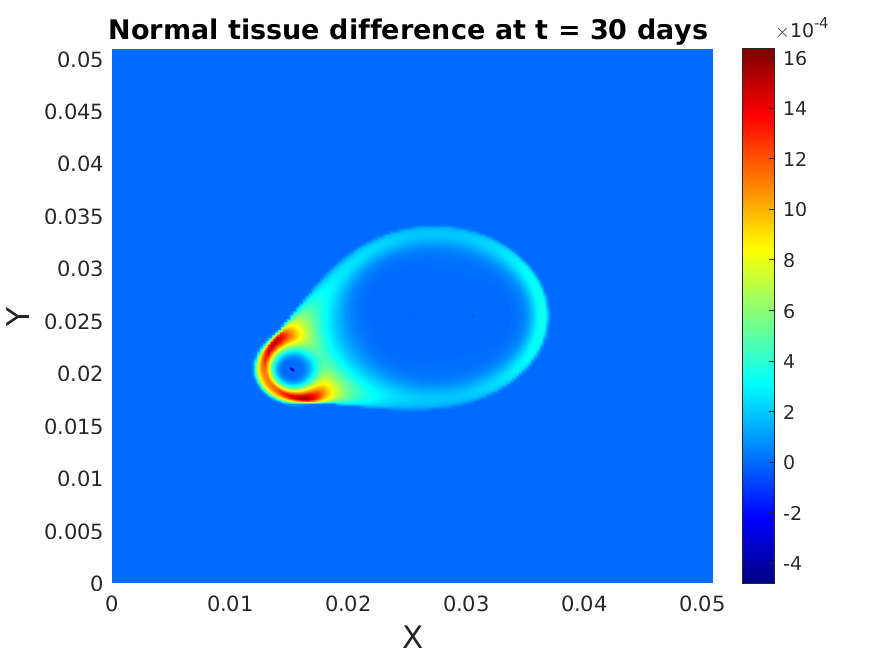}}\\
		{\includegraphics[width=1\linewidth, height = 3cm]{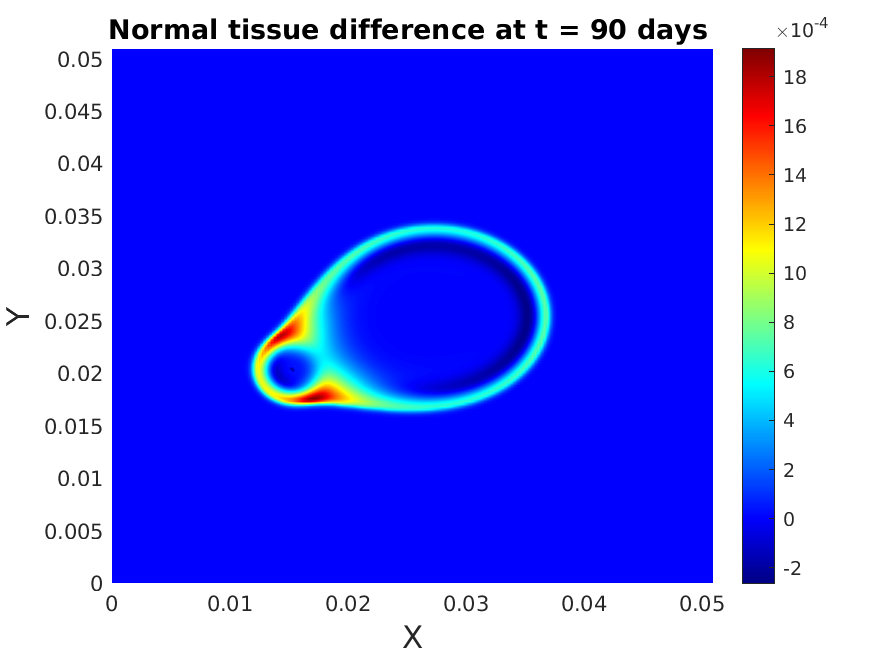}}\\
		{\includegraphics[width=1\linewidth, height = 3cm]{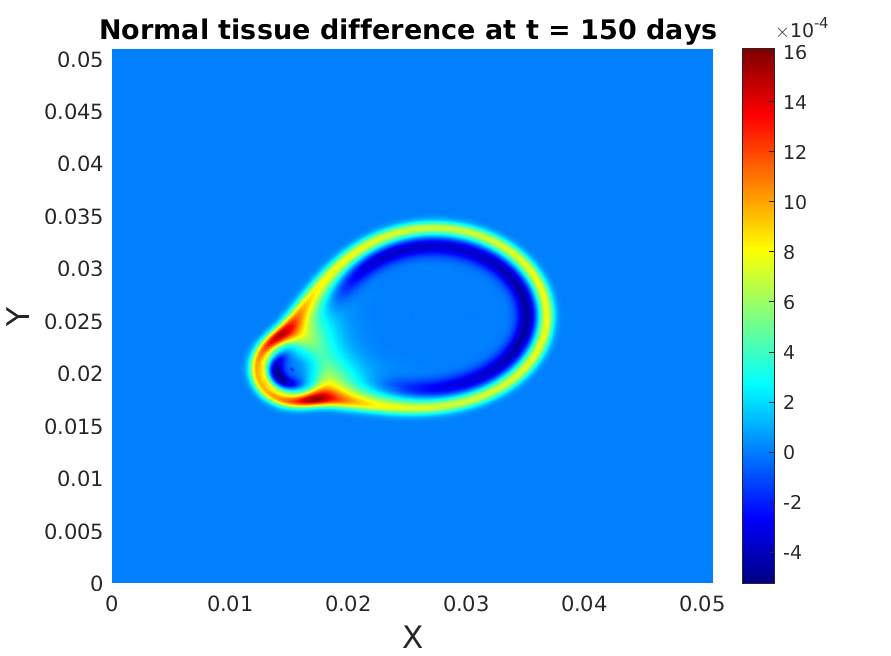}}\\
		{\includegraphics[width=1\linewidth, height = 3cm]{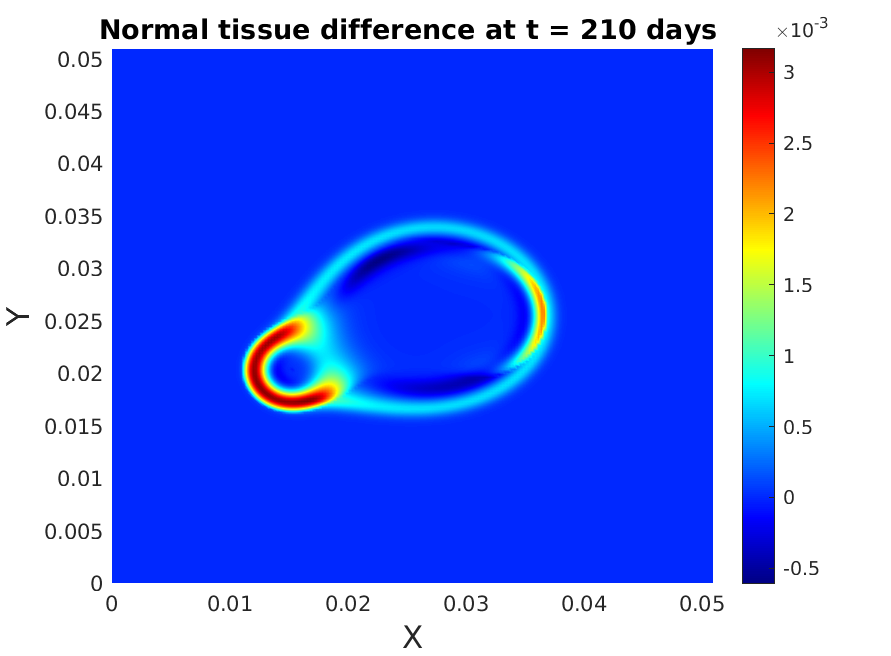}}\\
		{\includegraphics[width=1\linewidth, height = 3cm]{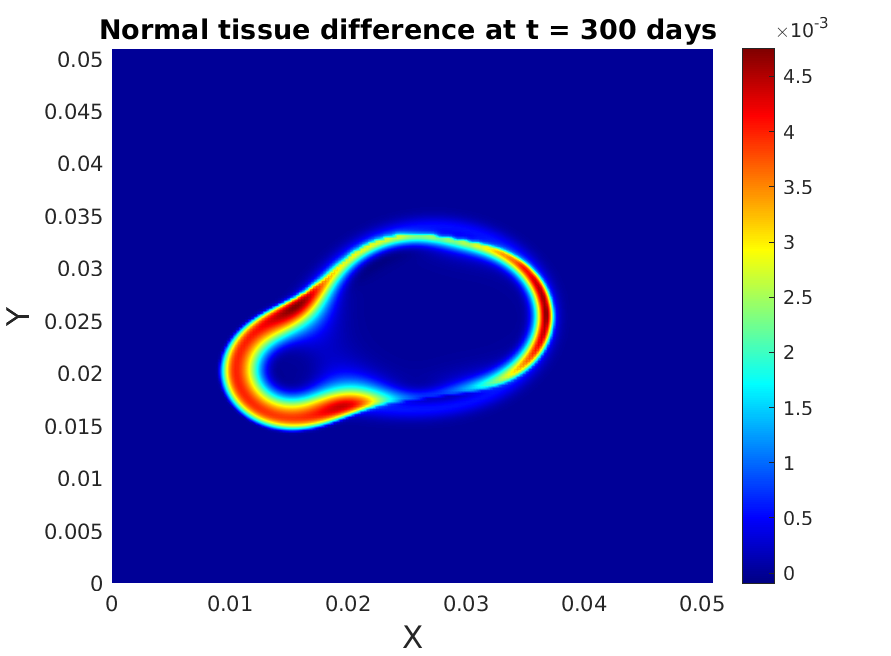}}\\
		{\includegraphics[width=1\linewidth, height = 3cm]{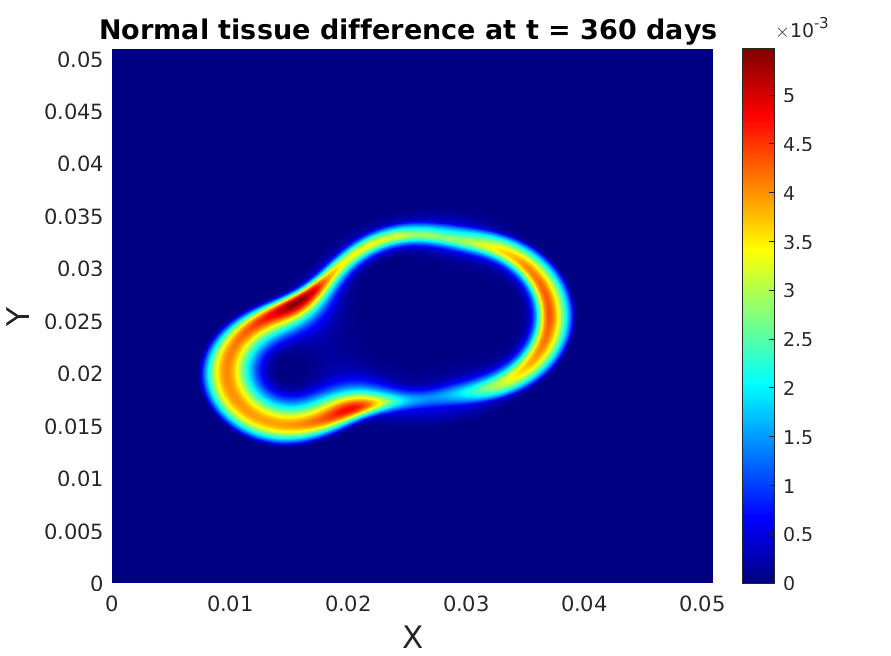}}\\
		{\includegraphics[width=1\linewidth, height = 3cm]{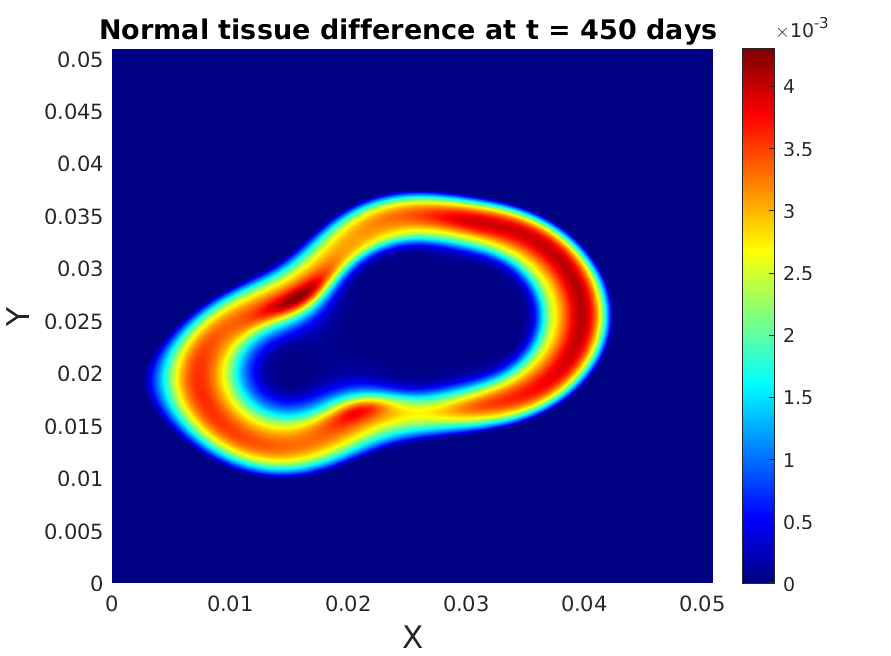}}%
		\subcaption{\scriptsize Normal tissue difference}
	\end{minipage}%
	\hspace{0.01cm}
	\begin{minipage}[hstb]{.24\linewidth}
		{\includegraphics[width=1\linewidth, height = 3cm]{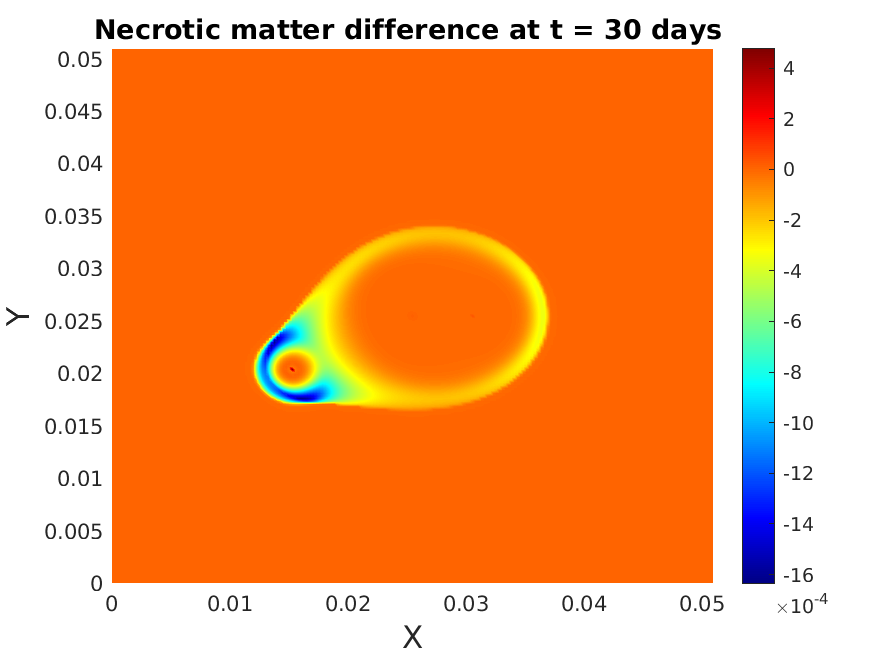}}\\
		{\includegraphics[width=1\linewidth, height = 3cm]{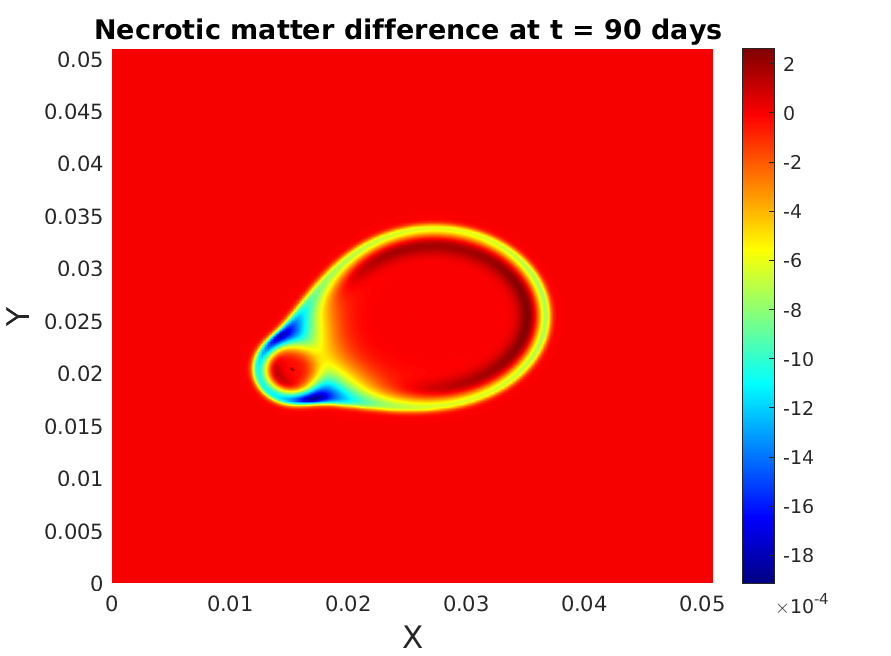}}\\
		{\includegraphics[width=1\linewidth, height = 3cm]{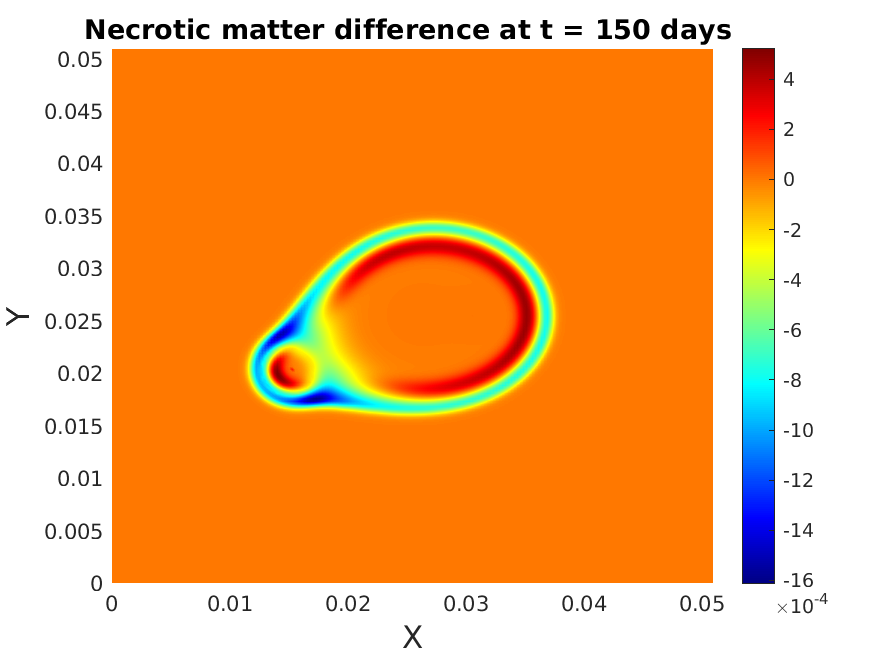}}\\
		{\includegraphics[width=1\linewidth, height = 3cm]{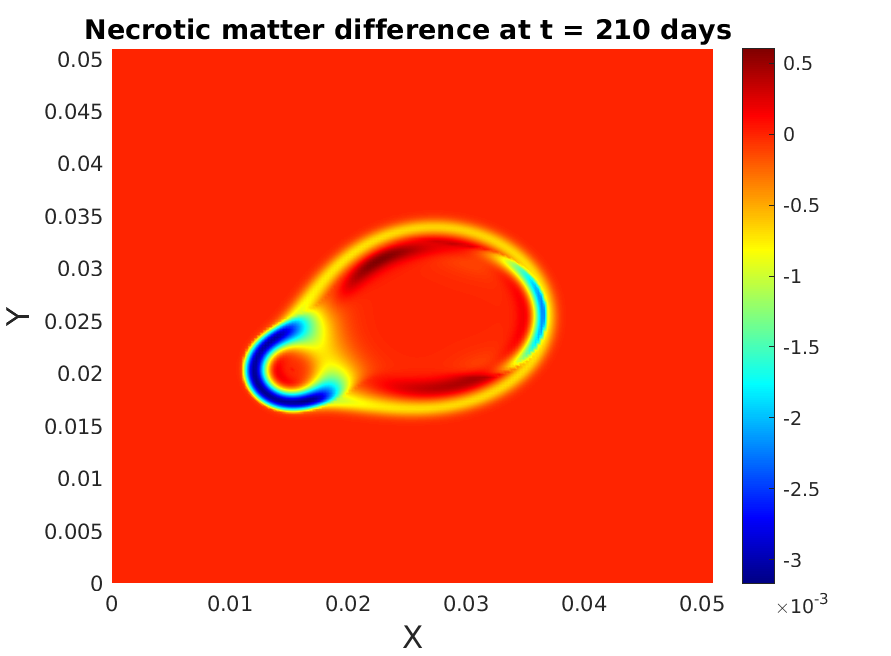}}\\
		{\includegraphics[width=1\linewidth, height = 3cm]{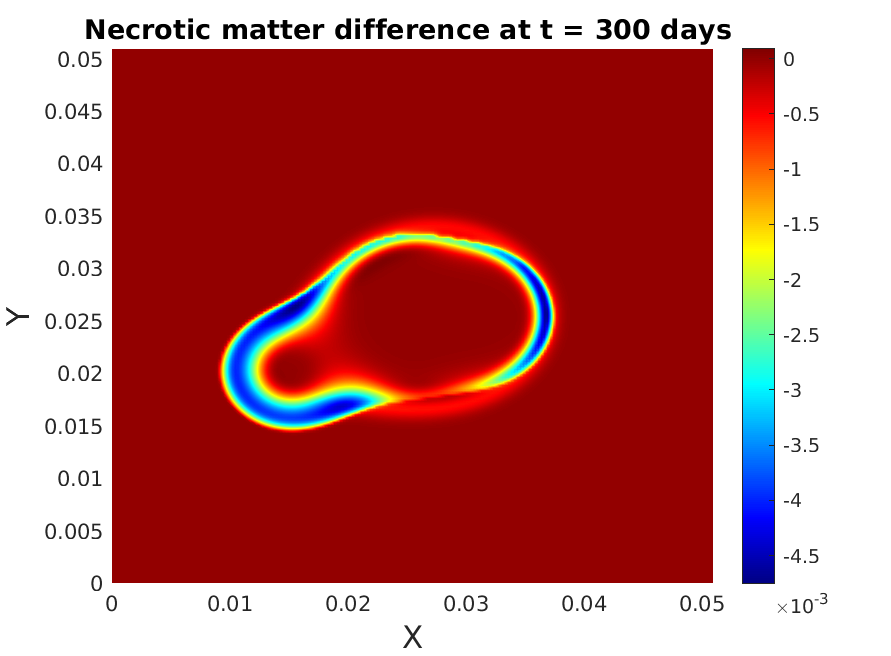}}\\
		{\includegraphics[width=1\linewidth, height = 3cm]{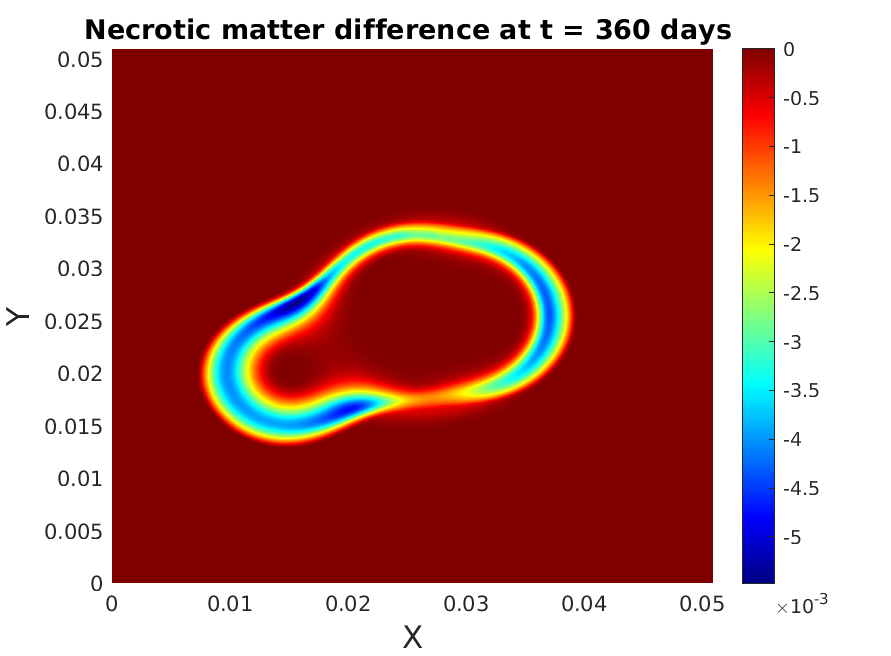}}\\
		{\includegraphics[width=1\linewidth, height = 3cm]{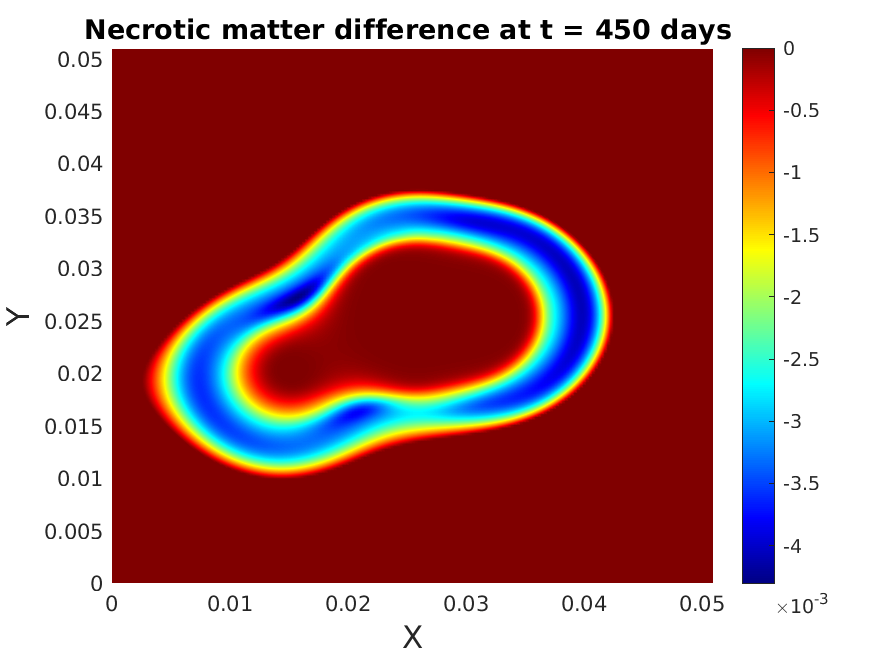}}%
		\subcaption{\scriptsize Necrotic matter difference}
	\end{minipage}%
	\caption{Difference between solution components obtained with $K_{cm}<K_{mn}<K_{cn}$ and those computed with $K_{cm}=K_{mn}=K_{cn}$, all at the previously lower value $K_{cm}$.}
	\label{fig:comp-Kcm}
\end{figure}

\begin{figure}[!htbp]
	\centering
	\begin{minipage}[hstb]{.24\linewidth}
		\raisebox{1.2cm}{\rotatebox[origin=t]{90}{30 days}}{\includegraphics[width=1\linewidth, height = 3cm]{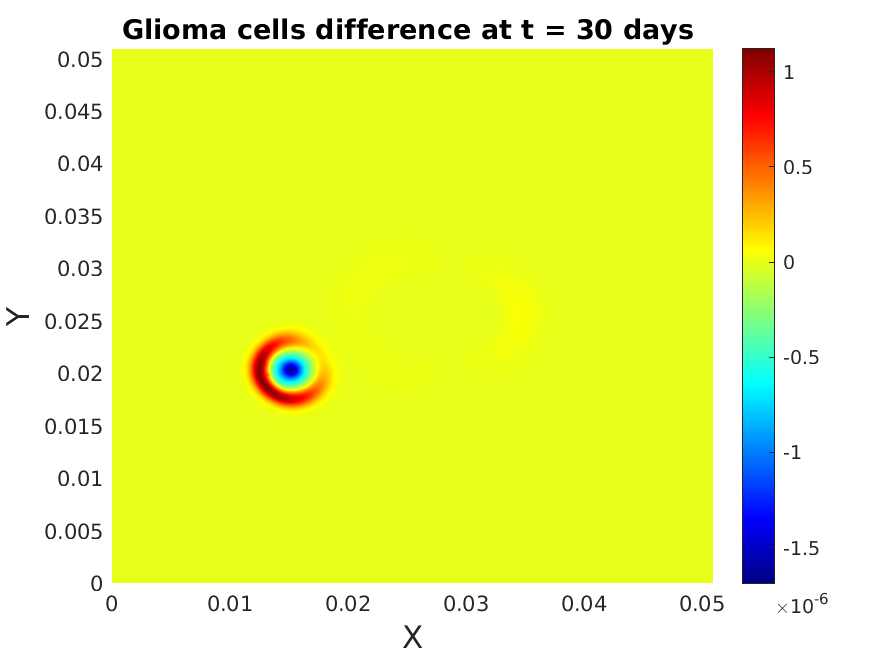}}\\
		\raisebox{1.2cm}{\rotatebox[origin=t]{90}{90 days}}{\includegraphics[width=1\linewidth, height = 3cm]{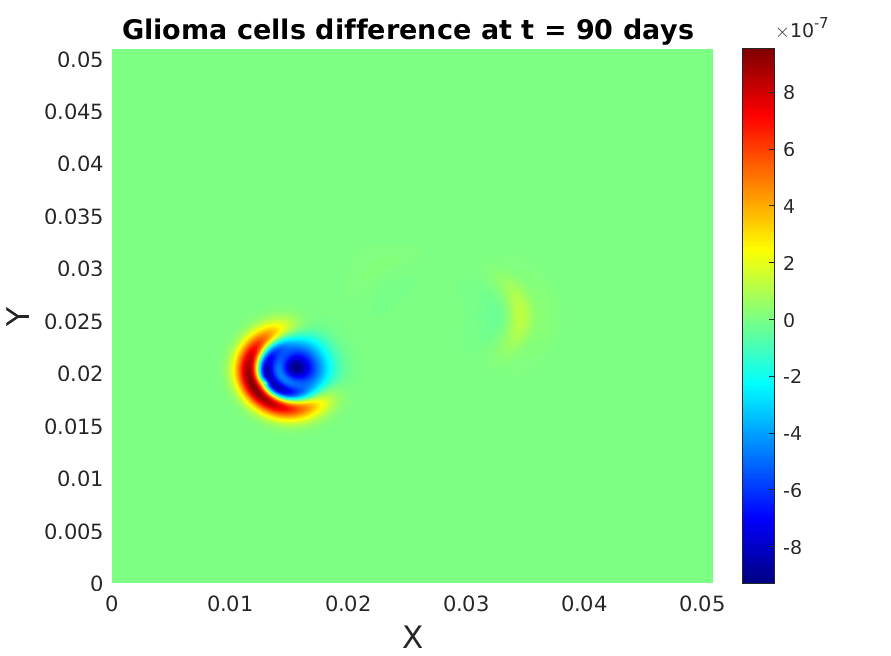}}\\
		\raisebox{1.2cm}{\rotatebox[origin=t]{90}{150 days}}{\includegraphics[width=1\linewidth, height = 3cm]{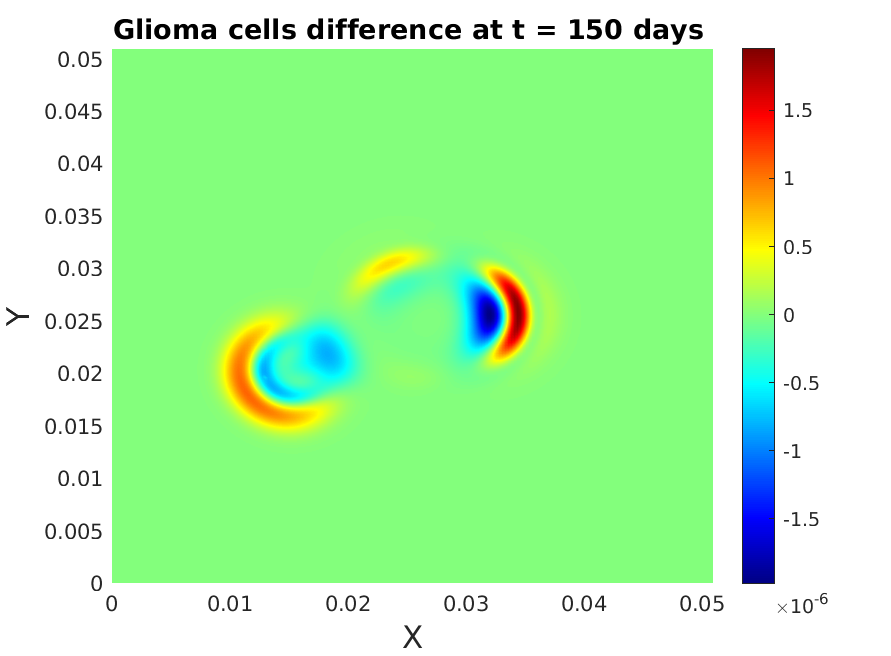}}\\
		\raisebox{1.2cm}{\rotatebox[origin=t]{90}{210 days}}{\includegraphics[width=1\linewidth, height = 3cm]{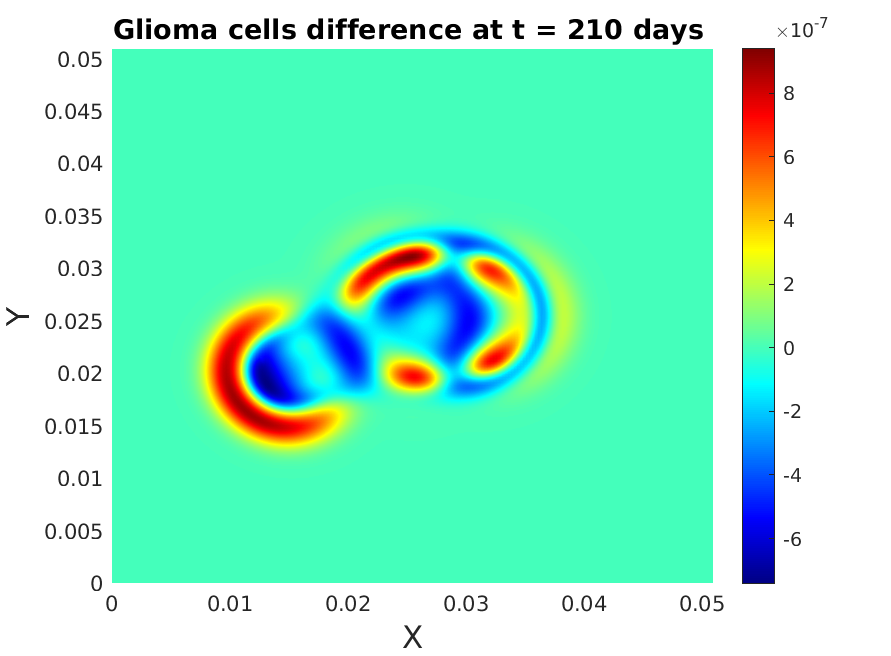}}\\
		\raisebox{1.2cm}{\rotatebox[origin=t]{90}{300 days}}{\includegraphics[width=1\linewidth, height = 3cm]{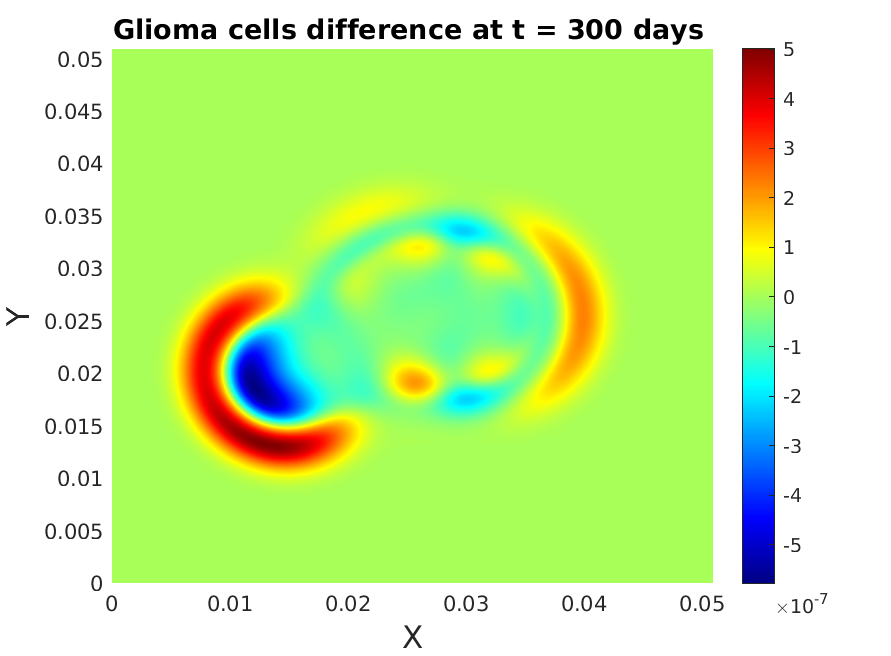}}\\
		\raisebox{1.2cm}{\rotatebox[origin=t]{90}{360 days}}{\includegraphics[width=1\linewidth, height = 3cm]{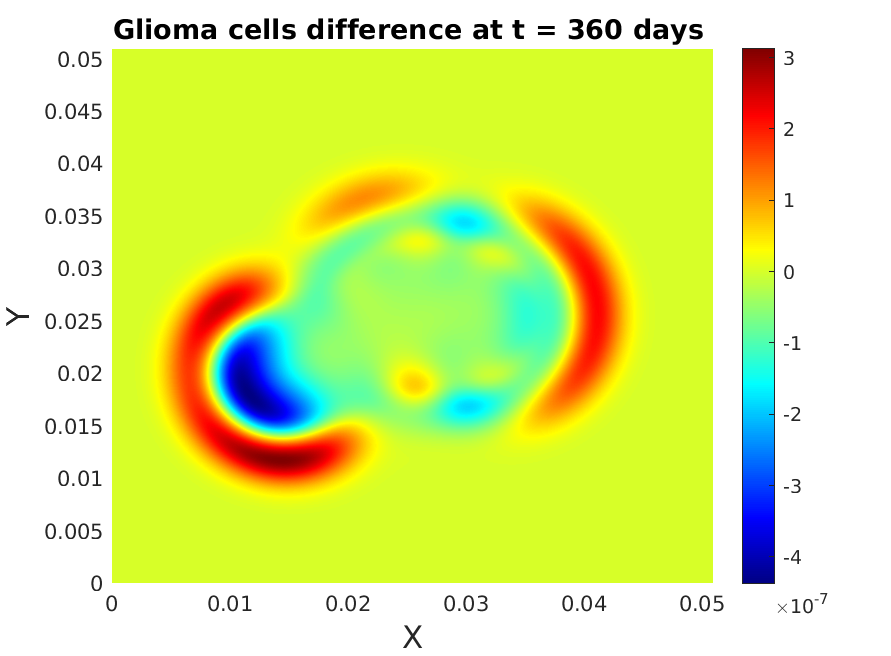}}\\
		\raisebox{1.2cm}{\rotatebox[origin=t]{90}{450 days}}{\includegraphics[width=1\linewidth, height = 3cm]{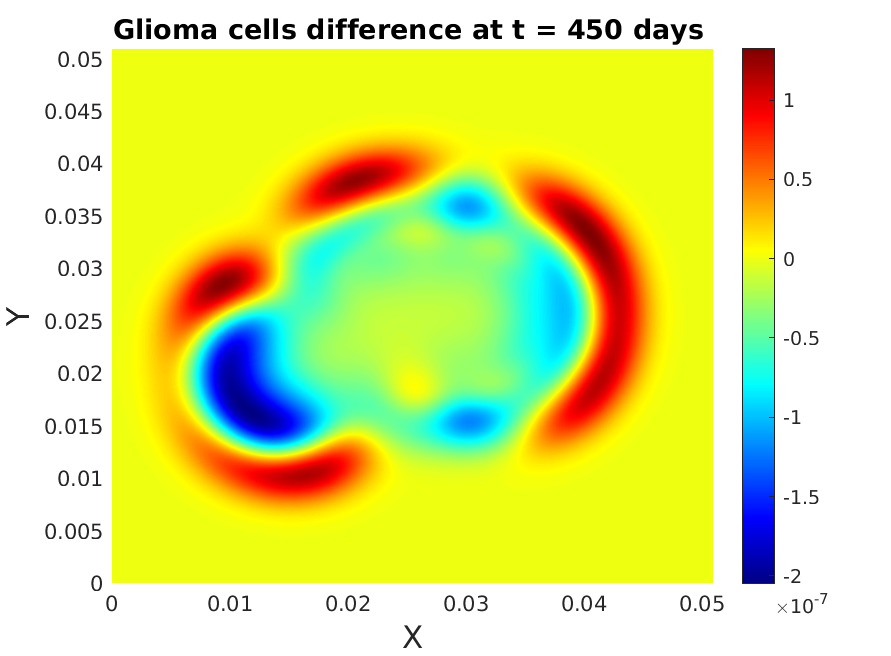}}%
		\subcaption{\scriptsize Glioma cells difference}
	\end{minipage}%
	\hspace{0.2cm}
	\begin{minipage}[hstb]{.24\linewidth}
		{\includegraphics[width=1\linewidth, height = 3cm]{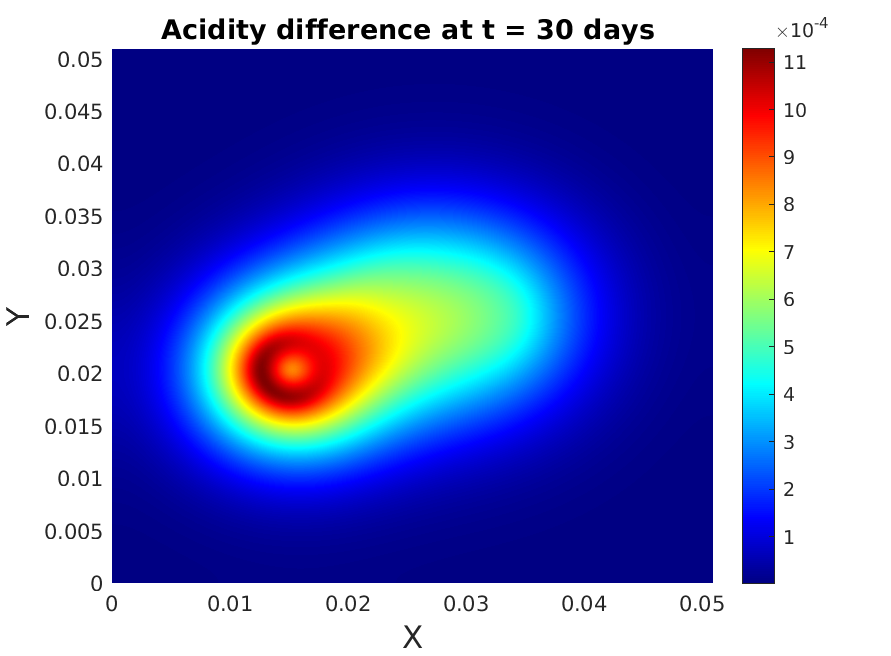}}\\
		{\includegraphics[width=1\linewidth, height = 3cm]{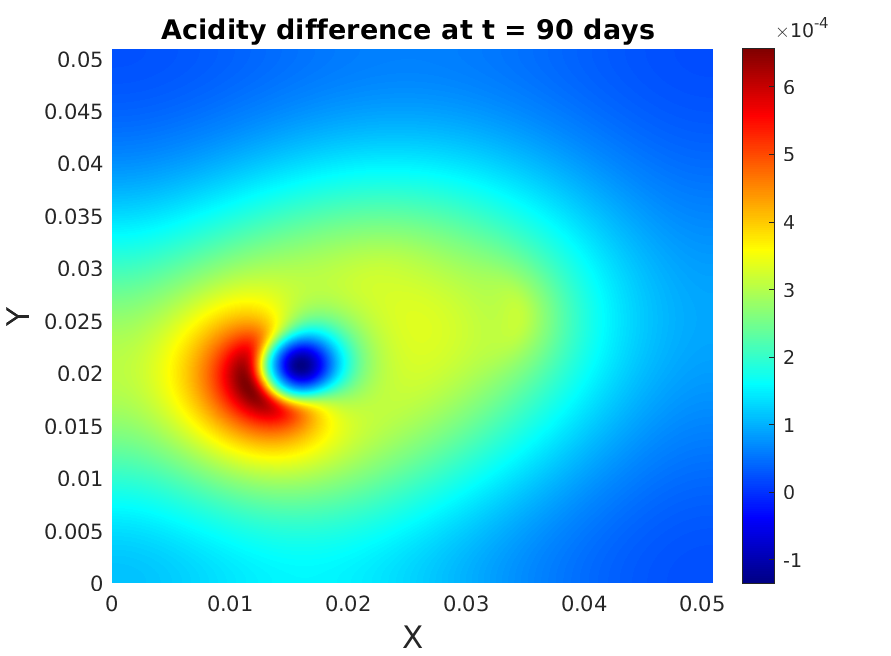}}\\
		{\includegraphics[width=1\linewidth, height = 3cm]{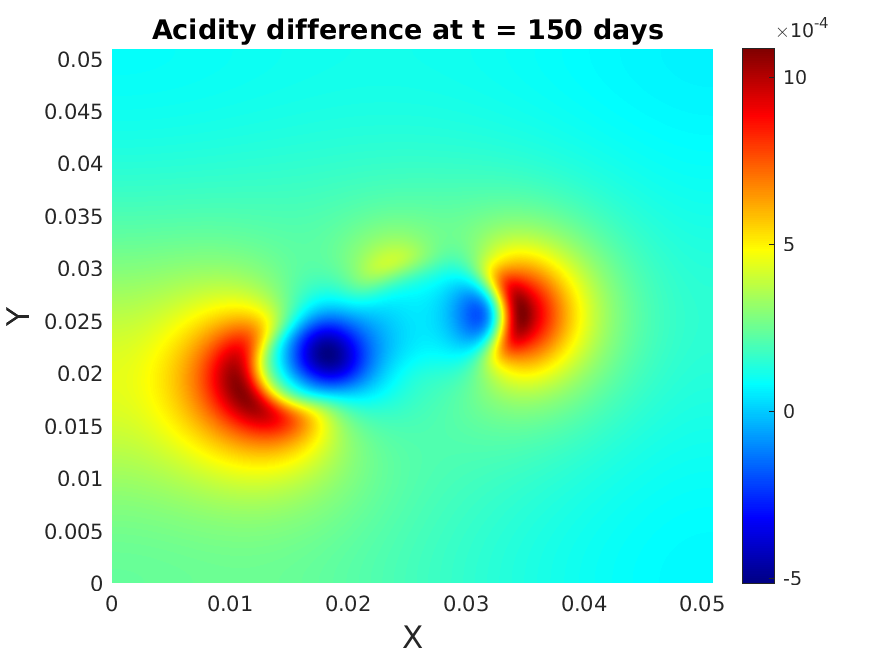}}\\
		{\includegraphics[width=1\linewidth, height = 3cm]{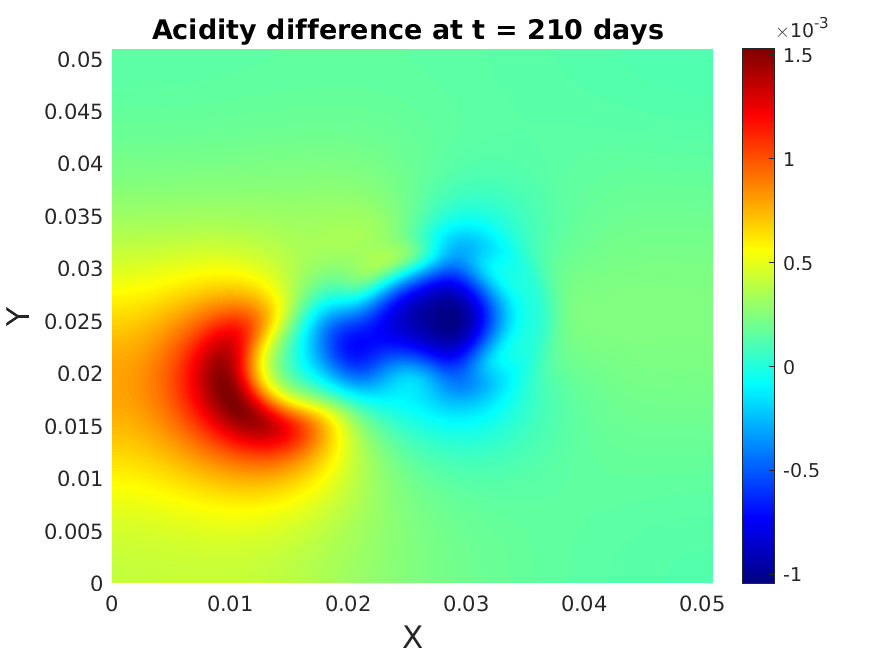}}\\
		{\includegraphics[width=1\linewidth, height = 3cm]{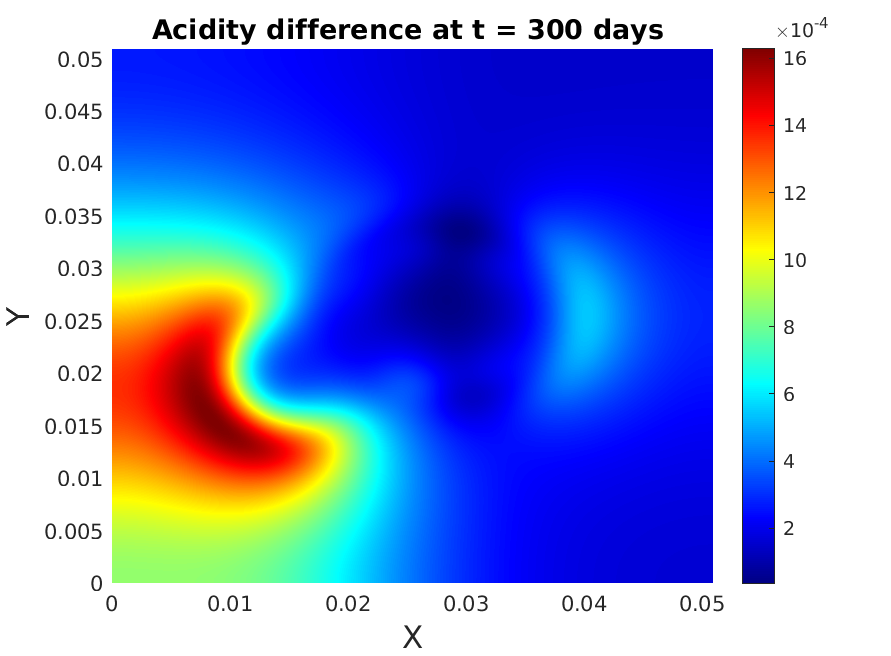}}\\
		{\includegraphics[width=1\linewidth, height = 3cm]{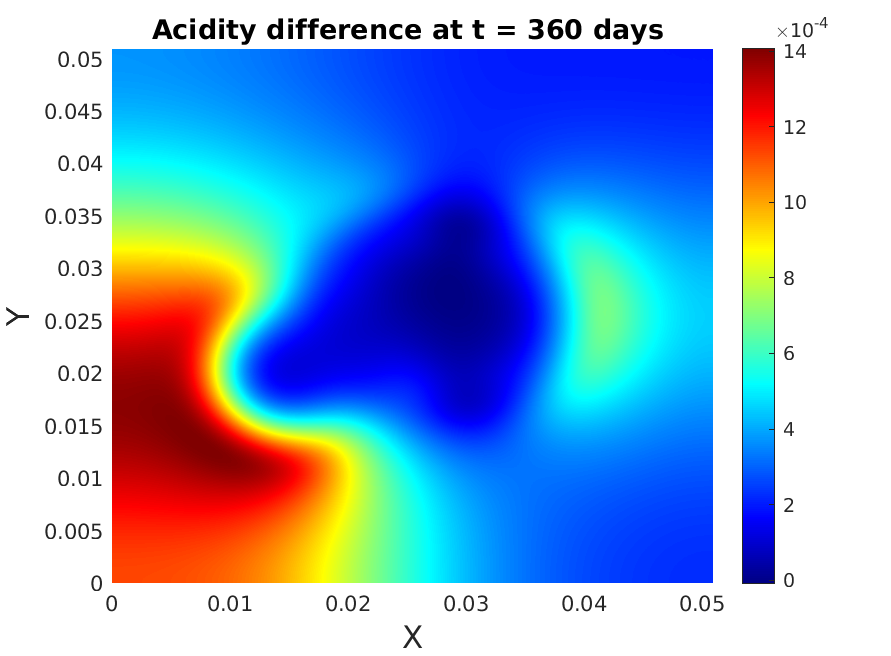}}\\
		{\includegraphics[width=1\linewidth, height = 3cm]{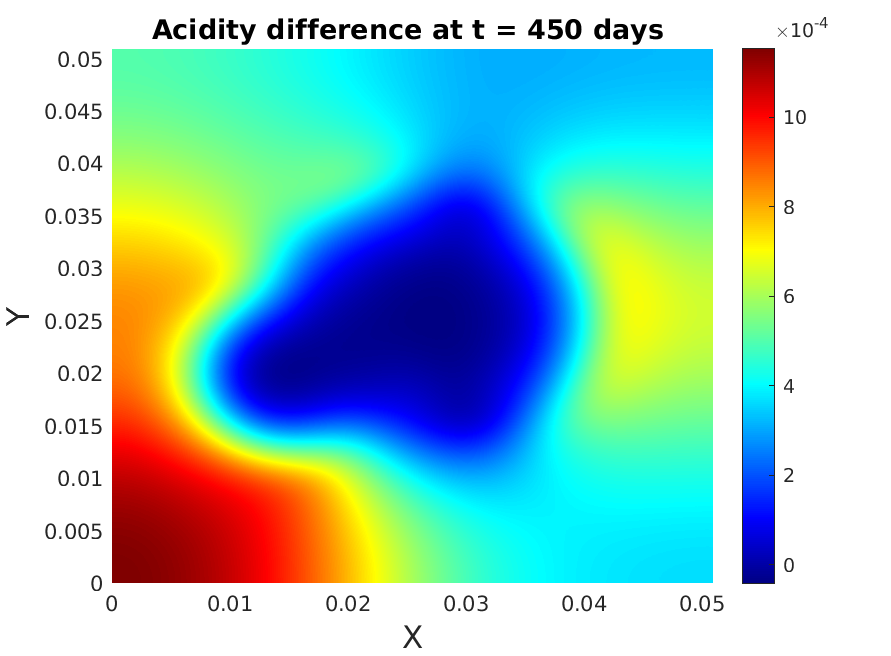}}%
		\subcaption{\scriptsize Acidity difference}
	\end{minipage}%
	\hspace{0.01cm}
	\begin{minipage}[hstb]{.24\linewidth}
		{\includegraphics[width=1\linewidth, height = 3cm]{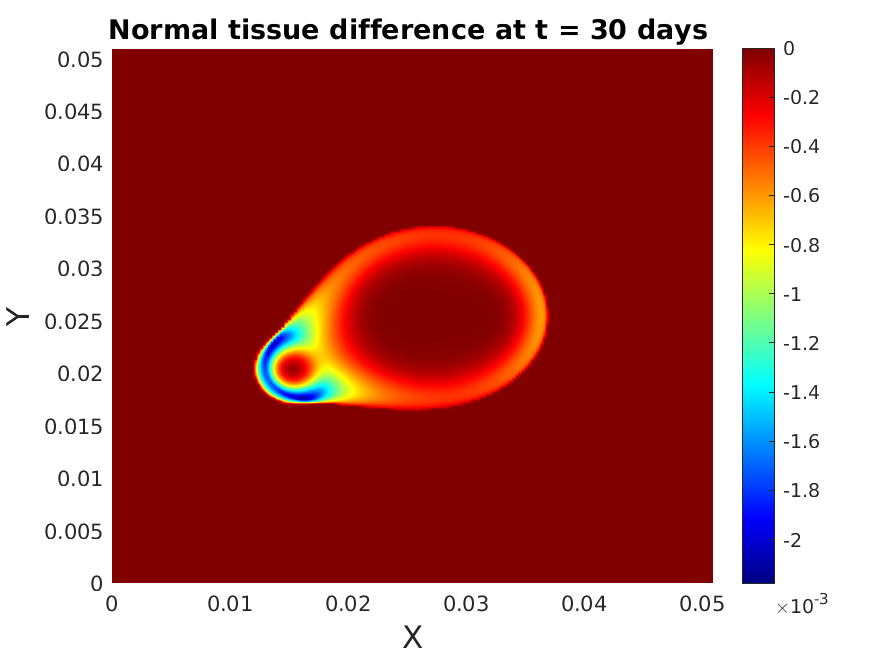}}\\
		{\includegraphics[width=1\linewidth, height = 3cm]{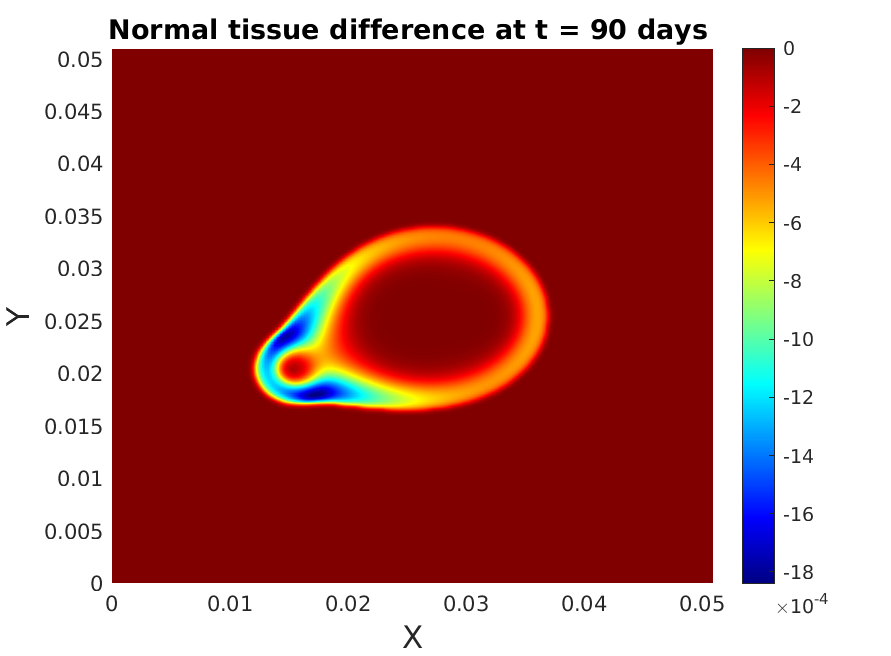}}\\
		{\includegraphics[width=1\linewidth, height = 3cm]{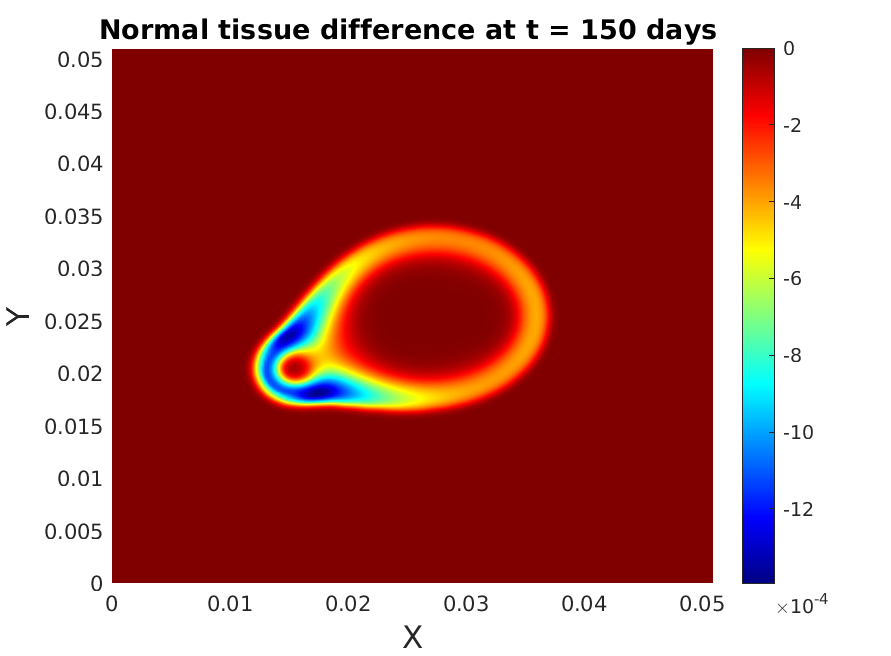}}\\
		{\includegraphics[width=1\linewidth, height = 3cm]{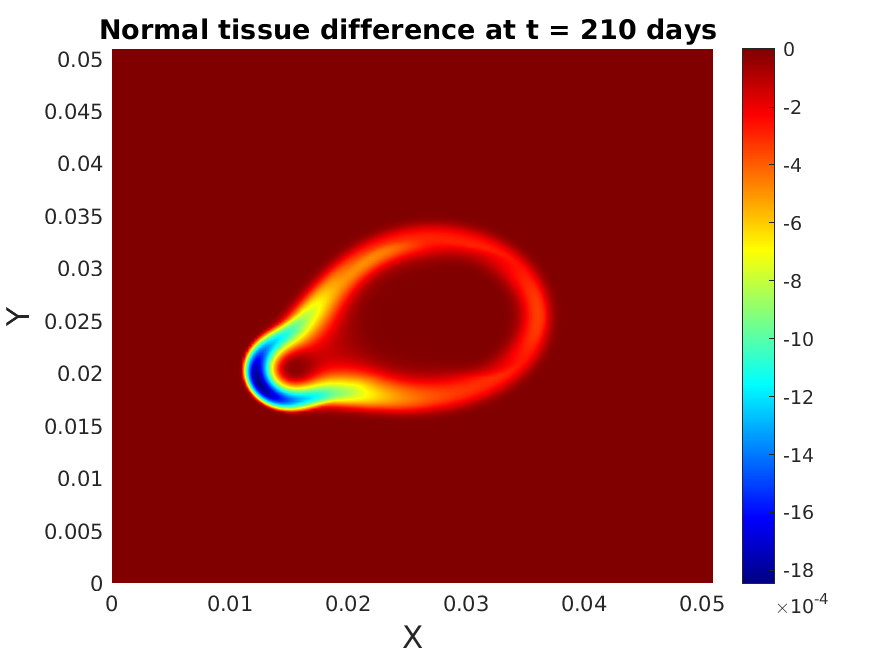}}\\
		{\includegraphics[width=1\linewidth, height = 3cm]{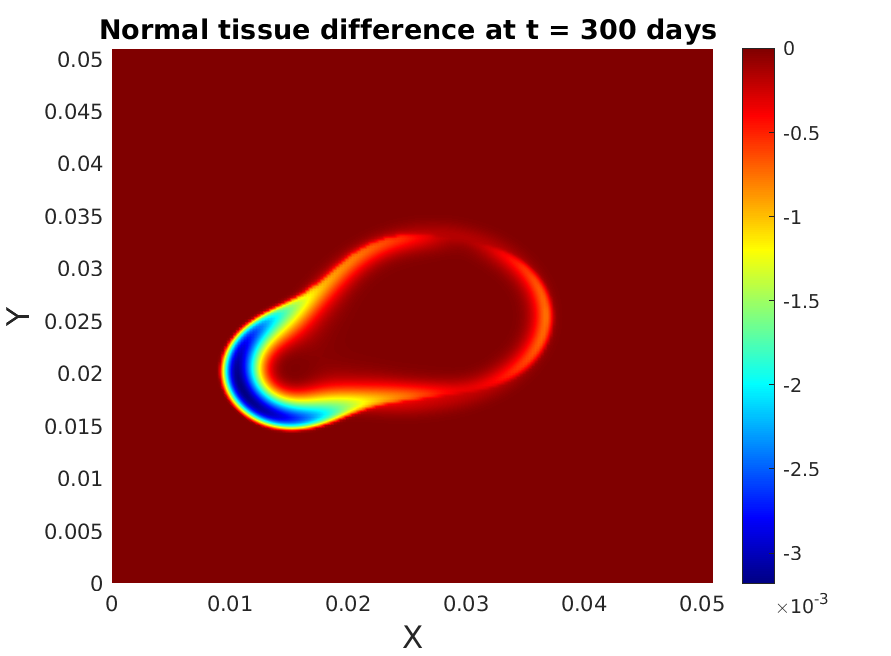}}\\
		{\includegraphics[width=1\linewidth, height = 3cm]{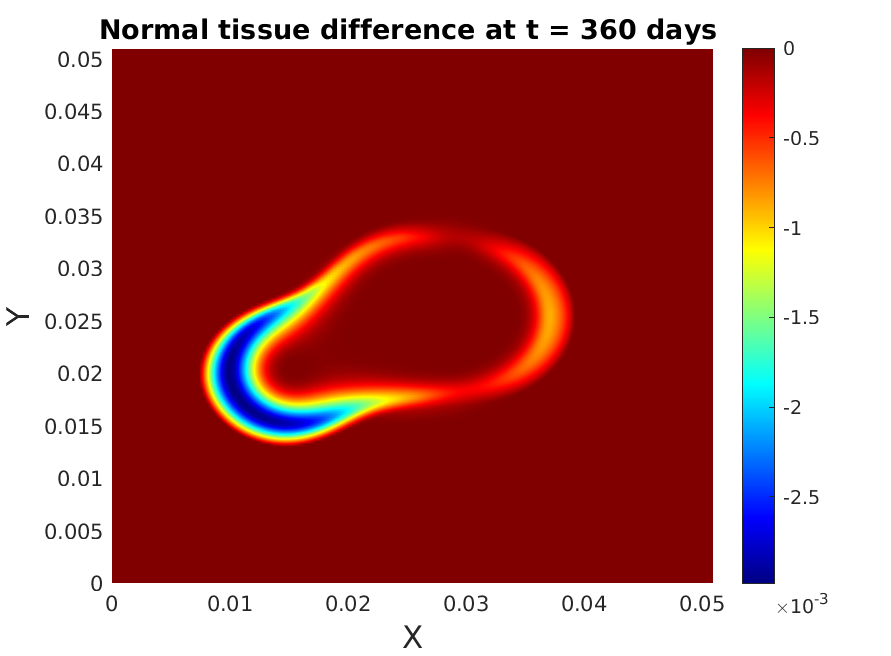}}\\
		{\includegraphics[width=1\linewidth, height = 3cm]{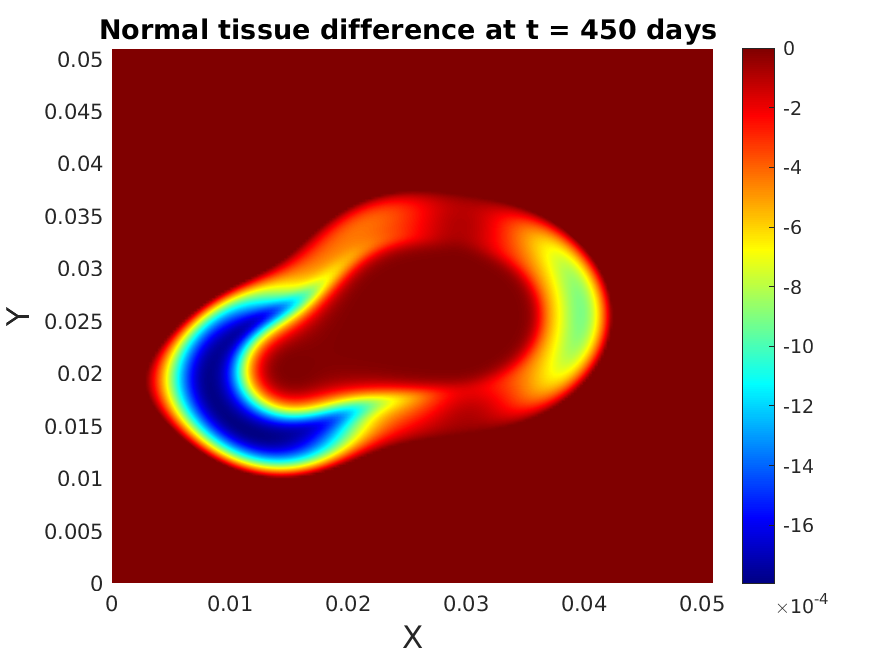}}%
		\subcaption{\scriptsize Normal tissue difference}
	\end{minipage}%
	\hspace{0.01cm}
	\begin{minipage}[hstb]{.24\linewidth}
		{\includegraphics[width=1\linewidth, height = 3cm]{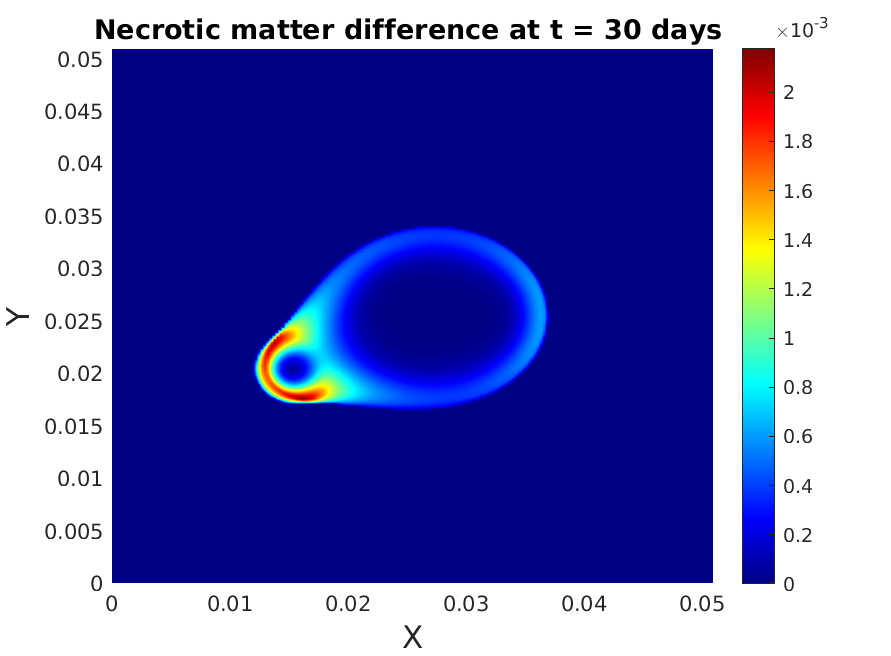}}\\
		{\includegraphics[width=1\linewidth, height = 3cm]{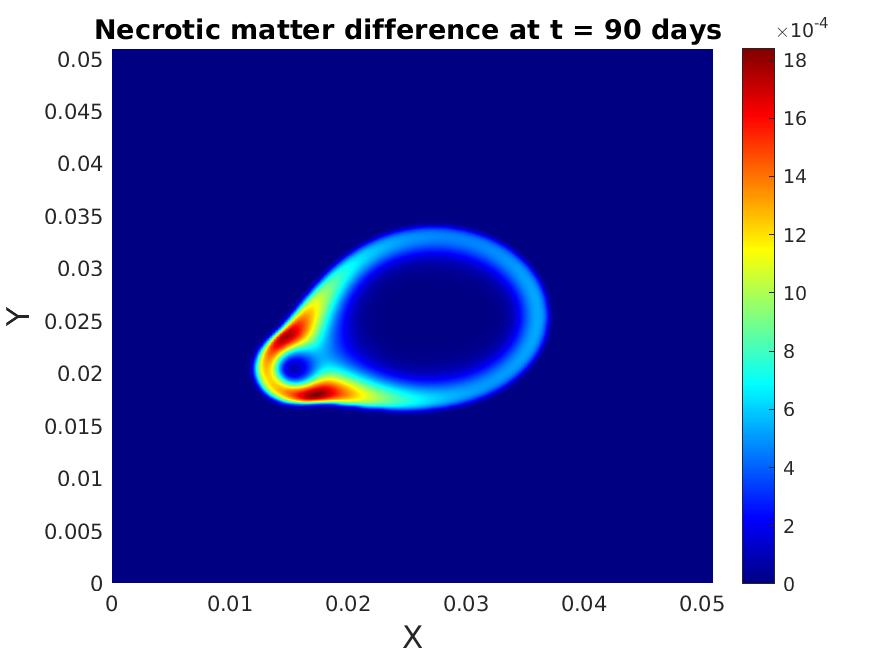}}\\
		{\includegraphics[width=1\linewidth, height = 3cm]{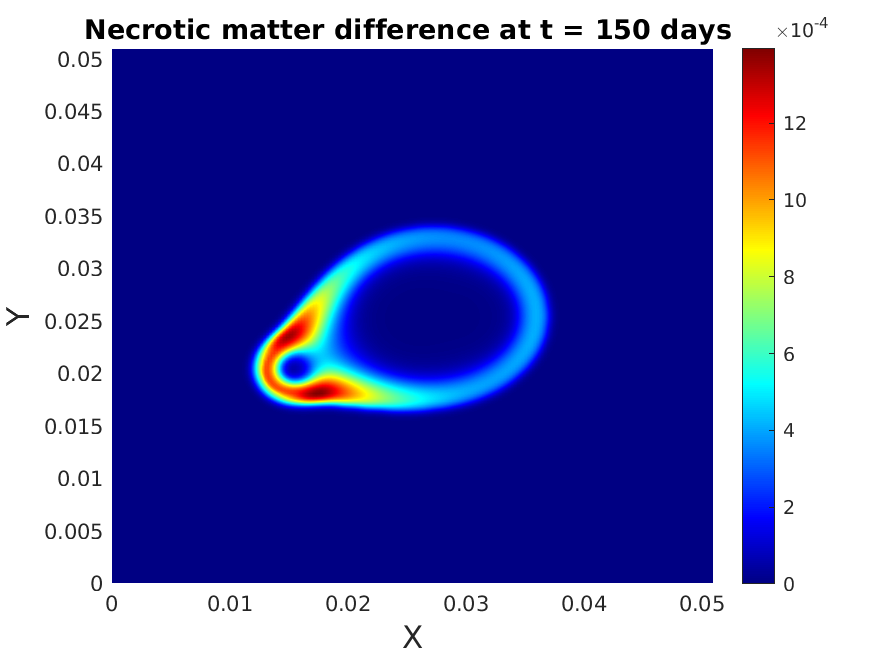}}\\
		{\includegraphics[width=1\linewidth, height = 3cm]{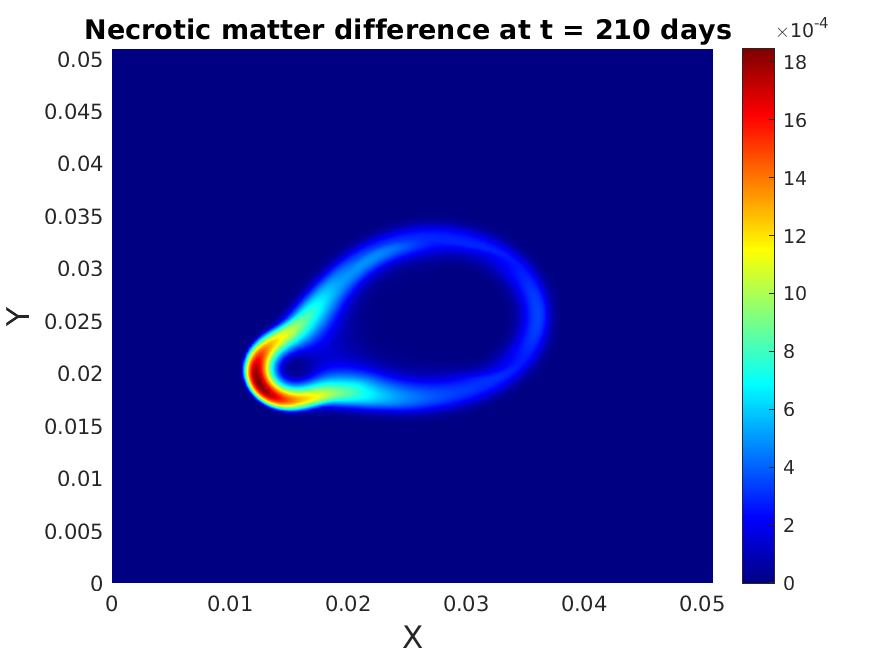}}\\
		{\includegraphics[width=1\linewidth, height = 3cm]{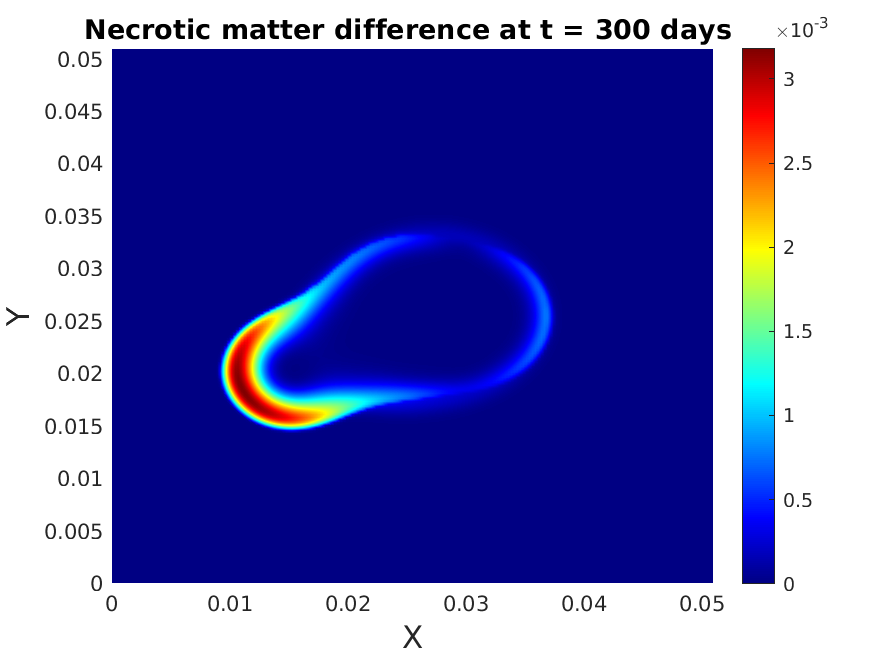}}\\
		{\includegraphics[width=1\linewidth, height = 3cm]{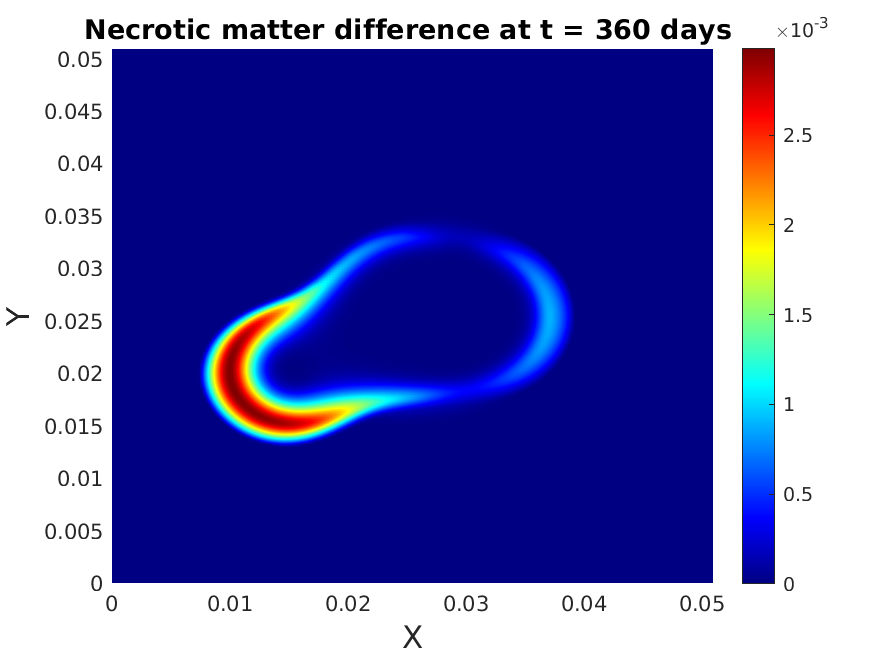}}\\
		{\includegraphics[width=1\linewidth, height = 3cm]{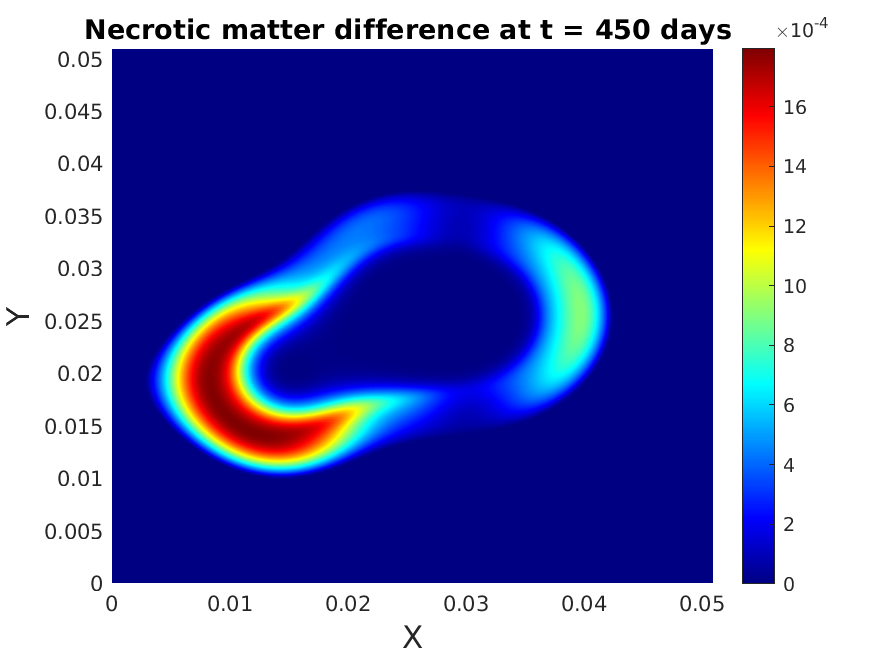}}%
		\subcaption{\scriptsize Necrotic matter difference}
	\end{minipage}%
	\caption{Difference between solution components obtained with $K_{cm}<K_{mn}<K_{cn}$ and those computed with $K_{cm}=K_{mn}=K_{cn}$, all at the previously intermediate value $K_{mn}$.}
	\label{fig:comp-Kmn}
\end{figure}

\section*{Appendix}

\subsection*{Non-dimensionalisation}
The following rescalings of the dimensional quantities involved in the obtain equations are considered:
\begin{align*}
&\tilde{h} := \frac{h}{h_{max}},\qquad \tilde{t} := t a_3,  
\qquad \tilde{x} := x\sqrt{\frac{a_3 }{D_H}}\qquad \tilde{c}_1 := \frac{c_1}{a_3}, \qquad \tilde{c}_2 := \frac{c_2}{a_3},\\
& \tilde{c}_3 := \frac{c_3}{a_3}, \qquad \tilde{K}_{mn} := \frac{K_{mn} D_H}{\alpha_c}, \qquad \tilde{K}_{cm} := \frac{K_{cm} D_H}{\alpha_c}, \qquad \tilde{K}_{cn} := \frac{K_{cn} D_H}{\alpha_c},\\
&\tilde{\chi} := \frac{\chi h_{max}}{\alpha_c}, \qquad \tilde{\alpha}_{m}:= \frac{\alpha_{m}}{\alpha_c},  \qquad \tilde{a}_2:= \frac{a_2}{a_3 h_{max}}, \qquad \tilde{g}(\tilde{h}): = \frac{\tilde{\chi} \tilde{h}}{1+ \tilde{h}},\\
&\tilde{D}_{cc}(u_c,u_m,u_n) :=(\tilde{K}_{cm}u_c+\tilde{K}_{mn}(1-u_c))(1-u_c)-(\tilde{K}_{cm} - \tilde{K}_{cn})u_c u_m,\\
&\tilde{D}_{cm}(u_c,u_m,u_n):=-(\tilde{K}_{cm}u_c+\tilde{K}_{mn}(1-u_c))u_c+(\tilde{K}_{cm}-\tilde{K}_{cn})u_c(1-u_m),\\
&\tilde{D}_{mc}(u_c,u_m,u_n):=-(\tilde{K}_{cm}u_m+\tilde{K}_{cn}(1 - u_m))u_m+(\tilde{K}_{cm}-\tilde{K}_{mn})u_m(1 - u_c),\\
&\tilde{D}_{mm}(u_c,u_m,u_n):=(\tilde{K}_{cm}u_m + \tilde{K}_{cn}(1 - u_m))(1-u_m)-(\tilde{K}_{cm}-\tilde{K}_{mn})u_c u_m,\\
& \tilde{S}(u_c,u_m,u_n) := \tilde{K}_{cm}\tilde{K}_{cn}u_c + \tilde{K}_{cm}\tilde{K}_{mn}u_m + \tilde{K}_{cn}\tilde{K}_{mn}(1-u_c-u_m).
\end{align*}
Dropping the tildes for simplicity, system \eqref{model2}, \eqref{EqProt} writes in the nondimensional form as
\begin{subequations}\label{eq:2D_nondim}
	\begin{align}
	\partial_t u_c  &= \nabla \cdot \left( \frac{1}{S(u_c,u_m,u_n)} \left[  D_{cc} \left( 2u_c + g(h) \right) + D_{cm} \alpha_m u_m^2 \theta  \right] \nabla u_c   \right) \nonumber\\
	&+ \nabla \cdot \left( \frac{1}{S(u_c,u_m,u_n)} \left[ D_{cc} u_c \nabla\left(g(h)\right) + 2D_{cm} \alpha_m u_m \left( 1 + \theta u_c \right) \nabla u_m  \right]  \right) -c_1 u_c u_n \left(h-1\right), \label{eq:mpm_final_c}\\
	\partial_t u_m &= \nabla \cdot \left( \frac{1}{S(u_c,u_m,u_n)} \left[2D_{mm}\alpha_m u_m \left(1 + \theta u_c\right)   \right] \nabla u_m ,   \right) \nonumber\\
	&+ \nabla \cdot \left( \frac{1}{S(u_c,u_m,u_n)} \left[ D_{mc} \nabla \left( u_c^2 + g(h)u_c  \right) + D_{mm} \alpha_m \theta u_m^2 \nabla u_c   \right]    \right) - \frac{c_2 u_m(h-1)_{+}}{1 + u_m + h} - c_3 u_m, \label{eq:mpm_final_m} \\
	\partial_t h &= \Delta h + a_2 u_c - h, \label{eq:mpm_final_h}\\
	u_n &= 1 - u_c - u_m. \label{eq:mpm_final_n}
	\end{align}
\end{subequations}

\subsection*{Parameters}

The following table summarises the parameters used in the simulations of Section \ref{SecNum}.

\begin{table}[!ht]	
	\centering
	\begin{tabular}{p{1.3cm} p{6.0cm} p{4.2cm} p{2.8cm}}
		\toprule[1.5pt]
		\textbf{Parameter} & \begin{minipage}{6cm} \centering \textbf{Meaning} \end{minipage} & \textbf{Value} & \textbf{Reference}\\
		\hline \\[-1.5ex]
		$h_{max}$ & acidity threshold for cancer cell death & $10^{-6.4}$ mol/L & \cite{Webb2011}\\
		$a_2$ & proton production rate & $10^{-9}$ mol /(L$\cdot $s)  & \cite{kumar2020flux}, \cite{Martin2} \\
		$a_3$ & proton removal rate  & $10^{-11}$ /s& \cite{kumar2020multiscale}\\
		$D_h$ & acidity diffusion coefficient & \begin{minipage}{6cm}$5\cdot 10^{-14}- 10^{-11}$ m$^2$/s\end{minipage} & \cite{kumar2020flux}, \cite{kumar2020multiscale} \\
		$c_1$ & glioma growth/decay rate & $2.31\cdot 10^{-6}$ /s & \cite{stein2007mathematical, eikenberry2009virtual} \\
		$c_2$ & normal cells decay rate& $3.47\cdot 10^{-6}$ /s& this work\\
		$c_3$ & normal cells decay rate& $3.47\cdot 10^{-8}$ /s& this work \\
		$K_{cm}$ &drag force coefficient between glioma and normal cells & $1.62 \cdot 10^{10}$ N m$^{-4}$s&this work \\
		$K_{mn}$ & drag force coefficient between normal and necrotic cells& $1.66 \cdot 10^{10}$ N m$^{-4}$s&this work \\
		$K_{cn}$ & drag force coefficient between glioma and normal cells&$1.7 \cdot 10^{10}$ N m$^{-4}$s&this work\\
		$\chi$ &coefficient in the additional pressure on glioma due to acidity interaction& $2.98$ N m$^{-2}$ L/mol &this work \\
		$\theta$ &dimensionless coefficient in the additional pressure on normal cells due to glioma &1 & this work\\
		$\alpha_m$ &coefficient in the additional pressure on normal cells due to glioma & $1.49 \cdot 10^{-7}$ N m$^{-2}$ & this work\\
		$\alpha_c$ & coefficient in the additional pressure on glioma due to their crowding& $1.49 \cdot 10^{-7}$ N m$^{-2}$   & this work\\
		\bottomrule[1.5pt]
	\end{tabular}
	\caption{Parameters (dimensional quantities) involved in \eqref{model2}.}\label{table_3} 
\end{table}

\newpage
\phantomsection
\printbibliography

\end{document}